\documentclass[12pt]{article}

\usepackage{amsfonts,amscd,amssymb}
\usepackage{amsthm,amsmath} 
\usepackage{bbm} 
\usepackage{times,graphicx}
\usepackage{mathrsfs}
\usepackage{enumerate}
\usepackage{subfigure}
\usepackage{appendix}
\usepackage{pdfpages}
\usepackage{color}
\usepackage[margin=1in]{geometry}

\usepackage{tikz}
\usetikzlibrary{arrows,decorations.pathreplacing}

\definecolor{Red}{rgb}{1,0,0}
\definecolor{Blue}{rgb}{0,0,1}
\definecolor{Olive}{rgb}{0.41,0.55,0.13}
\definecolor{Green}{rgb}{0,1,0}
\definecolor{MGreen}{rgb}{0,0.8,0}
\definecolor{DGreen}{rgb}{0,0.55,0}
\definecolor{Yellow}{rgb}{1,1,0}
\definecolor{Cyan}{rgb}{0,1,1}
\definecolor{Magenta}{rgb}{1,0,1}
\definecolor{Orange}{rgb}{1,.5,0}
\definecolor{Violet}{rgb}{.5,0,.5}
\definecolor{Purple}{rgb}{.75,0,.25}
\definecolor{Brown}{rgb}{.75,.5,.25}
\definecolor{Grey}{rgb}{.5,.5,.5}
\definecolor{Black}{rgb}{0,0,0}

\RequirePackage[colorlinks,citecolor=blue,urlcolor=blue]{hyperref}
\bibliographystyle{plain}

\newtheorem{thm}{Theorem}
\newtheorem{lemma}{Lemma}
\newtheorem{prop}{Proposition}
\newtheorem{cor}{Corollary}

\newtheorem{example}{Example}

\newtheorem{definition}{Definition}
\newtheorem{assumption}{Assumption}

\theoremstyle{empty}

\newcommand{\one}{\mathbf{1}}

\newcommand{\Ef}{{\mathrm{I}\!\mathrm{E}}}
\newcommand{\E}[1]{\Ef\!\left[#1\right]}
\newcommand{\Pf}{{\mathrm{I}\!\mathrm{P}}}
\renewcommand{\P}[1]{\Pf\!\left[#1\right]}

\newcommand{\tree}{\mathbb{T}}
\newcommand{\leaves}{\mathscr{L}}
\newcommand{\internal}{\mathscr{I}}
\newcommand{\var}{\mathrm{Var}}
\newcommand{\evar}{\widehat{\mathrm{Var}}}
\newcommand{\cov}{\mathrm{Cov}}
\newcommand{\mle}{\hat{\mu}_{\mathrm{ML}}}
\newcommand{\bftheta}{{\boldsymbol\theta}}
\newcommand{\bfeta}{{\boldsymbol\eta}}
\newcommand{\gr}{\Lambda^\mathrm{g}}
\newcommand{\unigr}{\Lambda^\mathrm{ug}}
\newcommand{\ugr}{\overline{\Lambda}^\mathrm{g}}
\newcommand{\lgr}{\underline{\Lambda}^\mathrm{g}}
\newcommand{\br}{\Lambda^\mathrm{b}}
\newcommand{\treeseq}{\mathcal{T}}
\newcommand{\ind}{\mathbbm{1}}
\newcommand{\ical}{\mathcal{I}}
\newcommand{\jcal}{\mathcal{J}}
\newcommand{\neff}{n^{\mathrm{eff}}}
\newcommand{\contrast}{\mathcal{C}}
\newcommand{\contrasts}{\mathscr{C}}

\title{Phase transition on the convergence rate of 
parameter estimation under an Ornstein-Uhlenbeck diffusion on a tree}
\date{}
\author{
C\'ecile An\'e\footnote{Departments of Statistics and of Botany, 
University of Wisconsin-Madison.
Work supported by NSF grants DMS-1106483.}
\and
Lam Si Tung Ho\footnote{Departments of Statistics, University of Wisconsin-Madison.}
\and
Sebastien Roch\footnote{Departments of Mathematics and Statistics (by courtesy), 
University of Wisconsin-Madison.
Work supported by NSF grants DMS-1007144 and DMS-1149312 (CAREER), and an Alfred P. Sloan Research Fellowship.}
}

\begin{document}
\maketitle

\begin{abstract}
Diffusion processes on trees are commonly used in evolutionary biology to model the joint distribution of continuous traits, such as body mass, across species.
Estimating the parameters of such processes 
from tip values
presents challenges because of the intrinsic correlation
between the observations produced by the shared evolutionary 
history, thus violating the standard independence assumption 
of large-sample theory.
For instance Ho and An\'e \cite{HoAne13} recently proved 
that the mean (also known in this context as selection optimum) 
of an Ornstein-Uhlenbeck process on a tree 
cannot be estimated consistently from
an increasing number of tip observations 
if the tree height is bounded. 
Here, using a fruitful connection to the so-called
reconstruction problem in probability theory,
we study the convergence rate of parameter estimation
in the unbounded height case. 
For the mean of the process, 
we provide a necessary and 
sufficient condition for the consistency of the maximum 
likelihood estimator (MLE) 
and establish a phase transition on its convergence rate
in terms of the growth of the tree. In particular we show that
a loss of $\sqrt{n}$-consistency
(i.e., the variance of the MLE becomes $\Omega(n^{-1})$,
where $n$ is the number of tips) 
occurs when the tree growth is larger than a threshold 
related to the phase transition of the reconstruction problem.
For the covariance parameters, we give a novel, efficient 
estimation method which achieves 
$\sqrt{n}$-consistency under natural assumptions on the tree.
Our theoretical results provide practical suggestions for the 
design of comparative data collection.

\paragraph{Keywords}
Ornstein-Uhlenbeck, phase transition, evolution, phylogenetic, consistency, maximum likelihood estimator.
\end{abstract}


%
%

\section{Introduction}

Analysis of data collected from multiple species presents
challenges because of the intrinsic correlation produced by the shared
evolutionary history.
This dependency structure can be modeled by assuming that the
traits of interest evolved along a phylogeny according to a stochastic process.
Two commonly used processes for continuous traits,
such as body mass, are Brownian motion (BM) and 
the Ornstein-Uhlenbeck (OU) process.
BM is used to model neutral evolution, with no favored direction
(see e.g. \cite{felsenstein1985phylogenies}).
On the other hand, the OU process can account for natural selection using two 
extra parameters: a ``selection optimum'' $\mu$ towards which the process is attracted
and a ``selection strength'' $\alpha$ \cite{hansen1997stabilizing}. 
The OU process has a stationary distribution, which is Gaussian with mean 
$\mu$ and variance $\gamma = {\sigma^2/2 \alpha}$. 
The presence of natural selection can be detected by testing 
whether $\alpha>0$ (e.g. \cite{harmon-etal10}).
Changes in $\mu$ across different groups of organisms are
used to correlate changes in selection regime with
changes in behavior or environmental conditions
(see e.g. \cite{butler2004phylogenetic,bartoszek2012multivariate}). 
For instance, the optimal body size $\mu$ might be different 
for terrestrial animals than for birds and bats.
In practice, $\mu$, $\alpha$ and the infinitesimal variance $\sigma^2$ 
(or stationary variance $\gamma$) are estimated from data on extant species. In other
words, only data at the tips of the tree are available. The process at internal nodes
and edges is unobserved. Also, the tree is reconstructed independently from 
external and abundant data, typically from DNA sequences. 
In practice there can be some uncertainty about a few nodes in the tree, 
but we assume here that the tree is known without error. 
 
The OU process on a tree has been used extensively in practice
(see e.g. \cite{butler2004phylogenetic,cooperPurvis10,brawand_etal11,rohlfs2014modeling}), 
but very few authors have studied 
convergence rates of available estimators.
Recently Ho and An\'e \cite{HoAne13} 
showed that if the tree height is bounded as the sample size goes to infinity, 
no estimator for $\mu$ can ever be consistent. This is because $\mu$ is not
``microergodic": the distribution $P_\mu$ of the whole observable process
$(Y_i)_{i\geq 1}$ at the tips of the tree is such that $P_{\mu_1}$ and $P_{\mu_2}$ 
are not orthogonal for any values $\mu_1\neq\mu_2$, if the tree height is bounded.
This boundedness assumption does not hold for 
common models of evolutionary trees however, 
such as the pure-birth (Yule) process \cite{yule1925mathematical}. 
We consider here the case of an unbounded tree height.
We study the consistency and convergence rates of several 
estimators, including some novel estimators, using
tools from the literature on the reconstruction problem
in probability theory. In particular we relate the convergence
rates of these estimators to the growth rate of the phylogeny.
This connection is natural given that the growth rate 
(and the related branching number) is known to play an 
important role in the analysis of a variety of stochastic
processes including random walks, 
percolation and ancestral state reconstruction on trees~\cite{peres1999climb}. 
In particular we leverage a useful
characterization of the variance of linear estimators
in terms of electrical networks.
 
\paragraph{Main results} 
We present the asymptotic properties of two common estimators for $\mu$: 
the sample mean and the maximum likelihood estimator (MLE). Conditional on the tree, 
the MLE $\mle$ is known to be the best linear unbiased 
estimator for $\mu$ assuming that $\alpha$ is known. 
(The assumption of known $\alpha$ is proved not to be restrictive for our convergence rate results if $\alpha$ can be well estimated.)
In fact, we give an example when $\mle$ performs significantly better than the sample mean, which is not consistent in that
particular case.
In one of our main results, we identify a necessary 
and sufficient condition for the consistency of $\mle$. 
We also derive a phase transition on its convergence rate, 
which drops from $\sqrt{n}$-consistency (i.e. the variance is
$O(n^{-1})$)
to a lower rate, $n$ being the number of samples (i.e. tip observations). 
This phase transition depends on the growth rate of the tree.
Tree growth measures the rate at which new leaves arise 
as the tree height increases
(see Section \ref{sec:def} for a formal definition).
Roughly, 
when the growth rate is below $2\alpha$, 
we show that $\sqrt{n}$-consistency holds.
This is intuitive as a lower growth rate means
lower correlations between the leaf states.
On the other hand, when the growth rate is above $2\alpha$
implying a sample size $n \gg e^{2\alpha T}$,
i.e. when the tree is sufficiently ``bushy,''
then the ``effective sample size" is reduced to $\neff = e^{2\alpha T}$ and 
the $\sqrt n$-consistency of $\mle$ is lost.
We also provide novel, efficient estimators for the other two parameters, 
$\alpha$ and $\gamma$, which achieve $\sqrt{n}$-consistency and 
do not require the knowledge of $\mu$. Interestingly, the 
$\sqrt{n}$-consistency in this case is not affected 
by growth rate, unlike the case of the MLE for $\mu$.
Our results lead to a practical method to assess whether additional species are informative or not, thus helping researchers to avoid wasting money and effort. 
Section~\ref{sec:app} presents simulations to illustrate these suggestions.
Our main results are stated formally and further discussed in Section \ref{sec:def},
after necessary definitions. Their proofs are found in Section \ref{sec:proofs}.

\paragraph{Related work}
Bartoszek and Sagitov \cite{bartoszek2012phylogenetic} obtained a corresponding phase transition for 
the convergence rate of the sample mean to estimate $\mu$, assuming a Yule process for the tree.
Phase transitions for the convergence rate of some U-statistics have also been obtained for the 
OU model when the tree follows a supercritical branching process \cite{adamczak2011clt,adamczak2011u}. 
A main difference between these studies and our work is that 
we assume that the tree is known.
Even though tree-free estimators are the only practical options when the tree is unknown,
this situation is now becoming rare due to the ever-growing availability of sequence data for building trees. For instance 
Crawford and Suchard \cite{crawford2013diversity} acknowledge that ``as evolutionary biologists 
further refine our knowledge of the tree of life, the number of clades whose phylogeny is truly 
unknown may diminish, along with interest in tree-free estimation methods.'' 

As we mentioned, related phase transitions have been
obtained for other processes on trees.
For instance, the growth rate of the tree determines whether the state 
at the root can be reconstructed better than random for a binary symmetric channel on a 
binary tree (see e.g. \cite{EvKePeSc:00} and references therein).
In a recent result, Mossel and Steel~\cite{mossel2014majority}
established a transition for ancestral state reconstruction
by majority rule 
for the binary symmetric model on a Yule tree 
at the same critical point as above. 
Note that majority rule is a tree-free estimator
like the sample mean in \cite{bartoszek2012phylogenetic}, 
but adapted to discrete traits.
In the context of the OU model,
Mossel et al. \cite{mossel2013robust} obtained a phase 
transition for estimating the ancestral state at the root,
with the same critical growth rate we derive in our results.

\section{Definitions and statements of results}
\label{sec:def}

In this section, we state formally and further explain
our main results. First, we define our model and
describe the setting in which our results are proved.

\subsection{Model}

Our main model is a stochastic process on a species tree
$\tree$. 
Let $\tree = (\mathscr{E},\mathscr{V})$ be a finite
tree with leaf set $\leaves = \{1,\ldots,n\}$ and root $\rho$.
The leaves typically correspond to extant species.
We think of the edges of $\tree$
as being oriented away from the root.
To each edge (or branch) $b \in \mathscr{E}$ of the tree is associated
a positive length $|b| > 0$ corresponding to the time
elapsed between the endpoints of $b$. 
For any two vertices $u,v \in \mathscr{V}$, we denote by 
$d_{uv}$ the distance between $u$ and $v$ in
$\tree$, that is, the sum of the branch lengths
on the unique path between $u$ and $v$.
We assume that the species tree is {\em ultrametric}, that is, 
that the distance from the root to every leaf is the same. 
It implies that, for any
two tips $i,j \in \leaves$, $d_{ij}$ is twice the time to the most
recent common ancestor of $i$ and $j$ from the leaves.
We let
$T$ be the height of $\tree$, that is, the distance
between the root and any leaf, and we define
$t_{ij} = T - \frac{d_{ij}}{2}$.
{\em Throughout we assume
that the species tree is known.}

We consider an Ornstein-Uhlenbeck (OU) process
on $\tree$. That is, on each branch of $\tree$, we 
have a diffusion
$$
dY_t = -\alpha(Y_t - \mu)dt + \sigma dB_t,
$$
where $B_t$ is a standard Brownian motion (BM).
In the literature on continuous traits,
$Y_t$ is known as the response variable,
$\mu$ is the selection optimum,
$\alpha > 0$ is the selection strength,
$\sigma > 0$ is the scale parameter of the Brownian motion.
We assume that the root value follows
the stationary Gaussian distribution ${\cal N}(\mu,\gamma)$,
where $\gamma = \frac{\sigma^2}{2 \alpha}$.
At each branching point, we run the process
independently on each descendant edge
starting from the value at the
branching. 
Equivalently, the column vector of observations  $\mathbf{Y} = (Y_\ell)_{\ell \in \leaves}$ at the tips
of the tree are Gaussian with mean $\mu$ and variance matrix 
$\mathbf{\Sigma} = \gamma \mathbf{V_\tree}$ where
\begin{equation*}
(V_\tree)_{ij}=e^{-\alpha d_{ij}}.
\end{equation*} 
{\em We assume throughout that $\alpha$,
$\mu$ and $\sigma$ are the same on every branch
of $\tree$. We will specify below whether these
parameters are known, depending on the context.}

\paragraph{Parameter estimators}
Our interest lies in estimating the parameters
of the model, given $\tree$, from a sample
of $\mathbf{Y}$. In addition to proposing new
estimators for $\alpha$ and $\sigma$, we 
study common estimators of $\mu$. 
In particular we consider the empirical average at the tips
$
\overline{Y} = \one'\mathbf{Y} /n,
$ 
where $\one$ denotes the all-ones vector 
and $\mathbf{v}'$ denotes the transposes of a vector or matrix $\mathbf{v}$.
Also, the MLE of $\mu$ {\em given the tree and $\alpha$}
is
\[
\mle = (\one' \mathbf{V}_\tree^{-1} \one)^{-1}  \one' \mathbf{V}_\tree^{-1} \mathbf{Y},
\]
which is the well-known generalized least squares 
estimator for the linear regression problem
$\mathbf{Y} =  \mu \one + \boldsymbol{\varepsilon},$
where $\boldsymbol{\varepsilon}$ is multivariate normal with 
covariance matrix $\mathbf{\Sigma}$ (see e.g.~\cite{Anderson:1984}).
Note that the mean squared error is given by
\begin{equation}
\var_\tree[\mle] 
= (\one' \mathbf{V}_\tree^{-1} \one)^{-2} \one' \mathbf{V}_\tree^{-1}\mathbf{\Sigma} (\mathbf{V}_\tree^{-1})' \one
= \gamma (\one' \mathbf{V}_\tree^{-1} \one)^{-1}.
\label{eq:varmuhat}
\end{equation}
We drop the $\tree$ in $\var_\tree$ when the tree
is clear from the context.

The estimators $\overline{Y}$ and $\mle$ are both linear
estimators. 
It is
useful to think of the MLE in this context 
as an unbiased linear estimator
minimizing the mean squared error (that is, a best linear unbiased estimator), which follows
from the Gauss-Markov Theorem \cite{shao2003mathstat}. 

\subsection{Asymptotic setting}

Our results are asymptotic. 
Specifically, we consider 
sequences of trees $\treeseq = (\tree_k)_{k \geq 1}$
with {\em fixed} parameters $\alpha, \mu, \sigma$. 
For $k \geq 1$, let $n_k$ be the number of 
leaves in $\tree_k$ and $T_k$ be the height
of $\tree_k$. As before, we denote the leaf set
of $\tree_k$ as $\leaves_k = [n_k]$. 
\begin{assumption}[Unboundedness]
\label{assumption:unbounded}
Throughout we assume
that $n_k \leq n_{k+1}$, $T_k \leq T_{k+1}$, 
and that $n_k \to +\infty$ and $T_k \to +\infty$
as $k \to +\infty$.
\end{assumption}

\noindent
For such a sequence of trees and a corresponding sequence
of estimators, say $X_k$, we recall various desirable asymptotic 
properties of $X_k$.
\begin{definition}[Consistency]
Let $(X_k)_k$ be a sequence of estimators for a parameter $x$. 
We say that $(X_k)_k$ is {\em consistent for $x$} if 
$X_k$ converges in probability to $x$, denoted as
$|X_k - x| = o_p(1)$. For $\beta > 0$,
we say that $(X_k)_k$ is {\em $(n_k^{\beta})$-consistent for $x$} if 
$(n_k^{\beta}(X_k-x))_k$ is bounded in probability, which we denote as
$|X_k - x| = O_p(n_k^{-\beta})$.  
\end{definition}
\noindent We also recall the following notation. Let $(x_k)_k$ and $(y_k)_k$ be two sequences of real numbers. We let $y_k = O(x_k)$ if there exists $C_1 > 0$ such that $|y_k| \leq C_1 |x_k|$; $y_k = \Omega(x_k)$ if there exists $C_2 > 0$ such that $|y_k| \geq C_2 |x_k|$;
and $y_k = \Theta(x_k)$ if $y_k = O(x_k)$ and $y_k = \Omega(x_k)$.



\paragraph{Growth}
Our asymptotic results depend on how fast the tree grows. We first provide some intuition through
a toy example.
\begin{example}[Star tree: A first phase transition]
	\label{ex:firstphase}
	Let $\tree_k$ be a star tree with $n_k$ leaf edges of length
	$T_k$ emanating from the root. By symmetry, $\one$ is
	an eigenvector of $\mathbf{\Sigma}$ with eigenvalue
	$\lambda_k = \gamma[1+(n_k-1)e^{-2\alpha T_k}]$. Hence, $\one$ is also an
	eigenvector of $\mathbf{\Sigma}^{-1}$ with eigenvalue
	$\lambda_k^{-1}$ and
	$ \one'\mathbf{\Sigma}^{-1}\one = n_k\lambda_{n_k}^{-1},$
	so that $\mle^{(k)} = \overline{Y}$ and
	\begin{equation}\label{eq:varmlestar}
	\var[\mle^{(k)}]  = \frac{\lambda_k}{n_k}  
	=  \gamma\left[e^{-2\alpha T_k} + \frac{1 - e^{-2\alpha T_k}}{n_k}\right].
	\end{equation}
	If both $n_k$ and $T_k \to +\infty$, then
	$\var[\mle^{(k)}] \to 0$ and 
	the MLE (and $\overline{Y}$)
	is consistent for $\mu$. Furthermore, if 
	$$\liminf_k \frac{2\alpha T_k}{\log n_k} > 1,$$
	then 
	$$ n_k \var[\mle^{(k)}] \leq \gamma[n_k e^{-2\alpha T_k} + 1] = O(1)$$
	and the MLE is $\sqrt{n_k}$-consistent
	(by an application of Chebyshev's inequality). 
	On the other hand, if
	$$ \liminf_k \frac{2\alpha T_k}{\log n_k} < 1, $$
	then 
	$$ n_k \var[\mle^{(k)}] \geq \gamma[n_k e^{-2\alpha T_k}],$$
	which goes to $+\infty$ along a subsequence, 
	and the MLE is {\em not} $\sqrt{n_k}$-consistent
	(using that $\mle$ is unbiased and normally distributed). 
\end{example}
To study more general trees,
we use several standard notions of growth,
which play an important
role in random walks, percolation and ancestral
state reconstruction on trees (see e.g.~\cite{peres1999climb}).
\begin{definition}[Growth]
\label{def:growth}
The {\em lower growth} and 
{\em upper growth} of a tree sequence $\treeseq$
are defined respectively as
$$
\lgr = \liminf_k \frac{\log n_k}{T_k},
\quad\mbox{and}\quad
\ugr = \limsup_k \frac{\log n_k}{T_k}.
$$
In case of equality we define the growth
$\gr = \lgr = \ugr$.
(Note that our definition differs
slightly from~\cite{peres1999climb} in that
we consider the ``exponential rate'' of growth.)
\end{definition}
\noindent That is, for all $\epsilon > 0$, eventually
$
e^{(\lgr - \epsilon)T_k} \leq
n_k \leq e^{(\ugr + \epsilon)T_k},
$
and along appropriately chosen subsequences
$
n_{k_j} \geq e^{(\ugr - \epsilon)T_{k_j}}
$
and
$
n_{k'_{j}} \leq e^{(\lgr + \epsilon)T_{k'_j}}.
$

We also need a stronger notion of growth.
For a tree $\tree$, thinking of the branches 
of $\tree$ as a continuum of {\em points},
a cutset $\pi$ is a set of points of $\tree$ such that all
paths from the root to a leaf must cross $\pi$.
Let $\Pi^k$ be the set of cutsets of $\tree_k$.
\begin{definition}[Branching number]
The {\em branching number} of $\treeseq$ 
is defined as
\[
\br = \sup \left \{ \Lambda \geq 0\ :\ 
\inf_{k, \pi \in \Pi^k} 
\sum_{x \in \pi}{ e^{- \Lambda \delta_k(\rho,x)}  > 0 } \right \},
\]
where $\delta_k(\rho,x)$ is the length of the path from the root 
to $x$ in $\tree_k$.
\end{definition}
\noindent Because the leaf set $\leaves_k$ forms a cutset, it holds that
$$
\br \leq \lgr \leq \ugr.
$$
Unlike the growth, the branching number takes into
account aspects of the ``shape'' of the tree.
\begin{example}[Star tree sequence, continued]
\label{ex:first-transition-continued}
Consider again the setup of Example~\ref{ex:firstphase}.
The infimum
$$
\inf_{\pi \in \Pi^k} 
\sum_{x \in \pi} e^{- \Lambda \delta_k(\rho,x)},
$$
is achieved by taking $\pi = \leaves_k$ for every $k$. Hence
$\br = \lgr$. We showed in Example~\ref{ex:firstphase}
that the MLE of $\mu$ given $\alpha$ is
$\sqrt{n_k}$-consistent if
$\ugr < 2\alpha $, but not $\sqrt{n_k}$-consistent if
$\ugr > 2\alpha $.  
\end{example}

Finally, we will need a notion of uniform growth.
\begin{definition}[Uniform growth]
Let $\treeseq = (\tree_k)_k$ be a tree sequence.
For any point $x$ in $\tree_k$, let $n_k(x)$
be the number of leaves below $x$ and
let $T_k(x)$ be the distance from $x$ to the leaves.
Then the {\em uniform growth} of $\treeseq$
is defined as
$$
\unigr 
= \lim_{M\to+\infty} 
\sup_{k, x \in \tree_k}\ \frac{\log n_k(x)}{T_k(x)\lor M}.
$$
(The purpose of the $M$ in the denominator
is to alleviate boundary effects.)
\end{definition}

\subsection{Statement of results}
\label{sec:statementresults}
We can now state our main results.

\paragraph{Results concerning the mean $\mu$}
We first give a characterization of the consistency of 
the MLE of $\mu$. In words, the MLE sequence
is consistent if, in the limit, we can find arbitrarily many
descendants, arbitrarily far away from the
leaves. 
This theorem is proved in Section~\ref{sec:consistency-mle},
along with a related result involving the
branching number.  
\begin{thm}[Consistency of $\mle$]
\label{thm:criterion-consistency}
Let $(\tree_k)_k$ be a sequence of trees satisfying 
Assumption~\ref{assumption:unbounded}. 
Let $(\mle^{(k)})_k$ be the corresponding sequence
of MLEs of $\mu$ given $\alpha$. Denote by $\tilde{\pi}_t^{k}$ the 
cutset of $\tree_k$ {\em at time $t$ away from the leaves} 
and let $T_k$ be the height
of $\tree_k$. Then $(\mle^{(k)})_k$ is consistent
for $\mu$ if and only if for all $s \in (0,+\infty)$
\begin{equation}\label{eq:consistency-condition}
\liminf_k \left|\tilde{\pi}^k_{s}\right| = +\infty.
\end{equation}
\end{thm}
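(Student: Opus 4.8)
The plan is to work entirely with the variance formula \eqref{eq:varmuhat}, $\var[\mle^{(k)}]=\gamma(\one'\mathbf V_{\tree_k}^{-1}\one)^{-1}$. Since $\mle^{(k)}$ is unbiased and Gaussian, consistency is equivalent to $\var[\mle^{(k)}]\to0$, i.e.\ to $\one'\mathbf V_{\tree_k}^{-1}\one\to+\infty$. The main tool is the best‑linear‑unbiased‑estimator (BLUE) variational identity, which follows from the Gauss--Markov theorem already invoked in the text,
\[
\var[\mle^{(k)}]\;=\;\min_{\mathbf w:\ \one'\mathbf w=1}\ \var\Big[\textstyle\sum_{\ell\in\leaves_k} w_\ell Y_\ell\Big]\;=\;\min_{\mathbf w:\ \one'\mathbf w=1}\mathbf w'\mathbf{\Sigma}\,\mathbf w .
\]
This turns the theorem into two‑sided control of a minimum‑energy quantity, which is exactly where the cutset/electrical‑network viewpoint advertised in the introduction enters.

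\textbf{Sufficiency.} Suppose $\liminf_k|\tilde\pi^k_s|=+\infty$ for every $s>0$. Fix $\epsilon>0$ and choose $s$ so large that $\gamma e^{-2\alpha s}<\epsilon/2$. For each large $k$, set $m_k=|\tilde\pi^k_s|$ and pick one leaf below each of the $m_k$ points of $\tilde\pi^k_s$, say $\ell_1,\dots,\ell_{m_k}$. Two such leaves lie below distinct cutset points at level $s$, so their most recent common ancestor is strictly above that level and $d_{\ell_i\ell_j}\ge 2s$, whence $(V_{\tree_k})_{\ell_i\ell_j}=e^{-\alpha d_{\ell_i\ell_j}}\le e^{-2\alpha s}$. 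The simple average $\bar Y=m_k^{-1}\sum_i Y_{\ell_i}$ is an unbiased linear estimator, so by the BLUE identity
\[
\var[\mle^{(k)}]\ \le\ \var[\bar Y]\ \le\ \frac{\gamma}{m_k}+\gamma e^{-2\alpha s}.
\]
Since $m_k\to+\infty$, the right‑hand side is $<\epsilon$ for all large $k$; hence $\var[\mle^{(k)}]\to0$ and the MLE is consistent.

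\textbf{Necessity.} I would prove the contrapositive. Suppose the condition fails: there is $s_0>0$ and a subsequence along which $|\tilde\pi^k_{s_0}|\le C$ for a constant $C$. Fix such a $k$ with $T_k>s_0$ and let $x_1,\dots,x_m$ ($m\le C$) be the points of the level‑$s_0$ cutset; every leaf descends from exactly one $x_g$, at tree‑distance exactly $s_0$ by ultrametricity. Writing $\mathbf Y_\pi=(Y_{x_g})_g$, the Markov property of the OU process along the tree gives, for any weights with $\one'\mathbf w=1$,
\[
\E{\textstyle\sum_\ell w_\ell Y_\ell\,\big|\,\mathbf Y_\pi}=\mu+e^{-\alpha s_0}\sum_{g=1}^m W_g\,(Y_{x_g}-\mu),\qquad W_g=\!\!\sum_{\ell\ \text{below}\ x_g}\!\! w_\ell,
\]
with $\sum_g W_g=1$. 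The law of total variance then yields, for every $\mathbf w$,
\[
\var\Big[\textstyle\sum_\ell w_\ell Y_\ell\Big]\ \ge\ e^{-2\alpha s_0}\,\var\Big[\sum_{g=1}^m W_g Y_{x_g}\Big]\ \ge\ e^{-2\alpha s_0}\,\gamma\,(\one'\mathbf V_\pi^{-1}\one)^{-1},
\]
where $\mathbf V_\pi$ is the $m\times m$ correlation matrix of the cutset values. Since this lower bound is uniform in $\mathbf w$, the key lemma below gives $\var[\mle^{(k)}]\ge \gamma e^{-2\alpha s_0}/C$ along the subsequence, so $\var[\mle^{(k)}]\not\to0$ and the MLE is not consistent.

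\textbf{Main obstacle.} The crux is the matrix inequality $\one'\mathbf V_\pi^{-1}\one\le m$ for the cutset correlation matrix, uniformly over the subsequence and \emph{irrespective of how correlated the $m\le C$ ancestral lineages are}. This is precisely where a naive spectral bound $\one'\mathbf V_\pi^{-1}\one\le m/\lambda_{\min}(\mathbf V_\pi)$ fails, since nearly coincident lineages drive $\lambda_{\min}(\mathbf V_\pi)\to0$. I would instead establish the equivalent PSD statement $\mathbf V_\pi\succeq\frac1m\one\one'$ by induction on the tree above the cutset: decomposing the ancestral values into a shared root contribution of correlation $r_0$ plus block‑independent fluctuations gives $\mathbf V_\pi=r_0\,\one\one'+\tilde{\mathbf D}$ with $\tilde{\mathbf D}$ block‑diagonal, and applying the induction hypothesis to each block reduces the claim to positivity of a rank‑one perturbation of a positive diagonal quadratic form, which holds because $(m-1)r_0\ge0$. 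This ultrametric inequality is the quantitative heart of the necessity direction; it is the Nash--Williams cutset bound for the conductance $\gamma^{-1}\one'\mathbf V_\pi^{-1}\one$ alluded to in the introduction, a bounded cutset forcing bounded conductance and hence non‑vanishing variance.
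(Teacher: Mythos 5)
Your sufficiency argument is correct, and it is actually more elementary than the paper's: picking one leaf per point of $\tilde{\pi}^k_s$ and bounding the variance of their plain average by $\gamma/m_k+\gamma e^{-2\alpha s}$ is precisely the content of the paper's Proposition~\ref{prop:variance-upper}, which the paper instead derives via an edge-splitting lemma (Lemma~\ref{lemma:splitting}) and the flow formula. Your necessity skeleton (condition on the cutset values, apply the law of total variance, then bound the cutset correlation matrix) is also a legitimate alternative to the paper's route, which splits edges above the cutset to compare with a star. The gap is in the proof of your key matrix lemma $\mathbf{V}_\pi \succeq \frac{1}{m}\one\one'$.

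The induction you sketch does not close. Writing $\mathbf{V}_\pi = r_0\one\one' + \tilde{\mathbf{D}}$ with $\tilde{\mathbf{D}}$ block diagonal and invoking the induction hypothesis $\mathbf{V}_{B_j}\succeq\frac{1}{m_j}\one_j\one_j'$ on each block only yields
\[
\mathbf{w}'\mathbf{V}_\pi\mathbf{w}\ \ge\ r_0 s^2 + \sum_j\Big(\tfrac{1}{m_j}-r_0\Big)s_j^2,
\qquad s_j=\one_j'\mathbf{w}_j,\quad s=\sum_j s_j,
\]
and the right-hand side is \emph{not} a positive diagonal form once $r_0>1/m_j$: with two blocks of size $2$, $r_0=0.9$ and $s_1=-s_2=1$ (realizable by an ultrametric tree with within-block correlation $0.95$), it equals $-0.8<0=s^2/m$. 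Your closing observation ``$(m-1)r_0\ge 0$'' is what comes out in the star case, where every block is a singleton; it does not survive non-singleton blocks. Nor does the purely probabilistic reading help: granting only $\tilde{\mathbf{D}}\succeq 0$ (fluctuation covariances are PSD) gives $\mathbf{V}_\pi\succeq r_0\one\one'$, which fails when $r_0=e^{-2\alpha(T_k-s_0)}<1/m$. The lemma itself is true, but the induction hypothesis must be strengthened to track the root-level correlation: claim $\mathbf{V}\succeq\big(r_0+\frac{1-r_0}{m}\big)\one\one'$, where $r_0$ is the common between-subtree correlation. Then each block satisfies $\tilde{\mathbf{D}}_j=\mathbf{V}_{B_j}-r_0\one_j\one_j'\succeq\frac{1-r_0}{m_j}\one_j\one_j'$ (using $r_j\ge r_0$ and monotonicity of $r\mapsto r+(1-r)/m_j$), and Cauchy--Schwarz, $\sum_j s_j^2/m_j\ge s^2/m$, closes the induction. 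This strengthened inequality is exactly the paper's Proposition~\ref{prop:variance-lower}, $\var_\tree[\mle]\ge\gamma\big(e^{-2\alpha T}+\frac{1-e^{-2\alpha T}}{n}\big)$, applied to the tree above the cutset; the paper proves it by repeated edge splitting down to a star, which is an alternative, already-available fix for this step.
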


We further obtain bounds on the
variance of the MLE to characterize the rate of convergence
of the MLE. In particular we give conditions for
$\sqrt{n_k}$-consistency. We show that the latter
undergoes a phase transition, generalizing Example~\ref{ex:first-transition-continued}. 
When the upper growth is above $2\alpha$,
we show that the MLE of $\mu$ cannot be 
$\sqrt{n_k}$-consistent. 
If further the branching number is above $2\alpha$, we give tight bounds on the
convergence rate of the MLE. Roughly we show that,
in the latter case, the variance behaves like $n_k^{2\alpha/\gr}$.
Or perhaps a more accurate way to put it is that the ``effective
number of samples'' $\neff_k$ is $e^{2\alpha T_k}$,
in the sense that $\var_{\tree_k}[\mle^{(k)}] = \Theta((\neff_k)^{-1})$.

\begin{thm}[Loss of $\sqrt{n_k}$-consistency for $\mle$: Supercritical regime]
\label{thm:rate-supercritical}
Let $(\tree_k)_k$ be a tree sequence.
If $\ugr > 2\alpha$, then 
for all $\epsilon > 0$ there is a subsequence $(k_j)_j$
along which
\begin{equation}\label{eq:non-sqrt-supercritical}
\var_{\tree_{k_j}}[\mle^{(k_j)}] 
\geq \gamma n_{k_j}^{-2\alpha/(\ugr - \epsilon)}.
\end{equation}
In particular
$(\mle^{(k)})_k$ is not
$\sqrt{n_k}$-consistent.
If, further,
\begin{enumerate}
\item $\br > 2\alpha$: then
$$
\var_{\tree_k}[\mle^{(k)}] = \Theta\left(e^{-2\alpha T_k}\right).
$$
Moreover in terms of $n_k$, for all $\epsilon > 0$, there are constants
$0 < C', C < +\infty$ such that
\begin{equation}\label{eq:br-above-upper}
C' n_k^{-2\alpha/(\lgr - \epsilon)}
\leq 
\var_{\tree_k}[\mle^{(k)}]
\leq 
C n_k^{-2\alpha/(\ugr + \epsilon)},
\end{equation}
and, in addition to~\eqref{eq:non-sqrt-supercritical}, 
\begin{equation*}
\exists\ \text{subsequence}\ (k'_j)_j,\ \text{s.t.}\ \var_{\tree_{k'_j}}[\mle^{(k'_j)}]
\leq \gamma n_{k'_j}^{-2\alpha/(\lgr + \epsilon)}.
\end{equation*}

\item $\br < 2\alpha$: then, for all $\epsilon > 0$, there are constants
$0 < C', C < +\infty$ such that
\begin{equation}\label{eq:br-below-upper}
C' n_k^{-2\alpha/(\lgr - \epsilon)}
\leq 
\var_{\tree_k}[\mle^{(k)}]
\leq 
C n_k^{-(\br - \epsilon)/(\ugr + \epsilon)},
\end{equation}
where the lower bound  in~\eqref{eq:br-below-upper} above
holds provided $\lgr > 0$, and

\begin{equation*}
\exists\ \text{subsequence}\ (k'_j)_j,\ \text{s.t.}\ \var_{\tree_{k'_j}}[\mle^{(k'_j)}]
\leq \gamma n_{k'_j}^{-(\br - \epsilon)/(\lgr + \epsilon)}.
\end{equation*}

\end{enumerate}
\end{thm}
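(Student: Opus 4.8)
The plan is to reduce everything to the single scalar $\one' \mathbf{V}_{\tree_k}^{-1}\one$ in~\eqref{eq:varmuhat} and to read off its size from an electrical network. After centering, $Y_\ell = \mu + Z_\ell$ with $Z$ the centered OU field; the rescaling $\tilde Z_v = e^{\alpha \delta_k(\rho,v)} Z_v$ turns the multiplicative OU recursion into an additive (Brownian) tree model $\tilde Z_v = \tilde Z_{\mathrm{pa}(v)} + \tilde\eta_v$ with independent increments of variance $s_v = \gamma(e^{2\alpha\delta_k(\rho,v)} - e^{2\alpha\delta_k(\rho,\mathrm{pa}(v))})$ and root variance $\gamma$. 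For such an additive model the information $\one'\tilde\Sigma^{-1}\one$ about the common mean is exactly the effective conductance $\mathcal{C}_{\mathrm{eff}}$ between a grounded root and the tied leaves in the network whose edges carry resistance $s_v$ (with a root-to-ground resistance $\gamma$). Since the tree is ultrametric all leaves lie at depth $T_k$, and tracking the factor $e^{2\alpha T_k}$ gives the identity
\[
\var_{\tree_k}[\mle^{(k)}] = e^{-2\alpha T_k}\,\mathcal{R}_{\mathrm{eff}}, \qquad \mathcal{R}_{\mathrm{eff}} = 1/\mathcal{C}_{\mathrm{eff}},
\]
which one checks against the star tree of Example~\ref{ex:firstphase}. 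All the bounds then become resistance estimates.

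For the lower bounds I would first establish the universal inequality $\var_{\tree_k}[\mle^{(k)}] \ge \gamma e^{-2\alpha T_k}$. In the network this is immediate: the root-to-ground edge of resistance $\gamma$ is a bridge every unit current must cross, so $\mathcal{R}_{\mathrm{eff}} \ge \gamma$. Equivalently, and without networks, $\mathbf{x}'\mathbf{V}_{\tree_k}\mathbf{x} = e^{-2\alpha T_k}\,\mathbf{x}'\mathbf{W}\mathbf{x}$ with $W_{ij} = e^{2\alpha t_{ij}}$, and $\mathbf{W} - \one\one'$ is a genuine covariance matrix (that of $\tilde Z_i - \tilde Z_\rho$), whence $\mathbf{x}'\mathbf{V}_{\tree_k}\mathbf{x} \ge e^{-2\alpha T_k}(\one'\mathbf{x})^2$ and~\eqref{eq:varmuhat} yields the claim. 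Every lower bound in the theorem then follows by feeding the growth definitions into this one inequality: along the $\ugr$-subsequence, $n_{k_j} \ge e^{(\ugr - \epsilon)T_{k_j}}$ gives $e^{-2\alpha T_{k_j}} \ge n_{k_j}^{-2\alpha/(\ugr - \epsilon)}$, which is~\eqref{eq:non-sqrt-supercritical}, and since $2\alpha/(\ugr - \epsilon) < 1$ for small $\epsilon$ when $\ugr > 2\alpha$, this defeats $\sqrt{n_k}$-consistency; and $n_k \ge e^{(\lgr - \epsilon)T_k}$ gives the left inequalities of~\eqref{eq:br-above-upper} and~\eqref{eq:br-below-upper} (the latter requiring $\lgr > 0$ so the exponent is finite).

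For the upper bounds I would build an explicit unit flow $\theta$ from the root to the leaves and invoke Thomson's principle, $\mathcal{R}_{\mathrm{eff}} \le \mathcal{E}(\theta) = \sum_e s_e\theta_e^2$, so that $\var_{\tree_k}[\mle^{(k)}] \le e^{-2\alpha T_k}\mathcal{E}(\theta)$ (equivalently, $\theta$ supplies the weights of a linear unbiased estimator, and $\mle$ does no worse). The resistance density at depth $d$ is $\sim e^{2\alpha d}$, while a flow that splits evenly carries current $\sim 1/|\tilde\pi^k_{T_k - d}|$ across the depth-$d$ layer, so the energy is governed by $\int_0^{T_k} e^{2\alpha d}/|\tilde\pi^k_{T_k - d}|\,dd$. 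The branching number enters precisely here: for $\Lambda = \br - \epsilon$ its definition forces $|\tilde\pi^k_{T_k - d}| \ge c\,e^{(\br - \epsilon)d}$ uniformly in $k$ and $d$ (all points of this cutset sit at depth $d$), so the integrand is $\lesssim e^{(2\alpha - \br + \epsilon)d}$. When $\br > 2\alpha$ the integral converges uniformly in $T_k$, giving $\mathcal{R}_{\mathrm{eff}} = \Theta(1)$ and $\var_{\tree_k}[\mle^{(k)}] = \Theta(e^{-2\alpha T_k})$, which with $n_k \le e^{(\ugr + \epsilon)T_k}$ produces the right inequality of~\eqref{eq:br-above-upper}; when $\br < 2\alpha$ the integral is $\lesssim e^{(2\alpha - \br + \epsilon)T_k}$, so $\var_{\tree_k}[\mle^{(k)}] \lesssim e^{-(\br - \epsilon)T_k}$, and $n_k \le e^{(\ugr + \epsilon)T_k}$ gives the right inequality of~\eqref{eq:br-below-upper}. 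The two subsequence bounds come from the same estimates applied along the $\lgr$-subsequence $n_{k'_j} \le e^{(\lgr + \epsilon)T_{k'_j}}$.

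The main obstacle is making the flow rigorous on arbitrary, unbalanced trees: the heuristic ``current $\approx 1/|\tilde\pi^k_{T_k - d}|$ per edge'' is exact only for balanced trees, whereas the branching-number bound is an infimum over cutsets and does not by itself produce a balanced flow. The crux is therefore to construct, from the uniform cutset lower bound $|\tilde\pi^k_{T_k - d}| \ge c\,e^{(\br - \epsilon)d}$, a genuine unit flow whose energy still obeys the displayed integral; this is the finite-height, quantitative analogue of Lyons' flow / max-flow--min-cut criterion linking transience to the branching number, and is exactly where the reconstruction problem and its critical value $2\alpha$ are used. The remaining steps---the centering/rescaling identity, Thomson's principle, and the translations between $T_k$ and $n_k$ through the growth exponents---are routine by comparison.
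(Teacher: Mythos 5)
Your overall architecture is the paper's: the variance-as-flow-energy identity (the paper's Proposition~\ref{prop:key} and its integral form, Corollary~\ref{cor:integral-form}), the universal lower bound $\var_{\tree_k}[\mle^{(k)}]\ge\gamma e^{-2\alpha T_k}$ fed through the growth definitions, and a flow-energy upper bound driven by the branching number, with the same $T_k\leftrightarrow n_k$ translations. Your lower-bound half is complete and correct; your positive-semidefiniteness argument (that $\mathbf{W}-\one\one'$ is the covariance of the rescaled, root-centered field, whence $\one'\mathbf{V}_{\tree_k}^{-1}\one\le e^{2\alpha T_k}$) is a clean alternative to the paper's route, which instead compares $\tree_k$ to a star by repeated edge splitting (Lemma~\ref{lemma:splitting} and Proposition~\ref{prop:variance-lower}); both give the same inequality, and in fact the bound also drops out of \eqref{eq:variance-formula} since every term $R_b\theta_b^2$ is nonnegative.

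The gap is the one you flag yourself, and as stated it is genuine: the upper bounds cannot be derived from the level-cutset cardinality bound $|\tilde{\pi}^k_{T_k-d}|\ge c\,e^{(\br-\epsilon)d}$, because on unbalanced trees ``split the current evenly across each level'' is not flow-conserving, so the integral $\int e^{2\alpha d}/|\tilde{\pi}^k_{T_k-d}|\,\mathrm{d}d$ is not the energy of any admissible flow, and no construction starting only from horizontal-cutset sizes will do. The paper (Proposition~\ref{prop:variance-branching}) closes this in a way that bypasses level cutsets entirely: fix $\Lambda<\br$ and read the definition of $\br$ as saying that, with capacity $e^{-\Lambda\delta_k(\rho,x)}$ placed on each point $x$, \emph{every} cutset (not just horizontal ones) has capacity at least $\ical_\Lambda>0$; the max-flow min-cut theorem then yields a flow $\bfeta^{(k)}$ with $\|\bfeta^{(k)}\|\ge\ical_\Lambda$ and $\eta^{(k)}_x\le e^{-\Lambda\delta_k(\rho,x)}$ pointwise. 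After normalizing to a unit flow $\bftheta^{(k)}$, no balance is needed: the energy at depth $s$ is controlled by
$\sum_{x\in\pi^k_s}(\theta^{(k)}_x)^2 \le \bigl(\max_{x}\theta^{(k)}_x\bigr)\sum_{x\in\pi^k_s}\theta^{(k)}_x \le e^{-\Lambda s}/\ical_\Lambda$,
which, inserted in the integral form, gives $\int_0^{T_k}e^{(2\alpha-\Lambda)s}\,\mathrm{d}s$ and hence $\var_{\tree_k}[\mle^{(k)}]\le C e^{-2\alpha T_k}$ when $2\alpha<\Lambda<\br$ (case $\br>2\alpha$) and $\le C e^{-(\br-\epsilon)T_k}$ taking $\Lambda=\br-\epsilon$ (case $\br<2\alpha$). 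So the missing step is only a few lines once the capacities are assigned over general cutsets; with that substitution your outline becomes the paper's proof.
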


\noindent
The following example shows that, when $\br < 2\alpha$,
the upper bound in \eqref{eq:br-below-upper} may not be achieved,
but cannot be improved in general.
\begin{example}[Two-level tree]
\label{ex:twolevel}
Let $(\tree_k)_k$ be a tree sequence 
with two levels of nodes below the root: 
$D^{(k)}_0= e^{ \Lambda_0 \tau^{(k)}_0}$ nodes are attached to the root 
by edges of length $\tau^{(k)}_0 = \sigma T_k$, 
for some arbitrary choice of tree height $T_k\to\infty$ 
and $0<\sigma<1$.
Each of these $D^{(k)}_0$ nodes has itself
$D^{(k)}_1 = e^{ \Lambda_1 \tau^{(k)}_1}$ children 
along edges of length $\tau^{(k)}_1 = (1-\sigma) T_k$, 
and these form the leaves of $\tree_k$.

\begin{prop}\label{prop:example2level}
For $0<\Lambda_0 < \Lambda_1$ and $\treeseq=(\tree_k)_k$ described above, 
we have that $\br = \Lambda_0$, 
$\gr = \sigma \Lambda_0 + (1-\sigma) \Lambda_1$, and
\begin{eqnarray} 
\var_{\tree_k}[\mle^{(k)}] &=& \gamma  e^{-2\alpha T_k}  + 
\gamma (1 - e^{-2\alpha \sigma T_k}) \, e^{-(\sigma\Lambda_0 + (1-\sigma)2\alpha) T_k} \nonumber\\
&&\qquad\qquad\qquad\qquad\qquad + \gamma (1 - e^{-2\alpha (1-\sigma) T_k}) \, e^{-\gr T_k} .\label{eq:variance-2-level}
\end{eqnarray}
\end{prop}

\noindent This proposition is proved in Section \ref{sec:convergence-rate-mle}. 
It implies that if $ 2\alpha \leq \Lambda_0 = \br$,
the dominant term in the variance
is $\gamma e^{-2\alpha T_k} = \gamma n_k^{-2\alpha /\gr}$, as predicted
by \eqref{eq:br-above-upper} in Theorem~\ref{thm:rate-supercritical}.
If instead $2\alpha \geq \Lambda_1 $,
the dominant term in the variance is 
$\gamma e^{- \gr T_k} = \gamma n_k^{-1},$ 
and we have $\sqrt{n_k}$-consistency. 
In the intermediate case when $\Lambda_0 < 2\alpha < \Lambda_1$,
the dominant term in the variance is 
$ \gamma e^{- (\sigma \Lambda_0 + (1-\sigma) 2\alpha) T_k}
= \gamma n_k^{-\sigma (\br/\gr) - (1-\sigma) (2\alpha/\gr)}.$ 
Therefore, depending on the value of $\sigma$,
we can get the full range of exponent values 
between $-2\alpha/\gr$ and $-\br/\gr$, as given in~\eqref{eq:br-below-upper}.
\end{example}

\noindent In the other direction when $\ugr < 2\alpha$, 
the picture is somewhat murkier.
For example, by taking 
$\sigma$ close enough to $1$ in Example~\ref{ex:twolevel},
it is possible to have $\gr < 2\alpha$, yet
not $\sqrt{n_k}$-consistency. 
The issue in Example~\ref{ex:twolevel} is the inhomogeneous growth rate.
However, under extra regularity conditions,
$\sqrt{n_k}$-consistency can be established.
In words, the growth of the tree must be
sufficiently homogeneous. 
In Theorem~\ref{thm:subcritical} below,
we consider imposing the extra condition $\br = \gr$, which 
does not hold in Example~\ref{ex:twolevel}.
\begin{thm}[Convergence rate of $\mle^{(k)}$: Subcritical regime]
\label{thm:subcritical}
Let $(\tree_k)_k$ be a tree sequence with
$\ugr < 2\alpha$. Then
$$
\var_{\tree_k}[\mle^{(k)}] = \Omega\left(n_k^{-1}\right).
$$
Further if:
\begin{enumerate}
\item 
$\br = \ugr > 0$ then, for all $\epsilon > 0$,
$
\var_{\tree_k}[\mle^{(k)}] = O\left(n_k^{-(1-\epsilon)}\right).
$

\item 
$\unigr < 2\alpha$ then
$
\var_{\tree_k}[\mle^{(k)}] = O\left(n_k^{-1}\right).
$

\end{enumerate}
\end{thm}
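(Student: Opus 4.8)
The plan is to work throughout with the electrical-network representation of the variance. Starting from \eqref{eq:varmuhat}, $\var_{\tree_k}[\mle^{(k)}] = \gamma(\one'\mathbf{V}_{\tree_k}^{-1}\one)^{-1}$, I would first recast this quantity as an effective resistance. Writing $T_k(x)$ for the distance from a point $x$ to the leaves and $\delta_k(\rho,x)$ for its distance to the root, the change of variables $W_x = e^{\alpha\delta_k(\rho,x)}(Y_x-\mu)$ turns the OU process into a Gaussian process with independent additive increments along edges; after rescaling by $e^{-2\alpha T_k}$ the tree becomes a resistor network in which the edge $e=(p,c)$ carries resistance $r_e = \gamma(e^{-2\alpha T_k(c)}-e^{-2\alpha T_k(p)})>0$ and the root $\rho$ is joined to a source node by an edge of resistance $\gamma e^{-2\alpha T_k}$ (the root prior). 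A direct check on the star tree, recovering \eqref{eq:varmlestar}, confirms that $\var_{\tree_k}[\mle^{(k)}]$ equals the effective resistance $\mathcal{R}_{\mathrm{eff}}$ between the source and the leaf set shorted to a single terminal. I would then estimate $\mathcal{R}_{\mathrm{eff}}$ through the two dual variational principles: Dirichlet's principle (voltages) for the lower bound and Thomson's principle (flows) for the upper bounds.

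For the lower bound $\var = \Omega(n_k^{-1})$ I would apply Dirichlet's principle with the explicit test voltage $v(x)=1-e^{-2\alpha T_k(x)}$, which equals $0$ at every leaf and $1-e^{-2\alpha T_k}$ at $\rho$, extended by $v=1$ at the source. The contribution of each edge to the Dirichlet energy telescopes against its resistance to give exactly $\gamma^{-1}(e^{-2\alpha T_k(c)}-e^{-2\alpha T_k(p)})$, and since every internal node has at least one child one gets $\sum_e(e^{-2\alpha T_k(c)}-e^{-2\alpha T_k(p)})\le n_k$. Together with the source-edge term $e^{-2\alpha T_k}/\gamma\le 1/\gamma$, the total energy is at most $2n_k/\gamma$, so $\mathcal{R}_{\mathrm{eff}}\ge \gamma/(2n_k)$. (This bound in fact holds for any tree sequence and expresses the impossibility of beating the parametric rate under positive correlations.)

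For the two upper bounds I would exhibit explicit unit flows from the source to the leaves and use $\mathcal{R}_{\mathrm{eff}}\le\sum_e r_e\theta_e^2$, noting that the full current passes through the source edge, contributing $\gamma e^{-2\alpha T_k}=O(n_k^{-1})$ since $\ugr<2\alpha$. Under the uniform-growth hypothesis $\unigr<2\alpha$ (part 2) I would take the flow $\theta_e=n_k(c)/n_k$ that splits at each node in proportion to descendant leaf counts. Keeping one factor $n_k(c)$ and bounding the other by $n_k(c)\le C e^{\Lambda'T_k(c)}$ with $\Lambda'<2\alpha$ from uniform growth, the edge sum reorganizes into $\frac{C}{n_k^2}\sum_{\ell\in\leaves_k}\sum_{e\in\mathrm{path}(\rho,\ell)}r_e\,e^{\Lambda'T_k(c)}$; along each root--leaf path the exact resistances telescope and, because $\Lambda'/(2\alpha)<1$, the inner sum is bounded by a convergent integral independent of $k$ and $\ell$, yielding total energy $O(n_k^{-1})$. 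Under $\br=\ugr>0$ (part 1), uniform growth may fail, so I would instead obtain a flow from the branching number: for $\Lambda<\br$ the min-cut $\inf_{\pi}\sum_{e\in\pi}e^{-\Lambda\delta_k(\rho,x)}$ is bounded below by some $m>0$, and max-flow--min-cut on the tree produces a positive flow with $\theta_e\le e^{-\Lambda\delta_k(\rho,c)}$; normalizing gives a unit flow with $\theta_e\le m^{-1}e^{-\Lambda\delta_k(\rho,c)}$. Spending this per-edge bound only once and telescoping the exact resistances along each path (the relevant integral converges precisely because $\Lambda<2\alpha$), I would get energy $O(e^{-\Lambda T_k})$; since $\br=\ugr$ one may take $\Lambda$ arbitrarily close to $\ugr$ and use $e^{-\Lambda T_k}\le n_k^{-\Lambda/(\ugr+\epsilon)}$ to conclude $\var=O(n_k^{-(1-\epsilon)})$.

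The main obstacle I anticipate is part 1: converting the branching-number inequality into a genuine flow with usable per-edge control and, crucially, arranging the energy estimate so that subdivisions of edges do not inflate the bound. The telescoping identity $\sum_{e\in\mathrm{path}(\rho,\ell)}r_e=\gamma(1-e^{-2\alpha T_k})$ is what makes this possible: it lets me spend a single factor of the flow on the exponential per-edge bound while absorbing the remaining resistance into a subdivision-invariant path integral. The other step requiring care is the careful verification of the effective-resistance representation itself, including the treatment of the root prior as a resistance to the source, though this is essentially the electrical characterization already advertised.
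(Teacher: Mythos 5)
Your proposal is correct, and for the two upper bounds it takes essentially the paper's route: your max-flow construction from the capacities $e^{-\Lambda\delta_k(\rho,x)}$, normalized by the min-cut value $m=\ical_\Lambda$, with the per-edge bound spent only once, is exactly Proposition~\ref{prop:variance-branching}; your part-2 flow $\theta_e=n_k(c)/n_k$ with one factor controlled by uniform growth is exactly the paper's proof that $\overline{Y}$ does the job when $\unigr<2\alpha$; and your telescoped path sums $\sum_{e\in\mathrm{path}} r_e e^{\Lambda T_k(c)}\le 2\alpha\gamma\int_0^\infty e^{-(2\alpha-\Lambda)s}\,\mathrm{d}s$ are the paper's integral form (Corollary~\ref{cor:integral-form}) written edge-by-edge. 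The genuine difference is the lower bound $\var_{\tree_k}[\mle^{(k)}]=\Omega(n_k^{-1})$. The paper proves it by tree surgery: repeatedly splitting edges (Lemma~\ref{lemma:splitting}), which can only decrease the variance, until the tree becomes a star, then reading off the star formula \eqref{eq:varmlestar} (Proposition~\ref{prop:variance-lower}). You instead use the dual (Dirichlet) variational principle with the test potential $v(x)=1-e^{-2\alpha T_k(x)}$; I checked that this satisfies the correct boundary conditions (zero at the leaves, one at the source), that each edge's energy is $\bigl(e^{-2\alpha T_k(c)}-e^{-2\alpha T_k(p)}\bigr)/\gamma$, and that these nonnegative increments telescope to at most $1/\gamma$ along each of the $n_k$ root--leaf paths, so the total energy is at most $(n_k+1)/\gamma$ and the effective resistance is at least $\gamma/(2n_k)$. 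The series treatment of the root prior as a source edge of resistance $\gamma e^{-2\alpha T_k}$, which you flag as needing verification, is precisely the content of Proposition~\ref{prop:key}, so that step is sound. As for what each approach buys: the paper's splitting lemma yields the slightly sharper explicit constant $\gamma\bigl(e^{-2\alpha T_k}+(1-e^{-2\alpha T_k})/n_k\bigr)$ and is a multi-purpose tool reused elsewhere (the necessity half of Theorem~\ref{thm:criterion-consistency}, Proposition~\ref{prop:variance-upper}), while your Dirichlet argument avoids any combinatorial tree surgery and makes the flow/potential duality explicit, giving a clean, self-contained alternative proof of the parametric-rate barrier.
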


Theorems~\ref{thm:rate-supercritical} and~\ref{thm:subcritical} are proved in Section~\ref{sec:convergence-rate-mle}. All our results
on the estimation of $\mu$ leverage a 
useful characterization of the variance 
of linear estimators in terms
of electrical networks. An analogous
characterization is used in ancestral state reconstruction~\cite{peres1999climb}. Note that our results are not as
clean as those obtained for ancestral
state reconstruction. As Example~\ref{ex:twolevel}
showed, estimation of $\mu$ is somewhat sensitive to
the ``homogeneity'' of the growth.
In Section~\ref{sec:sensitivity-to-alpha},
we show that assuming $\alpha$ is known 
is inconsequential, provided a good estimate of $\alpha$
is available. Such an estimate is discussed next.

\paragraph{Results concerning the parameters
$\alpha$ and $\gamma$}
Our main result for $\alpha$ and $\gamma$ is a
$\sqrt{n_k}$-consistent estimator under the following assumption:
there are two separate ``bands'' of node ages, each containing
a number of internal nodes growing linearly 
with the number of leaves. 
\begin{assumption}[Linear-sized bands]
\label{assump:alpha}
Define $n_k(c,c')$ as the number of nodes in $\tree_k$ 
of age (height from the leaves) in $(c,c')$.
Assume that there are constants $\beta >0$ and
$0 < c_1 < c_1' < c_2 < c_2' < \infty$ such that
$n_k(c_i,c_i') \geq \beta n_k, i =1,2,$
for all $k$ large enough.
\end{assumption}

As shown in Corollary \ref{cor:yule_alpha_gamma}, this assumption holds for the Yule process, a speciation model frequently used in practice.

\begin{thm}[Estimating $\alpha$ and $\gamma$: $\sqrt{n_k}$-consistency]
\label{thm:main-alpha-rate}
Let $(\tree_k)_k$ be a sequence of ultrametric trees
satisfying Assumptions~\ref{assumption:unbounded}
and~\ref{assump:alpha}.
Then there is an estimator $(\hat\alpha_k,\hat\gamma_k)_k$ 
of $(\alpha,\gamma)$ such that
$|\hat\alpha_k -\alpha| = O_p(n_k^{-1/2})$ 
and $|\hat\gamma_k -\gamma| = O_p(n_k^{-1/2})$.
\end{thm}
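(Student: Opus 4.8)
The plan is to construct explicit method-of-moments estimators built on pairwise products of tip observations whose expectations decode $\alpha$ and $\gamma$, and then show these estimators concentrate at the $\sqrt{n_k}$ rate using the variance structure of the OU model together with Assumption~\ref{assump:alpha}. The starting observation is that for two tips $i,j$ with most recent common ancestor at distance $d_{ij}$, we have $\E{Y_i Y_j} = \mu^2 + \gamma e^{-\alpha d_{ij}}$, and in particular $\E{Y_i^2} = \mu^2 + \gamma$. Since the tree (hence all $d_{ij}$) is known, the quantities $e^{-\alpha d_{ij}}$ are monotone transformations of known distances scaled by the unknown $\alpha$. The key is to exploit Assumption~\ref{assump:alpha}: two bands of internal-node ages, each carrying $\Omega(n_k)$ nodes. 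For each internal node $u$ at age $a_u$, the two subtrees below it supply many pairs $(i,j)$ whose common ancestor is exactly $u$, so $d_{ij} = 2 a_u$; averaging $Y_i Y_j$ over such pairs gives an unbiased estimate of $\mu^2 + \gamma e^{-2\alpha a_u}$ with variance controlled by the number of contributing pairs.

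First I would define, for a band of ages around some representative value $c$, a statistic $\widehat{m}(c)$ that averages $Y_i Y_j$ over all (or a carefully chosen linear-sized family of) tip pairs whose common ancestor lies in that band, so that $\E{\widehat{m}(c)} \approx \mu^2 + \gamma e^{-2\alpha c}$. Using two such bands at representative ages $c^{(1)} < c^{(2)}$ (guaranteed by Assumption~\ref{assump:alpha}), together with the diagonal statistic $\widehat{m}_0 = \frac1{n_k}\sum_i Y_i^2$ estimating $\mu^2 + \gamma$, I would solve the resulting system: differences such as $\widehat{m}_0 - \widehat{m}(c^{(1)})$ and $\widehat{m}(c^{(1)}) - \widehat{m}(c^{(2)})$ cancel the nuisance $\mu^2$ and leave expressions of the form $\gamma(1 - e^{-2\alpha c})$ and $\gamma(e^{-2\alpha c^{(1)}} - e^{-2\alpha c^{(2)}})$. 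Taking a ratio of two such differences isolates $\alpha$ through a strictly monotone function of $\alpha$ alone (the $\gamma$ factor cancels), which can be inverted to produce $\hat\alpha_k$; back-substitution then yields $\hat\gamma_k$. Because the defining map from the moment vector to $(\alpha,\gamma)$ is smooth with nonvanishing Jacobian (using $c^{(1)}\neq c^{(2)}$ and $\gamma>0$), the delta method transfers $\sqrt{n_k}$-concentration of the moment estimators to $\sqrt{n_k}$-concentration of $(\hat\alpha_k,\hat\gamma_k)$.

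The main obstacle, and the technical heart of the proof, is establishing that each averaged statistic $\widehat{m}(c)$ has variance $O(n_k^{-1})$ despite the strong correlations among the summands $Y_iY_j$. Unlike an i.i.d. average, the pairwise products share tips and share evolutionary history, so I would bound $\var[\widehat{m}(c)]$ by expanding into a sum of fourth-moment terms $\cov(Y_iY_j, Y_{i'}Y_{j'})$, which for the Gaussian OU model factor (via Isserlis/Wick) into products of the entries $\gamma e^{-\alpha d_{\cdot\cdot}}$. The crucial point is that these covariances decay exponentially in the tree distance between the two pairs, so the effective number of ``weakly correlated'' contributions is proportional to the number of pairs, which is $\Omega(n_k)$ by the linear-sized-band assumption; this is precisely where Assumption~\ref{assump:alpha} does the work, ensuring enough nearly-independent pairs to drive the variance down at the $1/n_k$ rate. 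I would organize this variance bound by grouping pairs according to the depth of the common ancestor of the two pairs and summing the resulting geometric series, taking care that the ultrametric structure keeps all cross-covariance terms summable. Once the $O(n_k^{-1})$ variance bounds are in place for the moment statistics, the delta-method step is routine, and the theorem follows.
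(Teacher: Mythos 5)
There is a genuine gap, and it sits exactly where you located the ``technical heart'': the claim that $\var[\widehat{m}(c)] = O(n_k^{-1})$ because cross-covariances ``decay exponentially in the tree distance between the two pairs.'' The decay statement is true, but on an ultrametric tree the relevant tree distances need not grow with $n_k$: any two pairs lying below a common ancestor of \emph{bounded} age have covariance bounded below by a constant. Concretely, if $(i,j)$ and $(i',j')$ all descend from a node $u$ of age $h$, then every cross-covariance $\cov[Y_a,Y_b]$, $a\in\{i,j\}$, $b\in\{i',j'\}$, is at least $\gamma e^{-2\alpha h}$, so $\cov[Y_iY_j,\,Y_{i'}Y_{j'}] \geq 4\mu^2\gamma e^{-2\alpha h} + 2\gamma^2 e^{-4\alpha h}$, a constant. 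Assumption~\ref{assump:alpha} only counts internal nodes in age bands; it says nothing about those nodes being spread across the tree. Take a tree whose root has two children $r_A$ (age $h_A$ slightly above $c_1'$) and $r_B$ (age $h_B$ slightly above $c_2'$), with all $\Omega(n_k)$ band-one nodes below $r_A$ and all $\Omega(n_k)$ band-two nodes below $r_B$: Assumptions~\ref{assumption:unbounded} and~\ref{assump:alpha} hold, yet writing $Y_i = \mu + G_A + \eta_i$ for tips in $A$ (where $G_A$ is the fluctuation of $Y_{r_A}$ around $\mu$, common to all of $A$) gives $\widehat{m}(c^{(1)}) = (\mu+G_A)^2 + O_p(n_k^{-1/2})$ and likewise $\widehat{m}(c^{(2)}) = (\mu+G_B)^2 + O_p(n_k^{-1/2})$. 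The terms $(\mu+G_A)^2$ and $(\mu+G_B)^2$ each have $\Omega(1)$ variance, and since $G_A$ and $G_B$ are (essentially) independent they do \emph{not} cancel in your differences $\widehat{m}(c^{(1)})-\widehat{m}(c^{(2)})$ or $\widehat{m}_0-\widehat{m}(c^{(1)})$. So your moment statistics, and hence $\hat\alpha_k$, are not even consistent on this sequence; no grouping of the Wick expansion can rescue a variance that is genuinely $\Omega(1)$.

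The paper's fix is structural rather than combinatorial: instead of multiplying tips and differencing the averages across bands, it differences tips \emph{within} each pair first, using contrasts $\contrast_i = Y_{\ell^i_1}-Y_{\ell^i_2}$ where $\ell^i_1\land\ell^i_2 = i$ is the band node. The within-pair difference cancels exactly all shared history above $i$ (and the nuisance $\mu$), so that contrasts along pairwise non-intersecting paths are mutually independent (Lemma~\ref{lem_addmic06}), with $\contrast_i \sim {\cal N}(0,2\gamma(1-e^{-2\alpha T(i)}))$; Lemma~\ref{lem:choose} then extracts $\Omega(n_k)$ such contrasts per band. Averaging the squared contrasts gives statistics with genuine $O(n_k^{-1})$ variance, and the rest of the paper's proof (invertibility of the ratio $H_k$ in $\alpha$ and a uniform-in-$k$ bound on $(H_k^{-1})'$) parallels your inversion step---though note one further subtlety you gloss over: because $H_k$ and the target ratio $a_k/b_k$ depend on $k$ and need not converge, a routine delta method does not apply as stated, and one must prove a derivative bound uniform over $k$ as the paper does in Lemma~\ref{lem:monotone}. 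In short: replace products-then-difference by difference-then-square, and both the independence you need and the elimination of $\mu$ come for free.
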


The proof, found in Section~\ref{sec:alpha-rate}, is based
on the common notion of contrasts. Assumption~\ref{assump:alpha}
ensures the existence of an appropriate set of such contrasts.
The key point is that this extra assumption can be satisfied
no matter what the growth and branching number are,
indicating that the estimation of $\alpha$ and $\gamma$
is unaffected by the growth of the tree unlike $\mu$.
Intuitively, $\mu$ is a more ``global'' parameter.

\subsection{Special cases}
\label{sec:special-cases-mu}

We apply here the results stated in Section \ref{sec:statementresults} 
to a number of scenarios.
The tree of life naturally gives rise to two types of tree sequences.
If one imagines sampling an increasing number of contemporary species, one obtains a nested sequence, defined as follows.
\begin{definition}[Nested sequence]
A sequence of trees $\treeseq=(\tree_k)_{k}$ is {\em nested}
if, for all $k$, $n_k = k$ and $\tree_k$ restricted to 
the first $k-1$ species
is identical to $\tree_{k-1}$ as an ultrametric.
\end{definition}
\noindent An example of nested trees is given by a caterpillar sequence.
\begin{example}[Caterpillar sequence]
	\label{ex:caterpillar}
	Let $(t_k)_k$ be a sequence of nonnegative numbers such that
	$\limsup_k t_k = +\infty$.
	Let $\tree_1$ be a one-leaf star with height $T_1 = t_1$. For $k > 1$, let $\tree_k$ be the caterpillar-like tree
	obtained by adding a leaf edge with leaf $k$ to $\tree_{k-1}$ 
	at height $t_k$ on the path between leaf $1$ and the root
	of $\tree_{k-1}$, if $t_k \leq T_{k-1}$. 
	If instead
	$t_k > T_{k-1}$, create a new root at height $t_k$ 
	with an edge attached to
	the root of $\tree_{k-1}$ and an edge attached to $k$ (see Figure \ref{fig:caterpillar}).
\end{example}

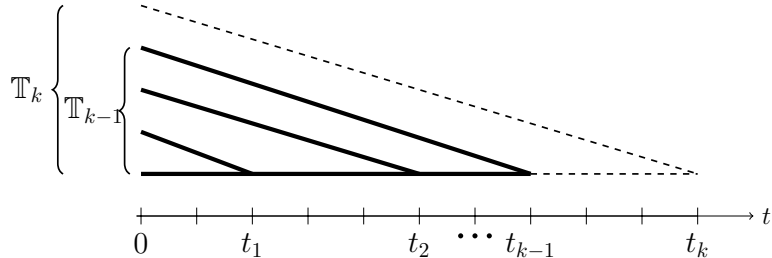
\begin{figure}[h]
\begin{center}
 \scalebox{0.8}{%
\begin{tikzpicture}[domain=0:1,xscale=3.7,yscale=0.7] 

\draw[very thin,color=black] (-0.02,-0.2) grid[xstep=0.25] (2.5,0.2); 
\draw[->] (-0.02,0) -- (2.75,0) node[right] {$t$}; 

\draw (0,-0.2) node[below] {\large $0$};
\draw (0.5,-0.2) node[below] {\large $t_1$};
\draw (1.25,-0.2) node[below] {\large $t_2$};
\draw (1.5,-0.2) node[below] {\huge $...$};
\draw (1.75,-0.2) node[below] {\large $t_{k-1}$};
\draw (2.5,-0.2) node[below] {\large $t_{k}$};

\draw[line width=2pt] plot coordinates {(0,1) (1.75, 1)};
\draw[line width=2pt] plot coordinates {(0,2) (0.5, 1)};
\draw[line width=2pt] plot coordinates {(0,3) (1.25, 1)};
\draw[line width=2pt] plot coordinates {(0,4) (1.75, 1)};
\draw[dashed, thick] plot coordinates {(0,5) (2.5, 1)};
\draw[dashed, thick] plot coordinates {(1.75,1) (2.5, 1)};

\draw[decorate,decoration={brace,amplitude=5pt},thick] 
    (-0.05,1)  -- (-0.05,4) node[black, midway, xshift=-0.6cm]{\large $\tree_{k-1}$};
\draw[decorate,decoration={brace,amplitude=5pt},thick] 
    (-0.35,1)  -- (-0.35,5) node[black, midway, xshift=-0.6cm]{\large $\tree_{k}$};

\end{tikzpicture}
}
\end{center}
\caption{Example of a sequence of nested caterpillar trees.}
\label{fig:caterpillar}
\end{figure}

\begin{cor}[Nested sequence: consistency of $\mle$]
\label{cor:consistency-nested}
Let $\treeseq$ be a nested sequence such that the height $T_k$ goes to infinity. 
Then $\treeseq$ satisfies Assumption~\ref{assumption:unbounded} and
the MLE for $\mu$ is consistent on $\treeseq$.
\end{cor}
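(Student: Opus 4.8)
The plan is to verify Assumption~\ref{assumption:unbounded} directly from the definition of a nested sequence and then invoke the consistency criterion of Theorem~\ref{thm:criterion-consistency}, reducing everything to a statement about how the cutset sizes grow as leaves are appended. First I would check Assumption~\ref{assumption:unbounded}. Since $n_k = k$ by definition of a nested sequence, both $n_k \leq n_{k+1}$ and $n_k \to +\infty$ are immediate, and $T_k \to +\infty$ holds by hypothesis. The only point needing an argument is the monotonicity $T_k \leq T_{k+1}$. For this I would use that $\tree_{k+1}$ contains an isometric copy of $\tree_k$ on its first $k$ leaves: all pairwise distances among those leaves are preserved, so their most recent common ancestor sits at the same depth $T_k$ (from the leaves) in both trees. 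The root of $\tree_{k+1}$, being the common ancestor of all $k+1$ leaves, is an ancestor of (or equal to) that point, hence at depth at least $T_k$, giving $T_{k+1} \geq T_k$.

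Next, by Theorem~\ref{thm:criterion-consistency} it suffices to show that for every fixed $s \in (0,+\infty)$ we have $\liminf_k |\tilde{\pi}^k_s| = +\infty$. The key structural fact I would isolate is a bookkeeping rule for how $|\tilde{\pi}^k_s|$ changes when leaf $k$ is appended. Writing $a_k$ for the depth (distance from the leaves) of the branch point at which leaf $k$ attaches to $\tree_{k-1}$, a short case analysis on the position of the level $s$ relative to this branch point shows that appending leaf $k$ leaves $|\tilde{\pi}^k_s|$ unchanged when $a_k \leq s$ and increases it by exactly one when $a_k > s$. In particular $|\tilde{\pi}^k_s|$ is non-decreasing in $k$, so it suffices to prove that infinitely many leaves satisfy $a_k > s$.

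Finally I would establish this last fact by contradiction, using $T_k \to +\infty$. Suppose only finitely many leaves attach at depth exceeding $s$, and pick $K$ beyond the last such leaf and also large enough that $T_K > s$. Then for every $k > K$ the new leaf attaches at depth $a_k \leq s < T_K \leq T_{k-1}$, hence strictly below the root of $\tree_{k-1}$, so the root—and therefore the height—is unchanged. By induction $T_k = T_K$ for all $k \geq K$, contradicting $T_k \to +\infty$. Thus infinitely many leaves have $a_k > s$, so the non-decreasing sequence $(|\tilde{\pi}^k_s|)_k$ is unbounded and $\liminf_k |\tilde{\pi}^k_s| = +\infty$, which is exactly condition~\eqref{eq:consistency-condition}. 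Applying Theorem~\ref{thm:criterion-consistency} then yields consistency of $\mle$ on $\treeseq$.

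I expect the main obstacle to be the middle step, namely pinning down precisely the rule governing the change in cutset size, since one must correctly handle the regimes where the level $s$ lies above the attachment point, between it and the next node, or below it, together with the boundary case in which appending a leaf creates a new root. Once that rule is in place, the height argument is the conceptual heart but is short: it simply translates ``the tree keeps getting taller'' into ``deep attachments occur infinitely often,'' which is precisely what feeds the cutset criterion.
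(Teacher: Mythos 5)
Your proposal is correct and follows essentially the same route as the paper: both verify the cutset criterion of Theorem~\ref{thm:criterion-consistency} by showing that $T_k \to +\infty$ forces infinitely many leaves to attach at depth greater than $s$, each contributing a new, persistent point to $\tilde{\pi}^k_s$. The paper does the bookkeeping by directly taking the subsequence of height-increasing leaves (whose level-$s$ crossing points are distinct), while you get the same conclusion via monotonicity of $|\tilde{\pi}^k_s|$ plus a contradiction argument; this is a presentational difference, not a different method.
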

\begin{proof}
Let $k_j$ be the subsequence such that 
$T_{k_{j+1}} > T_{k_j}$ 
for every $j$ 
and $T_i = T_{k_j}$ 
for all $i = k_j + 1, \ldots, k_{j+1}-1$.
Then, for all
$s \in (0,+\infty)$, as $k$ goes to $+\infty$
$\pi^k_s$ eventually contains
all leaves $k_j$ such that $T_{k_j} \geq s$.
Since $T_k \to +\infty$, 
the result follows.  
\end{proof}

If one is modeling the growth
of the tree of life in time, instead of modeling increased sampling of 
contemporary species, one obtains a growing
sequence as follows. Let $\tree_0$ be a rooted
infinite tree of bounded degree, with branch lengths and no leaves.
Think of the branches of $\tree_0$ as a continuum
of {\em points} whose distance from the endpoints grows
linearly. Then, for $t \geq 0$, 
we define $\mathcal{B}_t(\tree_0)$ as the tree made of
the set of points of $\tree_0$ at distance at most $t$
from the root.
\begin{definition}[Growing sequence]
A sequence of trees $(\tree_k)_k$ is a 
{\em growing sequence} of trees if there is 
an infinite tree $\tree_0$ as above and
an increasing sequence of non-negative reals
$(t_k)_k$ such that $\tree_k$ is isomorphic
to $\mathcal{B}_{t_k}(\tree_0)$ as an ultrametric.
\end{definition}

\begin{cor}[Growing sequence: consistency of $\mle$]
Let $(\tree_k)_k$ be a growing sequence such that the height $T_k = t_k$ goes to infinity.
Then $\treeseq$ satisfies Assumption~\ref{assumption:unbounded} and
the MLE for $\mu$ is consistent on $(\tree_k)_k$.
\end{cor}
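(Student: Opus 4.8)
The plan is to verify the two hypotheses that, together, deliver consistency: first that the growing sequence $\treeseq = (\tree_k)_k$ satisfies Assumption~\ref{assumption:unbounded}, and second that it satisfies the consistency criterion~\eqref{eq:consistency-condition} of Theorem~\ref{thm:criterion-consistency}, namely $\liminf_k |\tilde\pi^k_s| = +\infty$ for every $s \in (0,+\infty)$. The organizing device is the \emph{width function} $w(r)$ of the fixed infinite tree $\tree_0$, defined as the number of points of $\tree_0$ at distance exactly $r$ from the root (equivalently, the number of branches of $\tree_0$ crossing level $r$). Since $\tree_k \cong \mathcal{B}_{t_k}(\tree_0)$ as an ultrametric, both quantities of interest are read off directly from $w$: the leaf count is $n_k = w(t_k)$, and the cutset at time $s$ away from the leaves sits at distance $t_k - s$ from the root, so $|\tilde\pi^k_s| = w(t_k - s)$ as soon as $t_k > s$.

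I would first record the monotonicity built into $w$. Because $\tree_0$ has no leaves, every point has at least one child edge, hence every point at distance $r$ has a descendant at each larger distance $r'$; the map sending a point at distance $r'$ to its unique ancestor at distance $r$ is therefore a surjection onto the level-$r$ points, giving $w(r) \le w(r')$ for $r \le r'$. Combined with the fact that $T_k = t_k$ is increasing with $t_k \to +\infty$ (hypothesis), this yields $n_k = w(t_k) \le w(t_{k+1}) = n_{k+1}$ together with $T_k \le T_{k+1}$ and $T_k \to +\infty$. The only remaining item for Assumption~\ref{assumption:unbounded} is $n_k \to +\infty$, i.e. $w(r) \to +\infty$ as $r \to +\infty$.

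Granting $w(r) \to +\infty$, the consistency criterion is then immediate. Fix $s \in (0,+\infty)$. For all $k$ large enough that $t_k > s$ we have $|\tilde\pi^k_s| = w(t_k - s)$, and since $t_k - s \to +\infty$ and $w$ is non-decreasing, $\liminf_k |\tilde\pi^k_s| = \lim_{r\to+\infty} w(r) = +\infty$. As this holds for every $s$, Theorem~\ref{thm:criterion-consistency} gives consistency of the MLE on $\treeseq$. (Equivalently, one may bound $|\tilde\pi^k_s| = w(t_k - s) \ge w(t_{k'}) = n_{k'}$ for any index $k'$ with $t_{k'} \le t_k - s$, and then let $k' \to \infty$.)

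The single substantive step, and where I expect the real content to lie, is $w(r) \to +\infty$. Note that it is needed twice above: once for the unboundedness $n_k \to +\infty$ in Assumption~\ref{assumption:unbounded}, and once for the consistency criterion, so both conclusions of the corollary are driven by the same fact. This is precisely the step that uses $\tree_0$ being \emph{genuinely branching to infinity}: a single infinite ray (infinite, bounded degree, no leaves) has $w \equiv 1$ and hence $n_k \equiv 1$ with an inconsistent MLE, so unbounded width must come from the intended ``growth'' of $\tree_0$ rather than from the three qualitative properties alone. Under that hypothesis the argument closes; the remaining pieces — monotonicity of $w$ and the identification of the cutsets $\tilde\pi^k_s$ with levels of $\tree_0$ — are routine.
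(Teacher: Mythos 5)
Your argument is essentially the paper's own proof, rephrased through the width function $w$: the paper fixes $s$, picks $k'_L$ with $n_{k'_L}\geq L$, then $k''_L$ with $T_{k''_L}\geq T_{k'_L}+s$, and concludes $|\tilde{\pi}^k_s|\geq L$ for all $k\geq k''_L$ --- which is precisely your parenthetical remark that $|\tilde{\pi}^k_s| = w(t_k-s)\geq w(t_{k'})=n_{k'}$ whenever $t_{k'}\leq t_k-s$, with the monotonicity of $w$ coming from the absence of leaves in $\tree_0$, exactly as you argue. Then Theorem~\ref{thm:criterion-consistency} is invoked in both cases.

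Where you go beyond the paper is in isolating the unboundedness of the width, and your concern there is legitimate. The fact that $w(r)\to\infty$, i.e.\ $n_k\to\infty$, genuinely does not follow from ``infinite, bounded degree, no leaves'': the single infinite ray satisfies all three, yields $n_k\equiv 1$, and the MLE (a single observation distributed as ${\cal N}(\mu,\gamma)$) has constant variance $\gamma$, so as literally stated the corollary --- including its claim that Assumption~\ref{assumption:unbounded} holds --- fails for that $\tree_0$. The paper's proof passes over this point: its very first step, ``let $k'_L$ be the smallest $k$ such that $n_k \geq L$,'' is well defined for every $L$ only if $\sup_k n_k = \infty$, which is exactly the part of Assumption~\ref{assumption:unbounded} the corollary purports to establish. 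So the gap you identified lies in the paper's statement and proof, not in your argument. It is harmless in the intended applications --- for the Yule sequence of Corollary~\ref{cor:mu-yule} the number of lineages tends to infinity almost surely --- and it is repaired either by adding $n_k\to\infty$ as an explicit hypothesis, or by requiring every vertex of $\tree_0$ to have out-degree at least $2$: then there are at least $2^h$ vertices at graph depth $h$, finitely many by bounded degree and hence all within some finite distance $R_h$ of the root, each contributing a distinct crossing of every level $r>R_h$, so $w(r)\geq 2^h$ for $r>R_h$ and $w(r)\to\infty$. With either reading, your proof is complete and coincides with the paper's.
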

\begin{proof}
Fix $s \in (0,+\infty)$. 
For $L =1,2,\ldots$, 
let $k'_L$ be the smallest
$k$ such that $n_k \geq L$
and let $k''_L$ be the smallest 
$k > k'_L$ such that 
$T_{k} \geq T_{k'_L} + s$.
Then, for all $k \geq k''_L$,
$|\pi^k_s|\geq L$. Letting $L$
go to $+\infty$ gives the result.
\end{proof}

\begin{example}[Yule sequence]
\label{ex:yule}
Let $\tree_0$ be a tree generated by a pure-birth (Yule)
process with rate $\lambda > 0$: starting with one lineage,
each current lineage splits independently after an exponential
time with mean $\lambda^{-1}$ 
(see e.g.~\cite{semple2003phylogenetics}).
For any (possibly random) sequence
of increasing non-negative reals $(t_k)_k$
with $t_k \to +\infty$, 
$\mathcal{B}_{t_k}(\tree_0)$ (that is, 
$\tree_0$ run up to time
$t_k$), forms a growing sequence.
\end{example}
The following result is proved in Section~\ref{section:proofs-special}.
\begin{cor}[Yule model: consistency of $\mle$]
\label{cor:mu-yule}
Let $(\tree_k)_k$ be a Yule sequence with rate
$0 < \lambda < +\infty$. Then, with probability
$1$ (on the generation of $\tree_0$),
\begin{enumerate}
\item $(\mle^{(k)})_k$ is consistent.

\item If $\lambda < 2 \alpha$, $(\mle^{(k)})_k$ is $\sqrt{n_k}$-consistent.

\item If $\lambda > 2 \alpha$, $(\mle^{(k)})_k$ is 
{\em not} $\sqrt{n_k}$-consistent and for all
$\epsilon > 0$ there is $0 < C', C < +\infty$
such that
$$
C' n_k^{-2\alpha\lambda^{-1}-\epsilon}
\leq
\var_{\tree_k}[\mle^{(k)}]
\leq
C n_k^{-2\alpha\lambda^{-1}+\epsilon}.
$$
\end{enumerate}
\label{cor:yule}
\end{cor}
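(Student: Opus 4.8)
The plan is to reduce all three parts to the general results of Theorems~\ref{thm:criterion-consistency},~\ref{thm:rate-supercritical}, and~\ref{thm:subcritical} by computing the growth quantities of a Yule tree. Concretely, I will show that, almost surely on the realization of $\tree_0$,
$$
\br = \lgr = \ugr = \unigr = \lambda,
$$
so that in particular $\gr = \lambda$ exists. Granting this, the three parts follow quickly. Part~1 is immediate, since a Yule sequence is a growing sequence (Example~\ref{ex:yule}) and consistency for growing sequences was already established above. For Part~2, when $\lambda < 2\alpha$ we have $\ugr = \unigr = \lambda < 2\alpha$, so case~2 of Theorem~\ref{thm:subcritical} gives $\var_{\tree_k}[\mle^{(k)}] = O(n_k^{-1})$, i.e. $\sqrt{n_k}$-consistency. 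For Part~3, when $\lambda > 2\alpha$ we have $\ugr = \lambda > 2\alpha$, so the first part of Theorem~\ref{thm:rate-supercritical} already rules out $\sqrt{n_k}$-consistency; since moreover $\br = \lambda > 2\alpha$, case~1 applies and~\eqref{eq:br-above-upper} with $\lgr = \ugr = \lambda$ yields $C' n_k^{-2\alpha/(\lambda - \epsilon)} \leq \var_{\tree_k}[\mle^{(k)}] \leq C n_k^{-2\alpha/(\lambda + \epsilon)}$. As $\epsilon \downarrow 0$ the exponents $2\alpha/(\lambda\mp\epsilon)$ converge monotonically to $2\alpha\lambda^{-1}$, so for any target $\epsilon > 0$ a small enough choice of the parameter in~\eqref{eq:br-above-upper} (together with $n_k \geq 1$) gives the claimed $n_k^{-2\alpha\lambda^{-1}\mp\epsilon}$.

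The substance of the proof is therefore the computation of the growth quantities. First I would establish $\gr = \lambda$ via the classical martingale convergence for the Yule process: writing $N_t$ for the number of lineages at time $t$, the process $M_t = e^{-\lambda t} N_t$ is a nonnegative martingale converging almost surely to a limit $W$ with $W > 0$ a.s. (the Yule process never dies out). Hence $\log n_k / T_k = \log N_{t_k}/t_k \to \lambda$ along $t_k \to +\infty$, giving $\lgr = \ugr = \lambda$.

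Next I would prove $\br = \lambda$. The upper bound $\br \leq \lgr = \lambda$ is free from the general inequality $\br \leq \lgr \leq \ugr$. For the lower bound, fix $\Lambda < \lambda$; I must show that $\inf_{k,\pi}\sum_{x\in\pi} e^{-\Lambda\delta_k(\rho,x)}$ stays bounded away from $0$ (for a growing sequence this infimum is essentially over cutsets of the infinite tree $\tree_0$). This is the continuous-time Yule analogue of Lyons' theorem that a supercritical Galton–Watson tree has branching number equal to its mean offspring number on the survival event. I would obtain it by the standard second-moment/flow argument — exhibiting a unit flow to infinity of finite $\Lambda$-energy, built from the limit variables $W$ attached to subtrees — or equivalently via the percolation characterization, noting that retaining each point of $\tree_0$ with weight $e^{-\Lambda\delta}$ yields a surviving percolation precisely because $\Lambda < \lambda$. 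The same self-similarity underlies $\unigr = \lambda$: the subtree below any point $x$ of age $T_k(x) = s$ is again a Yule tree, so $n_k(x) = e^{(\lambda + o(1))s}$; the delicate point is to make this uniform over all $O(n_k)$ subtrees at once, which I would handle by a union bound combined with large-deviation estimates for the Yule descendant count, with the truncation $T_k(x)\lor M$ and the limit $M\to+\infty$ suppressing the contribution of shallow subtrees.

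The main obstacle is exactly this uniform control: both $\br \geq \lambda$ and $\unigr = \lambda$ require almost-sure estimates holding simultaneously across all cutsets, respectively all subtrees, rather than for a single fixed one. The branching-number lower bound is the technical crux, being the Yule counterpart of Lyons' branching-number theorem and needing to be argued uniformly over the infimum defining $\br$; once it is in hand, the reduction to Theorems~\ref{thm:rate-supercritical} and~\ref{thm:subcritical} is mechanical.
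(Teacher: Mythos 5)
Your reduction to Theorems~\ref{thm:criterion-consistency},~\ref{thm:rate-supercritical} and~\ref{thm:subcritical} is the same as the paper's, and parts 1 and 3 of your argument are sound. For the key lower bound $\br \geq \lambda$ the paper proceeds differently from you: it discretizes time into steps of length $\epsilon$, observes that the resulting skeleton is a Galton--Watson tree with mean offspring $e^{\lambda\epsilon}$, maps an arbitrary cutset of $\tree_0$ to a cutset of the Galton--Watson tree by rounding points toward the root, and then quotes the known result (from Peres's lecture notes) that a Galton--Watson tree has branching number $\log m$ almost surely. This sidesteps the uniform-over-cutsets difficulty you identify, which your direct continuous-time flow construction (with capacities built from the martingale limits $W_x$) would have to confront by hand. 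Both routes are viable, but the paper's is shorter because the uniformity is already packaged in the cited discrete theorem.

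There is, however, a genuine error in your part 2: the claim $\unigr = \lambda$ is false; for a Yule tree one has $\unigr = +\infty$ almost surely. Fix $M$ and an age $s \in (0,M]$. In $\tree_k$ there are roughly $e^{\lambda(T_k - s)}$ points of age $s$, and the leaf counts of the subtrees below them are i.i.d.\ geometric with mean $e^{\lambda s}$; the maximum of that many i.i.d.\ geometrics is of order $e^{\lambda s}\,\lambda(T_k - s)$, and a Borel--Cantelli argument shows it exceeds a constant times the logarithm of their number almost surely for all large $k$. Hence there are points $x$ with $T_k(x)\lor M = M$ fixed but $\log n_k(x) \geq \log T_k - O(1) \to \infty$, so that $\sup_{k,\,x\in\tree_k} \log n_k(x)/(T_k(x)\lor M) = +\infty$ for every fixed $M$. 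Your proposed union bound plus large-deviation estimates cannot repair this: the deviations genuinely occur, the $M$-truncation caps only the denominator, and the numerator for the worst shallow subtree grows like $\log\log n_k$ with $k$. Consequently case 2 of Theorem~\ref{thm:subcritical} never applies to a Yule sequence, and your proof of part 2 must fall back on case 1 (using $\br = \ugr = \lambda < 2\alpha$), which yields only $\var_{\tree_k}[\mle^{(k)}] = O\left(n_k^{-(1-\epsilon)}\right)$ for every $\epsilon > 0$ rather than $O\left(n_k^{-1}\right)$. (For what it is worth, this weaker bound is also all that the paper's own proof establishes for part 2, since it too proves only $\br = \ugr = \lambda$.)
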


We also apply the estimators $\hat\alpha$ and $\hat\gamma$ to the Yule model.
For simplicity, we take the sequence of 
times at which new speciation events occur (although this assumption is not crucial).
For $k \geq 1$, let $t_k$ be the first time 
at which $\tree_0$
has $k+1$ lineages. Then $n_k = k$ for all $k$
and $t_k \to +\infty$ so that
Assumption~\ref{assumption:unbounded} is
satisfied. The following result is proved in Section~\ref{section:proofs-special}.
\begin{cor}[Yule model: estimation of $\alpha$ and $\gamma$]
\label{cor:yule_alpha_gamma}
Let $(\tree_k)_k$ be a Yule sequence with $n_k=k$ as above. Then
Assumption~\ref{assump:alpha} is satisfied 
asymptotically, 
and hence
$|\hat\alpha_k -\alpha| = O_p(n_k^{-1/2})$ 
and $|\hat\gamma_k -\gamma| = O_p(n_k^{-1/2})$.
\end{cor}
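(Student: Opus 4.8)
The plan is to reduce the corollary entirely to verifying Assumption~\ref{assump:alpha} for the Yule sequence, and then to quote Theorem~\ref{thm:main-alpha-rate}. Indeed, that theorem already produces a $\sqrt{n_k}$-consistent estimator $(\hat\alpha_k,\hat\gamma_k)$ for \emph{every} deterministic sequence of ultrametric trees obeying Assumptions~\ref{assumption:unbounded} and~\ref{assump:alpha}. Assumption~\ref{assumption:unbounded} is immediate for a Yule sequence since $n_k=k\to\infty$ and $t_k\to+\infty$ almost surely. So the only real content is the claim that the linear-sized-bands condition holds almost surely on the generation of $\tree_0$; the asserted $O_p$ rates are then understood conditionally on $\tree_0$, i.e. almost surely over the tree and $O_p$ over the OU process given the tree.

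To check the bands, let $Z(t)$ be the number of lineages present at time $t$ in $\tree_0$, so that $Z(0)=1$ and $n_k=Z(t_k)$. The key reformulation is that counting internal nodes \emph{by age} is the same as counting branching events of $\tree_0$ in a backward time window: a node of $\tree_k$ has age (height from the leaves) in $(c,c')$ exactly when its split in $\tree_0$ occurred at a time in $(t_k-c',\,t_k-c)$, and since each split raises the lineage count by one,
\begin{equation*}
n_k(c,c') = Z(t_k-c) - Z(t_k-c').
\end{equation*}
I would then invoke the classical martingale convergence for the Yule process: $Z(t)\,e^{-\lambda t}\to W$ almost surely, where $W$ is exponential with mean one, and in particular $W>0$ almost surely.

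The heart of the argument is that the random limit $W$ cancels. Since $t_k\to+\infty$, also $t_k-c\to+\infty$ and $t_k-c'\to+\infty$, so that $Z(t_k-c)e^{-\lambda(t_k-c)}\to W$, $Z(t_k-c')e^{-\lambda(t_k-c')}\to W$ and $Z(t_k)e^{-\lambda t_k}\to W$ along the same realization. Dividing,
\begin{equation*}
\frac{n_k(c,c')}{n_k} = \frac{Z(t_k-c)}{Z(t_k)} - \frac{Z(t_k-c')}{Z(t_k)} \xrightarrow{\ \mathrm{a.s.}\ } e^{-\lambda c} - e^{-\lambda c'} > 0 .
\end{equation*}
Fixing any two disjoint intervals $0<c_1<c_1'<c_2<c_2'<\infty$ and choosing $\beta$ strictly below $\min_{i}\{e^{-\lambda c_i}-e^{-\lambda c_i'}\}$, this yields $n_k(c_i,c_i')\geq \beta n_k$ for all $k$ large enough, almost surely. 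Hence Assumption~\ref{assump:alpha} holds a.s., and Theorem~\ref{thm:main-alpha-rate} gives the stated rates.

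The one point requiring care — and the main obstacle — is the random stopping boundary $t_k=t_k(\tree_0)$ in the backward window: the age of a node is measured from the current (random) time $t_k$ rather than from a fixed horizon, so a direct computation of expected node ages only \emph{suggests} the linear scaling $n_k(c,c')\asymp n_k$. The clean almost-sure cancellation above is what makes it rigorous with no further concentration estimate, precisely because the same limit $W$ governs $Z$ at $t_k$, $t_k-c$ and $t_k-c'$ simultaneously. It is therefore essential that $W>0$ almost surely (guaranteed since $\tree_0$ is pure-birth and never goes extinct) and that the convergence $Z(t)e^{-\lambda t}\to W$ holds along continuous $t\to+\infty$, so that it may be evaluated along the random sequences $(t_k-c)_k$ and $(t_k-c')_k$.
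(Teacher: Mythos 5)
Your proof is correct, and it reaches the conclusion by a genuinely different route than the paper. The paper stays with the discrete skeleton of the Yule process: the inter-speciation times $\tau_i = t_i - t_{i-1}$ are independent exponentials with rates $i\lambda$, so the age $T^k_i = t_k - t_i$ of the $i$-th internal node has mean roughly $\lambda^{-1}\log(k/i)$ and variance $O(1/i)$; Chebyshev's inequality then shows that, for fixed $0<\sigma_2'<\sigma_2<\sigma_1'<\sigma_1<1$, the roughly $(\sigma_\iota-\sigma_\iota')k$ nodes with indices between $\lfloor\sigma_\iota' k\rfloor$ and $\lfloor\sigma_\iota k\rfloor$ land in two disjoint age bands with probability $1-O(k^{-1})$ --- this quantitative, with-high-probability statement is exactly the ``asymptotically'' in the corollary. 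You instead invoke the continuous-time martingale limit $Z(t)e^{-\lambda t}\to W$ almost surely with $W>0$ a.s., and let $W$ cancel in the ratio $n_k(c,c')/n_k \to e^{-\lambda c}-e^{-\lambda c'}$; the identity $n_k(c,c')=Z(t_k-c)-Z(t_k-c')$ and the evaluation of the pathwise limit along the random times $t_k-c$ are both legitimate (ties $T^k_i\in\{c,c'\}$ have probability zero, and a.s. convergence in continuous $t$ may be read along any sequence tending to infinity on the same realization). What your route buys: an almost-sure, eventual verification of Assumption~\ref{assump:alpha} for \emph{any} prescribed bands, with an interpretable limiting fraction of nodes per band, and a visibly cleaner argument; the cost is importing the classical branching-process limit theorem (available in the paper's reference of Athreya and Ney). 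What the paper's route buys: it is self-contained (nothing beyond Chebyshev) and quantitative --- the explicit $1-O(k^{-1})$ failure probability at each $k$ feeds directly into an unconditional $O_p$ bound, whereas your a.s. statement needs the (correct) extra remark you make, namely that the bounds in the proof of Theorem~\ref{thm:main-alpha-rate} are uniform over trees satisfying the bands condition with fixed constants $c_i,c_i',\beta$, so that conditioning on $\tree_0$ and integrating recovers the unconditional rates. One cosmetic point: under the paper's convention $n_k=k$ while $Z(t_k)=k+1$, so $n_k=Z(t_k)-1$ rather than $Z(t_k)$; this off-by-one is of course harmless in all the limits.
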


\section{Application to experimental design for trait evolution studies}
\label{sec:app}

Thanks to recent developments in technology, scientists have reconstructed several large phylogenetic trees with thousands of species such as 
trees containing 4507 mammal species \cite{binindaEmonds-etal07} 
and 9993 bird species \cite{jetz12}. 
However, researchers may not be able to collect trait data 
from all species, due to limited resources and funding. Thus, many studies are only based on a subset of species in the available tree. For example, to study the evolution of body size in mammals, Cooper and Purvis \cite{cooperPurvis10} 
used 3473 of the 4507 species in their tree, and Venditti et al. \cite{vendittiMeadePagel11} incorporated 3185 species in their analysis. 
When considering extra data collection, an important question arises: can additional species increase the precision of our estimates? 
Our theoretical results help answering this question for the OU tree model:
\begin{enumerate}	
	\item If $\hat \lambda \ll 2 \hat \alpha$, additional species tend to be very informative for estimating $\mu$ (Corollary \ref{cor:mu-yule}).
	\item If $\hat \lambda \gg 2 \hat \alpha$, additional species that do not increase tree height tend to be non-informative for estimating $\mu$ (Corollary \ref{cor:mu-yule}). 
	\item When $\hat \lambda$ is around $2 \hat \alpha$, it is not clear whether additional species are informative for estimating $\mu$.
	\item Additional species tend to be informative for estimating $\alpha$ and $\gamma$ (Corollary \ref{cor:yule_alpha_gamma}). 
\end{enumerate}


\paragraph*{Example:~~}  
In \cite{vendittiMeadePagel11}, body size evolution was studied using 
3185 mammal species. Would it be worth the effort to collect data for the 
remaining 1322 species in the tree, to increase 
the precision of estimating $\mu$? 
To answer this question about sampling utility, we first need to estimate the speciation rate $\lambda$ and the selection strength $\alpha$.  The 4507-species mammal tree was rescaled to have height $1$ and its speciation rate was estimated to be $11.83$ using
maximum likelihood (\texttt{yule} function in the R package \texttt{ape} \cite{ape}). 
We also estimated $\hat\alpha=0.01$ using maximum likelihood (\texttt{phylolm} function in the R package \texttt{phylolm} \cite{HoAne14sb}).  
Note that the tree formed by the 3185 species has the same height as the full tree with all 4507 species. Since $\hat \lambda \approx 11.83 \gg 0.02 \approx 2 \hat \alpha$, additional species tend to be non-informative 
and our recommendation is to stop data collection. 
Our conclusion is consistent with simulations in \cite{HoAne13,ho2014intrinsic}, which 
showed that additional species are non-informative for estimating $\mu$ 
if they do not increase tree height, 
when $\alpha$ is low. 
Our recommendation here specifies the critical value of $\alpha$ below which
additional sampling is of little utility.

To further demonstrate the 
relationship between sampling utility and $\alpha$ (or $\lambda$)
at fixed tree height, 
we simulated data according to the OU model along the 4507-species mammal tree with $\mu = 0$, $\gamma = 1$, and several values of $\alpha$ ranging from $0.01$ to $300$. For every set of parameters, we simulated 2000 data sets using
the \texttt{rTrait} function (R package \texttt{phylolm}). Then, $\mle$ 
was computed for each data set using the \texttt{phylolm} function. 
The sample variance of $\mle$ (Figure \ref{fig:fig3}) was found to be
about $e^{-2 \alpha}$ when $2 \alpha \ll \hat \lambda$, and 
about $1/n = 1/4507$ when $2 \alpha \gg \hat \lambda$. 

\begin{figure}[!h]
\begin{center}
\includegraphics[scale=.5]{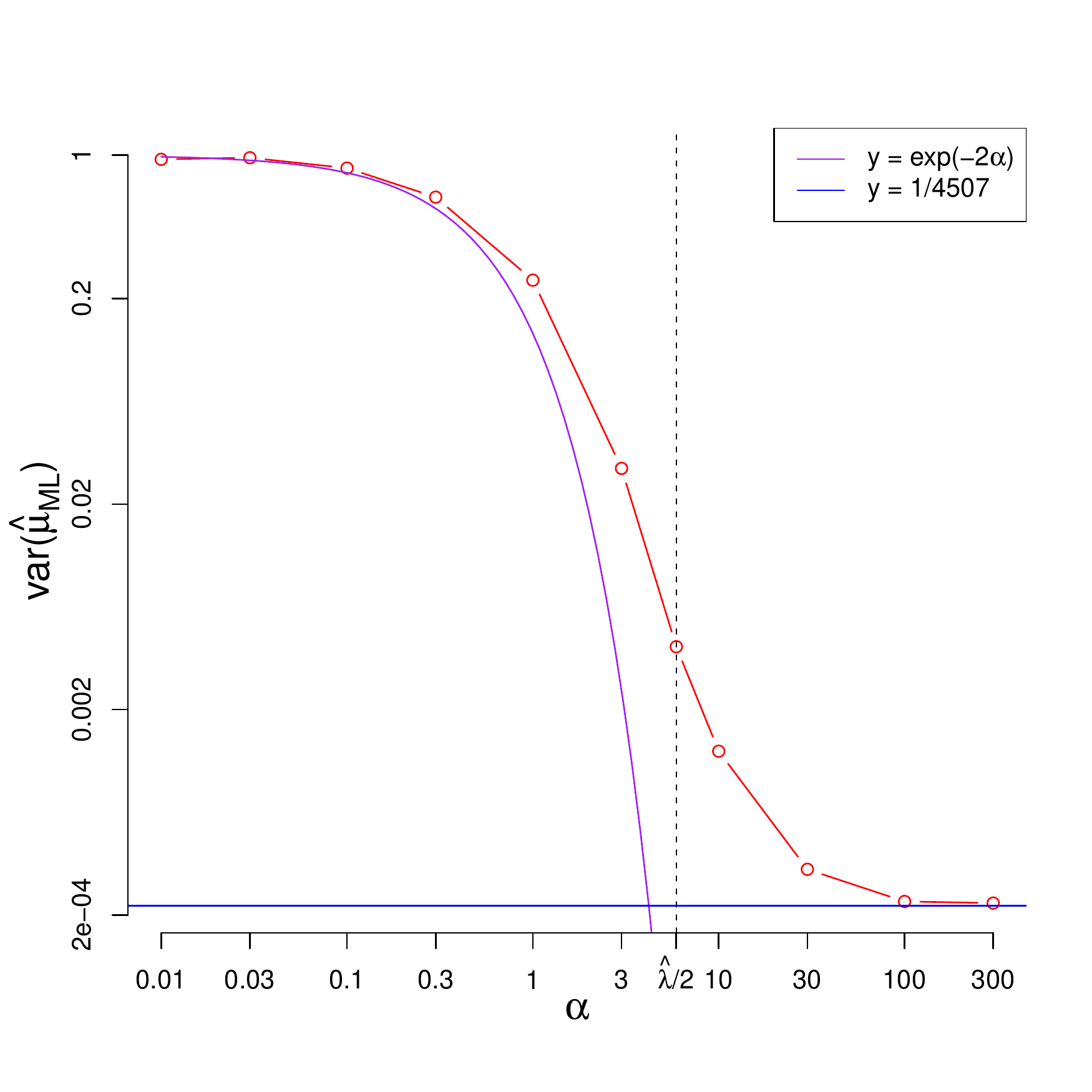}
\caption{Sample variance of $\mle$ (red points) as a function of $\alpha$,
from the simulation on the mammal tree. When $\alpha$ is small, $\text{var}(\mle) \approx e^{-2 \alpha}$ (purple line). When $\alpha$ is large, $\text{var}(\mle) \approx 1/4507$ (blue line).}
\label{fig:fig3}
\end{center}
\end{figure}

To illustrate the relationship between sampling utility and $\alpha$ (or $\lambda$)
when the tree height varies, 
we simulated 400 trees under the Yule process 
using the \texttt{sim.bdtree} function (R package \texttt{geiger} \cite{geiger}).
We used speciation rate $\lambda = 11.83$, which was the maximum
likelihood estimate from the mammal tree. 
The tree height was varied from $0.05$ to $1$ and we simulated $20$ trees for each tree height. We calculated $\text{var}(\mle)$ corresponding to three fixed values of $\alpha$ $(0.1,\lambda/2,30)$ using \eqref{eq:varmuhat} 
and the \texttt{three.point.compute} function (R package \texttt{phylolm}). The results showed that (Figure \ref{fig:fig4}) when $2 \alpha \ll \lambda$, $e^{-2 \alpha T}$ approximates $\text{var}(\mle)$ better than $1/n$. On the other hand, when $2 \alpha \gg \lambda$, $1/n$ is a better approximation.
 
\begin{figure}[!h]
\begin{center}
\includegraphics[scale=.7]{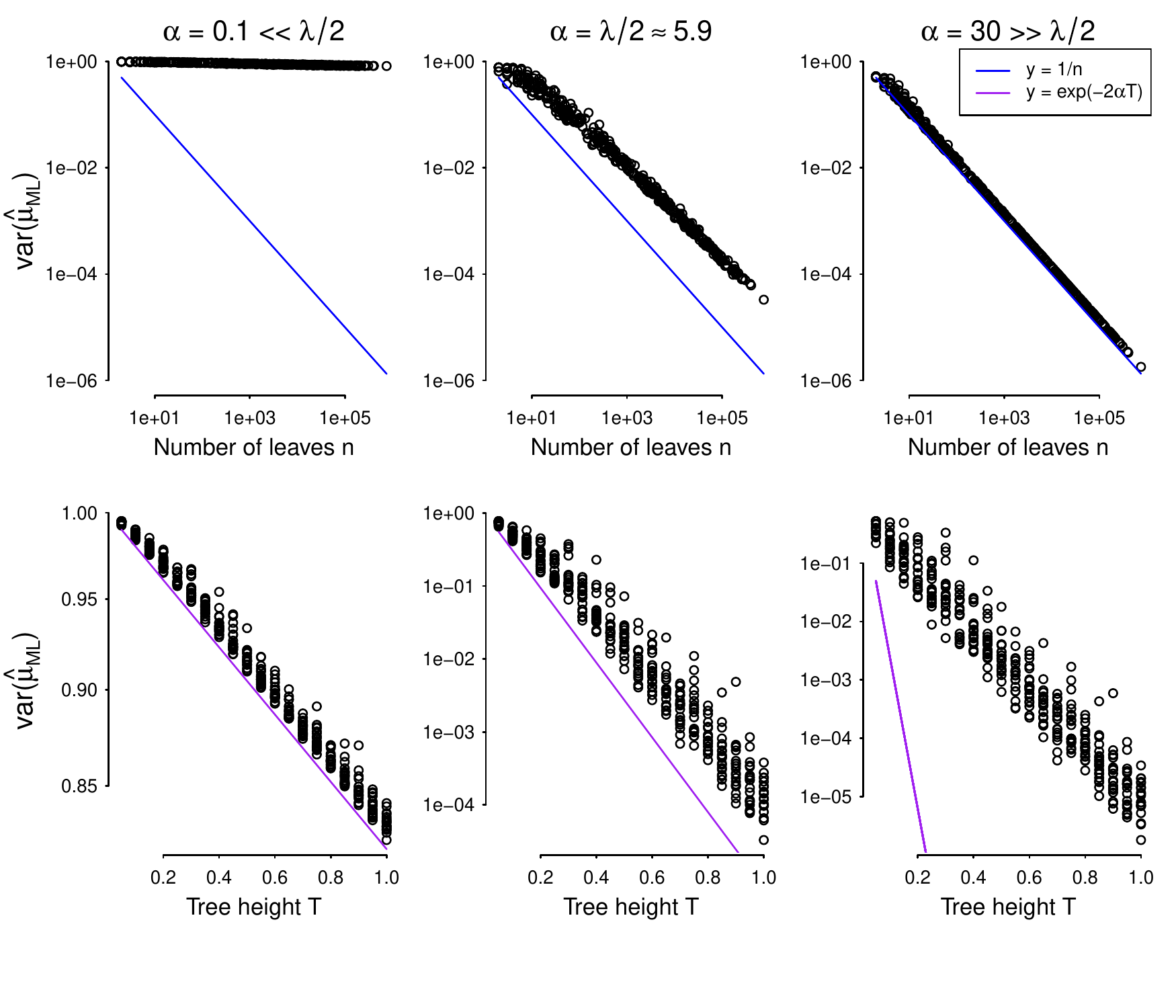}
\caption{Variance of $\mle$ on ramdom trees, simulated
under the Yule process. Top: $\text{var}(\mle)$ against number of leaves $n$.
Bottom: $\text{var}(\mle)$ against tree height $T$. 
The true value of $\alpha$ is either small (0.1, left), or $\lambda/2$ (5.9, middle), or large (30, right).}
\label{fig:fig4}
\end{center}
\end{figure}

Taken together, our results show that when $2 \alpha \ll \hat \lambda$, 
the variance of $\mle$ depends on the tree height, not the sample size. 
So, additional sampling that does not increase tree height is not recommended. 
On the other hand, when $2 \alpha \gg \hat \lambda$ the variance of $\mle$ 
is of order $1/n$, as if we had $n$ independent samples. 
In this case additional species are very informative, and additional sampling is 
recommended if affordable.

\section{Proofs of results for estimating $\mu$}
\label{sec:proofs}

We develop here necessary tools (Section \ref{sec:mu-tools}), then prove 
Theorem~\ref{thm:criterion-consistency} (Section \ref{sec:consistency-mle}),
Theorems~\ref{thm:rate-supercritical} and \ref{thm:subcritical} 
(Section \ref{sec:convergence-rate-mle}),
which assume that $\alpha$ is known.
Using arguments from the proofs, we also identify examples showing
that the sample mean $\overline Y$ can perform significantly worse
than $\mle$, and we show that Assumption~\ref{assumption:unbounded} 
is not sufficient in Theorem~\ref{thm:criterion-consistency} 
for the consistency of $\mle$. 
We prove an alternative sufficient condition based on the branching number
(Proposition~\ref{thm:consistency-branching} below).
In Section~\ref{section:proofs-special},
we prove Corollaries~\ref{cor:mu-yule} and~\ref{cor:yule_alpha_gamma}.
Finally, in Section~\ref{sec:sensitivity-to-alpha}
we discuss the sensitivity 
of the MLE to estimation errors on $\alpha$.

\subsection{Bounding the variance of the MLE}
\label{sec:mu-tools}

Fix an ultrametric species tree $\tree$
with leaf set $\leaves$, number of tips $n = |\leaves|$,
and root $\rho$. We also fix $\alpha > 0$.

\paragraph{A formula for the variance}
Let $\bftheta = (\theta_\ell)_{\ell\in\leaves}$,
with $\bftheta' \one  = 1$ and $\theta_\ell \in [0,1]$ for all $\ell$, 
and recall that
$
Y_\bftheta = \sum_{\ell \in \leaves} \theta_\ell Y_\ell
$
is an unbiased estimator of $\mu$.
By defining, for each branch $b$,
\begin{equation}\label{eq:flow-on-branch}
\theta_b = \sum_{\ell \in \mathscr{L}} \ind_{b \in p(\rho,\ell)}\theta_\ell,
\end{equation}
where $p(\rho,\ell)$ is the path from $\rho$ to $\ell$,
we naturally associate to the coefficients $\bftheta$
a flow on the edges of $\tree$, defined as follows.
\begin{definition}[Flow]
A {\em flow} $\bfeta$ is a mapping from the set of edges to the set of positive numbers such that, for every edge $b$, we have
$
\eta_b = \sum_{b' \in O_b}{\eta_{b'}}
$
where $O_b$ is the set of outgoing edges stemming from $b$
(with the edges oriented away from the root).
Define $\|\bfeta\| = \sum_{b \in O_\rho}{\eta_b}$.
We say that $\bfeta$ is a {\em unit flow} 
if $\|\bfeta\| = 1$.
We extend $\bfeta$ to vertices $v$ in $\tree$ by
defining $\eta_v$ as the flow on the edge entering $v$.
Similarly, for a point $x$ in $\tree$, we let $\eta_x$ 
be the flow on the corresponding edge or vertex.
\end{definition}
\noindent For every edge $b$ of $\tree$, 
we set
$
R_b = (1 - e^{-2 \alpha |b|}) e^{2 \alpha \delta(\rho,b)}
$
where $|b|$ is the length of $b$ 
and $\delta(\rho,b)$ is the length of the path from the root to $b$ (inclusive). 
\begin{prop}[Variance of $\mle$: Main formula]
\label{prop:key}
Let $\mathcal{F}$ be the set of unit flows from $\rho$ to $\leaves$. Let $E$ be the set of edges and $T$ be the height of tree $\tree$.
For any $\bftheta \in \mathcal{F}$, we have
\begin{equation}\label{eq:variance-formula}
\var[Y_\bftheta] = \gamma e^{- 2 \alpha T} \left( 1 + \sum_{b \in E}{R_b\theta^2_b} \right)
\end{equation}
so that
$\var[\mle] = 
\inf_{\bftheta \in \mathcal{F}}\  
\gamma e^{- 2 \alpha T} ( 1 + 
\sum_{b \in E}{R_b\theta^2_b}).
$
\end{prop}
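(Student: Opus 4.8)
The plan is to compute the variance of $Y_{\bftheta}$ directly via the flow decomposition, and then to recognize the MLE as the minimizer over the set of unit flows $\mathcal{F}$. The central observation is that the OU covariance structure admits a recursive description along the tree: the value at a node is obtained from the value at its parent by an OU transition over the intervening branch, and these transitions are independent across the tree given the branching structure. First I would write $Y_{\bftheta} = \sum_{\ell} \theta_\ell Y_\ell$ and express each $Y_\ell$ as the stationary root value plus the accumulated OU increments along the path $p(\rho,\ell)$. Because the increments on distinct branches are independent, the variance of $Y_{\bftheta}$ decomposes into a contribution from the root together with a sum of independent contributions, one per branch $b$, each weighted by the total flow $\theta_b$ passing through that branch as defined in \eqref{eq:flow-on-branch}.

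The key step is to identify the variance contributed by a single branch $b$. For an OU process with parameters $\alpha,\sigma$ and stationary variance $\gamma$, the conditional variance of the value at the bottom of a branch of length $|b|$, given its value at the top, is $\gamma(1 - e^{-2\alpha|b|})$. When this increment is propagated down to the leaves and then combined under the linear estimator, the OU contraction factor over the remaining time from $b$ to the leaves produces an attenuation. I expect that after carefully tracking the multiplicative factors $e^{-\alpha(\cdot)}$ along each path and using ultrametricity (so that the total root-to-leaf distance is the common height $T$), the branch contribution will collapse to exactly $\gamma e^{-2\alpha T}\, R_b\, \theta_b^2$, where $R_b = (1 - e^{-2\alpha|b|})e^{2\alpha\delta(\rho,b)}$ is precisely the factor that undoes the over-attenuation applied to an increment injected at depth $\delta(\rho,b)$ rather than at the leaves. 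Summing over branches and adding the root term $\gamma e^{-2\alpha T}$ yields \eqref{eq:variance-formula}. The cleanest way to organize this is probably an induction on the tree, or equivalently an appeal to the electrical-network interpretation mentioned in the paper, in which $R_b$ plays the role of a branch resistance and $\var[Y_{\bftheta}]$ is a weighted energy of the flow $\bftheta$.

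For the second assertion, I would invoke the characterization of $\mle$ as the best linear unbiased estimator (the Gauss--Markov property stated in the excerpt): among all unbiased linear estimators $Y_{\bftheta}$ with $\bftheta'\one = 1$, the MLE minimizes the variance. Since every such $\bftheta$ induces, via \eqref{eq:flow-on-branch}, a unit flow in $\mathcal{F}$, and conversely every unit flow arises from such a coefficient vector, the infimum of $\var[Y_{\bftheta}]$ over $\bftheta \in \mathcal{F}$ equals $\var[\mle]$. Combining this with \eqref{eq:variance-formula} gives the stated energy-minimization formula.

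The main obstacle will be step two: getting the branch-by-branch bookkeeping of the attenuation factors exactly right so that the messy products of $e^{-\alpha(\cdot)}$ terms reduce cleanly to $\gamma e^{-2\alpha T}R_b\theta_b^2$. The delicate points are that an increment injected on branch $b$ is shared by all leaves descending through $b$ (hence the appearance of $\theta_b$ rather than individual $\theta_\ell$), and that the stationarity of the root value must be handled so that it contributes the isolated additive constant $1$ inside the parenthesis rather than being absorbed into the branch sum. Once the single-branch variance is pinned down, the independence of increments across branches makes the summation routine, and the minimization reduces to a direct application of Gauss--Markov.
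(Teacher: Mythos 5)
Your proposal is correct, but it takes a genuinely different route from the paper's. The paper works purely algebraically with the covariance entries: it expands $\var[Y_\bftheta]=\gamma\sum_{v,w\in\leaves}\theta_v\theta_w\,e^{-2\alpha(T-\delta(\rho,v\wedge w))}$, invokes the telescoping identity $e^{2\alpha\delta(\rho,u)}-1=\sum_{b\in p(\rho,u)}R_b$, exchanges the order of summation, and uses the factorization $\ind_{b\in p(\rho,v\wedge w)}=\ind_{b\in p(\rho,v)}\ind_{b\in p(\rho,w)}$ to turn the double sum into $\sum_b R_b\theta_b^2$ (following Lemma 5.1 of the Evans--Kenyon--Peres--Schulman reference). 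You instead use the generative (innovations) representation of the process: $Y_\ell=e^{-\alpha T}Y_\rho+\mu(1-e^{-\alpha T})+\sum_{b\in p(\rho,\ell)}e^{-\alpha(T-\delta(\rho,b))}\epsilon_b$ with independent $\epsilon_b\sim\mathcal{N}\bigl(0,\gamma(1-e^{-2\alpha|b|})\bigr)$, so that summing against $\bftheta$ and swapping sums gives $Y_\bftheta=e^{-\alpha T}Y_\rho+\mu(1-e^{-\alpha T})+\sum_{b}e^{-\alpha(T-\delta(\rho,b))}\theta_b\epsilon_b$, and the formula drops out from independence, since $e^{-2\alpha(T-\delta(\rho,b))}\gamma(1-e^{-2\alpha|b|})=\gamma e^{-2\alpha T}R_b$. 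Your worry about ``messy bookkeeping'' is unfounded: the computation closes in one line once you note that ultrametricity makes the attenuation factor depend only on $\delta(\rho,b)$, and the swap $\sum_\ell\theta_\ell\sum_{b\in p(\rho,\ell)}(\cdot)=\sum_b\theta_b(\cdot)$ is exactly the flow definition \eqref{eq:flow-on-branch}; no induction or appeal to electrical networks is needed. Both proofs handle the second assertion identically via Gauss--Markov, and both share the same small gloss: the BLUE minimizes over all real weight vectors with $\bftheta'\one=1$, whereas $\mathcal{F}$ contains only nonnegative weights, so equating $\var[\mle]$ with the infimum over $\mathcal{F}$ implicitly uses that the optimal weights are nonnegative on an ultrametric tree. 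What your route buys is a transparent probabilistic reason for the form $R_b\theta_b^2$ (independent innovations shared by all descendant leaves); what the paper's buys is a self-contained matrix computation that never needs the process realization, only the covariances.
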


\noindent As detailed in~\cite{peres1999climb}, 
a species tree can be interpreted as an electrical network 
with resistance $R_b$ on edge $b$.
The minimum $\mathcal{R}_\tree$ of $\sum_{b \in E}{R_b\theta^2_b}$ over unit flows
(corresponding to the MLE) is known as 
the {\em effective 
resistance} of $\tree$, which can be interpreted in terms
of a random walk on the tree. See~\cite{peres1999climb}
for details.

\begin{proof}
The second part follows from the first, because $\mle$ is the best
unbiased linear estimator of $\mu$.
The proof of \eqref{eq:variance-formula} follows from a computation in~\cite[Lemma 5.1]{EvKePeSc:00}.
For every node $u$ of the tree, by a telescoping argument,
\begin{equation}\label{eq:telescope}
e^{2 \alpha \delta(\rho,u)} - 1 
= \sum_{b \in p(\rho,u)}{R_b}
\end{equation}
where $\delta(\rho,u)$ is the distance from $\rho$ to $u$, and $p(\rho,u)$ is the path from $\rho$ to $u$.
Denote by $v \wedge w$ the most recent common ancestor of $v$ and $w$. 
Then
\begin{eqnarray*}
\var[Y_\bftheta] &=& \gamma  \sum_{v,w \in \mathscr{L}}{\theta_v\theta_w \frac{e^{-2 \alpha T}}{e^{-2 \alpha \delta(\rho, v \wedge w)}}}
= \gamma e^{- 2 \alpha T} \sum_{v,w \in \mathscr{L}}{\theta_v\theta_w\bigg( 1 + \sum_{b \in p(\rho,v \wedge w)}{R_b} \bigg)}\\
&=& \gamma e^{- 2 \alpha T} \bigg( 1 + \sum_{b \in E}{R_b \sum_{v,w \in \mathscr{L}}{\ind_{b \in p(\rho,v \wedge w)}\theta_v\theta_w}} \bigg)\\
&=& \gamma e^{- 2 \alpha T} \left [ 1 + \sum_{b \in E}{R_b \Big( \sum_{v \in \mathscr{L}}{\ind_{b \in p(\rho,v)}\theta_v} \Big) \Big( \sum_{w \in \mathscr{L}}{\ind_{b \in p(\rho,w)}\theta_w} \Big)} \right ]\\
&=& \gamma e^{- 2 \alpha T} \bigg( 1 + \sum_{b \in E}{R_b\theta^2_b} \bigg),
\end{eqnarray*}
where the second equality follows from~\eqref{eq:telescope},
the fourth equality follows from 
$\ind_{b \in p(\rho,v \wedge w)} 
= \ind_{b \in p(\rho,v)} \ind_{b \in p(\rho,w)}$,
and the last equality follows from~\eqref{eq:flow-on-branch}.
\end{proof}

For $0 \leq t \leq T$, let $\pi_t$ be the set of points 
at distance $t$ from the root
(that is, the cutset corresponding to time $t$
away from the root). Noting that
$$
R_b = 2\alpha \int_{\delta(\rho,b)-|b|}^{\delta(\rho,b)}
e^{2\alpha s} \mathrm{d} s,
$$
we get the following convenient formula:
\begin{cor}[Variance formula: Integral form]
\label{cor:integral-form}
For any unit flow $\bftheta$ from $\rho$ to $\mathscr{L}$, we have
\[
\var[Y_\bftheta] = \gamma e^{- 2 \alpha T} \left [ 1 + 
2\alpha \int_0^T e^{2\alpha s} 
\Big(\sum_{x\in \pi_s} \theta_x^2\Big)
\mathrm{d}s \right ].
\]
\end{cor}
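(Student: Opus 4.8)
The plan is to start from the finite-sum formula in Proposition~\ref{prop:key}, namely $\var[Y_\bftheta] = \gamma e^{-2\alpha T}(1 + \sum_{b\in E} R_b\theta_b^2)$, and to rewrite the sum $\sum_{b\in E} R_b\theta_b^2$ as an integral over the distance $s$ from the root. The key ingredient is the integral representation $R_b = 2\alpha\int_{\delta(\rho,b)-|b|}^{\delta(\rho,b)} e^{2\alpha s}\,\mathrm{d}s$ recorded just before the statement, whose limits of integration are exactly the range of root-distances spanned by the edge $b$.

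First I would observe that the flow $\bftheta$ is constant along any single edge: for a point $x$ lying on edge $b$ we have $\theta_x = \theta_b$ by the way the flow was extended to the points of $\tree$. Hence, parametrizing the points of $b$ by their distance $s$ from the root, $R_b\theta_b^2 = 2\alpha\int_{\delta(\rho,b)-|b|}^{\delta(\rho,b)} e^{2\alpha s}\,\theta_x^2\,\mathrm{d}s$.

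Next I would interchange the finite sum over edges with the integral, a routine Fubini step, and reorganize the contributions by level rather than by edge. At a fixed $s\in(0,T)$, the nonzero contributions come from precisely those edges $b$ whose root-distance interval $(\delta(\rho,b)-|b|,\delta(\rho,b)]$ contains $s$ — equivalently, the edges crossing distance $s$. These are in bijection with the points of the cutset $\pi_s$, and the integrand contributes $\theta_x^2$ at each such point. This yields
\[
\sum_{b\in E} R_b\theta_b^2 = 2\alpha\int_0^T e^{2\alpha s}\Big(\sum_{x\in\pi_s}\theta_x^2\Big)\,\mathrm{d}s,
\]
and substituting into Proposition~\ref{prop:key} gives the claim.

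The only point requiring care — the ``main obstacle,'' though it is mild — is checking that reorganizing the edge-sum into a level-integral counts each point of $\tree$ at distance $s$ exactly once with the correct flow value. This holds because the half-open intervals $(\delta(\rho,b)-|b|,\delta(\rho,b)]$ attached to the edges tile each horizontal slice of the tree, so that for Lebesgue-almost-every $s$ the points at distance $s$ range without overlap over the cutset $\pi_s$ as $b$ varies; the finitely many branch-point values of $s$ form a set of measure zero and do not affect the integral.
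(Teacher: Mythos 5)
Your proof is correct and follows exactly the route the paper takes: the paper's entire (one-line) justification is the integral representation $R_b = 2\alpha\int_{\delta(\rho,b)-|b|}^{\delta(\rho,b)} e^{2\alpha s}\,\mathrm{d}s$ substituted into Proposition~\ref{prop:key}, with the edge-to-cutset reorganization left implicit. You simply spell out the Fubini step and the almost-everywhere bijection between crossing edges and points of $\pi_s$, which the paper omits.
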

\noindent As a first important application of Proposition~\ref{prop:key}
and Corollary~\ref{cor:integral-form}, we show that the
variance of the MLE of $\mu$ can be controlled by the
branching number. The result is characterized
by a transition at $\br=2\alpha$, similarly to Example~\ref{ex:first-transition-continued}.
\begin{prop}[Variance of $\mle$: Link to the branching number]
\label{prop:variance-branching}
Let $\treeseq = (\tree_k)_k$ be a tree sequence
with branching number $\br > 0$.
Then, for all $\Lambda < \br$, there is
$\ical_\Lambda$ such that
\begin{displaymath}
\var_{\tree_k}[\mle^{(k)}]
\leq 
\begin{cases}
\gamma\Big(1 + \frac{2\alpha}{\ical_\Lambda(2\alpha-\Lambda)}\Big)
e^{- \Lambda T_k}, &\text{if $\Lambda < 2\alpha$,}\\
\gamma\;\Big(1 + \frac{2\alpha T_k}{\ical_\Lambda}\Big)
e^{-2\alpha T_k}, &\text{if $\Lambda = 2\alpha$,}\\
\gamma\Big(1 + \frac{2\alpha}{\ical_\Lambda(\Lambda-2\alpha)}\Big)
e^{-2\alpha T_k}, &\text{if $\Lambda > 2\alpha$.}
\end{cases}
\end{displaymath}
\end{prop}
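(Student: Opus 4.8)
The plan is to use the variational formula $\var[\mle]=\inf_{\bftheta\in\mathcal{F}}\var[Y_\bftheta]$ from Proposition~\ref{prop:key}: because the MLE minimizes the variance over all unit flows, any single unit flow $\bftheta$ furnishes an upper bound $\var[\mle]\le\var[Y_\bftheta]$, so it suffices to construct one flow whose variance matches the claimed bounds. I would build this flow directly out of the branching-number hypothesis.

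First I would unwind the definition of $\br$. Since $\Lambda<\br$, there is a constant $\ical_\Lambda>0$ (the same one appearing in the statement) such that every cutset $\pi\in\Pi^k$ of every tree $\tree_k$ satisfies $\sum_{x\in\pi}e^{-\Lambda\delta_k(\rho,x)}\ge\ical_\Lambda$. Reading $e^{-\Lambda\delta(\rho,x)}$ as a capacity assigned to each point $x$, this is precisely a uniform min-cut lower bound. Each $\tree_k$ is a finite tree, so the max-flow min-cut theorem (in the flow/cutset form used in~\cite{peres1999climb}) yields a flow from $\rho$ to $\leaves$ of strength at least $\ical_\Lambda$ respecting these capacities; normalizing it gives a unit flow $\bftheta\in\mathcal{F}$ with the pointwise control $\theta_x\le\ical_\Lambda^{-1}e^{-\Lambda\delta(\rho,x)}$ at every point $x$. (Because the flow is constant along each edge and $e^{-\Lambda\delta(\rho,\cdot)}$ is decreasing, it is enough to enforce the capacity at the deep endpoint of each edge.)

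Next I would substitute this flow into the integral form of the variance, Corollary~\ref{cor:integral-form}. For $x\in\pi_s$ one has $\delta(\rho,x)=s$, so $\theta_x\le\ical_\Lambda^{-1}e^{-\Lambda s}$, while conservation of the unit flow across the cutset gives $\sum_{x\in\pi_s}\theta_x=1$. Bounding the sum of squares by the maximum times the sum yields $\sum_{x\in\pi_s}\theta_x^2\le\ical_\Lambda^{-1}e^{-\Lambda s}$, and hence $\var[\mle]\le\gamma e^{-2\alpha T}\bigl[1+(2\alpha/\ical_\Lambda)\int_0^T e^{(2\alpha-\Lambda)s}\,ds\bigr]$. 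The three cases in the statement then fall out by evaluating this elementary integral according to the sign of $2\alpha-\Lambda$: in the subcritical case $\Lambda<2\alpha$ one uses $e^{-2\alpha T}\le e^{-\Lambda T}$ to collapse the two surviving terms into a single multiple of $e^{-\Lambda T_k}$; the critical case $\Lambda=2\alpha$ produces the extra factor $T_k$ from $\int_0^T\mathrm{d}s=T$; and the supercritical case $\Lambda>2\alpha$ bounds the integral by $(\Lambda-2\alpha)^{-1}$, leaving the prefactor $e^{-2\alpha T_k}$.

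The integral evaluation and the telescoping bookkeeping are routine. The one genuinely substantive step is the passage from the uniform cutset lower bound to a capacity-respecting flow, i.e.\ the max-flow min-cut duality, which is exactly where the branching number does its work; I would take care that the constant $\ical_\Lambda$ can be chosen once and for all, uniformly in $k$, which is guaranteed since the cutset bound defining $\br$ is itself uniform over $k$ and $\pi\in\Pi^k$.
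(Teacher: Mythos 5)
Your proposal is correct and follows essentially the same route as the paper's proof: define $\ical_\Lambda$ as the uniform cutset infimum, invoke max-flow min-cut to obtain a capacity-respecting flow of strength at least $\ical_\Lambda$, normalize it to a unit flow, bound $\sum_{x\in\pi_s}\theta_x^2$ by $\ical_\Lambda^{-1}e^{-\Lambda s}$ using flow conservation, and evaluate $\int_0^{T_k}e^{(2\alpha-\Lambda)s}\,\mathrm{d}s$ in the three cases via Corollary~\ref{cor:integral-form}. The paper does exactly this, so no further comparison is needed.
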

\begin{proof}
For $\Lambda < \br$, let
$
\ical_\Lambda = \inf_{k, \pi \in \Pi^k} 
\sum_{x \in \pi}{ e^{- \Lambda \delta_k(\rho,x)}  > 0 }.
$
By the max-flow min-cut theorem (see e.g.~\cite{lawler1976combinatorial}), there is a flow
$\bfeta^{(k)}$ on $\tree_k$ with 
\begin{equation}\label{eq:max-flow-value}
\|\bfeta^{(k)}\| \geq \ical_\Lambda
\end{equation}
and
\begin{equation}\label{eq:max-flow-capacity}
\eta^{(k)}_x \leq e^{- \Lambda \delta_k(\rho,x)}, 
\end{equation}
for all points $x$ in $\tree_k$. Normalize $\bfeta^{(k)}$ as
$\bftheta^{(k)} = \bfeta^{(k)}/\|\bfeta^{(k)}\|$.
By Proposition~\ref{prop:key}
and Corollary~\ref{cor:integral-form},
for $\Lambda \neq 2\alpha$,
\begin{eqnarray*}
\var_{\tree_k}[\mle^{(k)}]
&\leq& 
\gamma e^{- 2 \alpha T_k} \left [ 1 + 
2\alpha \int_0^{T_k} e^{2\alpha s} 
\Big(\sum_{x\in \pi^k_s} (\theta^{(k)}_x)^2\Big)
\mathrm{d}s \right ]\\
&\leq& 
\gamma e^{- 2 \alpha T_k} \left [ 1 + 
2\alpha \int_0^{T_k} e^{2\alpha s} 
\Big(\sum_{x\in \pi^k_s} \theta^{(k)}_x \frac{e^{- \Lambda \delta_k(\rho,x)}}{\ical_\Lambda}\Big)
\mathrm{d}s \right ]\\
&\leq& 
\gamma e^{- 2 \alpha T_k} \left [ 1 + 
\frac{2\alpha}{\ical_\Lambda}
\int_0^{T_k} e^{( 2\alpha -\Lambda)s} 
\mathrm{d}s \right ]\\
&=& 
\gamma \left [ e^{- 2 \alpha T_k} + 
\frac{2\alpha}{\ical_\Lambda (2\alpha-\Lambda)}\big(e^{- \Lambda T_k} - e^{- 2 \alpha T_k}\big) 
 \right ].
\end{eqnarray*}
where the second inequality follows
from~\eqref{eq:max-flow-value} 
and~\eqref{eq:max-flow-capacity}, and
the third inequality follows from the fact that
$\delta_k(\rho,x) = s$ for $x \in \pi^k_s$
by definition and that
$\sum_{x\in \pi^k_s} \theta^{(k)}_x = 1$.
Similarly if $\Lambda = 2\alpha$
$$
\var_{\tree_k}[\mle^{(k)}]
\leq \gamma \left[ e^{- 2 \alpha T_k} + 
\frac{2\alpha e^{- 2 \alpha T_k}  T_k}{\ical_\Lambda}
\right].
$$
\end{proof}

\paragraph{Removing bottlenecks}
Examining~\eqref{eq:variance-formula},
one sees that a natural bound on $\var[Y_\bftheta]$ is obtained
by ``splitting an edge'' in $\tree$.
\begin{definition}[Edge splitting]
\label{def:splitting}
Let $\tree$ be an ultrametric tree with edge set $E$.
Let $b_0 = (x_0,y_0)$ be a branch in $\tree$
(where $x_0$ is closer to the root)  
and let $b_i = (y_0,y_i)$, $i = 1,\ldots, D$, 
be the outgoing edges at $y_0$. The operation
of {\em splitting branch $b_0$} to obtain a new tree $\tree'$
with edge set $E'$
is defined as follows:
remove $b_0, b_1,\ldots, b_D$ from $\tree$;
add $D$ new edges $b'_i = (x_0,y_i)$ of length
$|b_0| + |b_i|$, $i=1,\ldots,D$ (see Figure \ref{fig:fig2}).
We call {\em merging} the opposite operation
of undoing the above splitting.
\end{definition}
\begin{figure}[h]
\begin{center}
 \scalebox{0.8}{%
\begin{tikzpicture}[domain=0:1,xscale=3.7,yscale=0.7] 

\draw (0,3.1) node[above] {\large $x_0$};
\draw (0.45,3.1) node[above] {\large $y_0$};
\draw (1.05,5) node[right] {\large $y_1$};
\draw (1.05,4) node[right] {\large $y_2$};
\draw (1.05,1) node[right] {\large $y_D$};
\draw (1,2) node[] {\huge .};
\draw (1,2.3) node[] {\huge .};
\draw (1,2.6) node[] {\huge .};
\draw[line width=1pt] plot coordinates {(0,3) (0.5, 3)};
\draw[line width=1pt] plot coordinates {(0.5,3) (1, 1)};
\draw[line width=1pt] plot coordinates {(0.5,3) (1, 4)};
\draw[line width=1pt] plot coordinates {(0.5,3) (1, 5)};

\draw[->, thick] (1.3,3) -- (1.5,3);

\draw (1.7,3.1) node[above] {\large $x_0$};
\draw (2.75,5) node[right] {\large $y_1$};
\draw (2.75,4) node[right] {\large $y_2$};
\draw (2.75,1) node[right] {\large $y_D$};
\draw (2.7,2) node[] {\huge .};
\draw (2.7,2.3) node[] {\huge .};
\draw (2.7,2.6) node[] {\huge .};

\draw[line width=1pt] plot coordinates {(1.7,3) (2.7, 1)};
\draw[line width=1pt] plot coordinates {(1.7,3) (2.7, 4)};
\draw[line width=1pt] plot coordinates {(1.7,3) (2.7, 5)};

\draw[dashed] plot coordinates {(1.7,3) (2.2, 3)};
\draw[dashed] plot coordinates {(2.2,3) (2.7, 1)};
\draw[dashed] plot coordinates {(2.2,3) (2.7, 4)};
\draw[dashed] plot coordinates {(2.2,3) (2.7, 5)};

\end{tikzpicture}
}
\end{center}
\caption{Edge splitting procedure.}
\label{fig:fig2}
\end{figure}
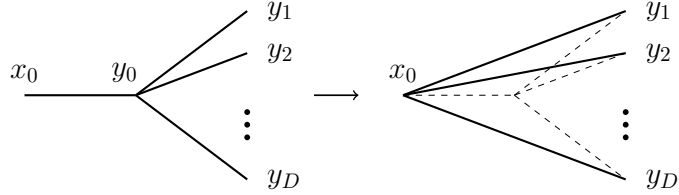
\noindent Note that the number of tips in $\tree$ and $\tree'$
above are the same, and therefore we can use the
same estimator $Y_\bftheta$ on both of them.
\begin{lemma}[Splitting an edge]
\label{lemma:splitting}
Let $\tree$ be an ultrametric tree, let $b_0$ be a branch
in $\tree$, and let $\tree'$ be obtained from $\tree$
by splitting $b_0$. Then for any nonnegative
$\bftheta = (\theta_\ell)_{\ell\in\leaves}$
$$
\var_{\tree'}[Y_\bftheta] \leq \var_{\tree}[Y_\bftheta].
$$
\end{lemma}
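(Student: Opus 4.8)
The plan is to apply the variance formula of Proposition~\ref{prop:key} to both $\tree$ and $\tree'$ and compare the two expressions term by term. The first thing I would record is that splitting $b_0$ changes neither the leaf set nor the root-to-leaf distances: the new edge $b_i' = (x_0,y_i)$ has length $|b_0|+|b_i|$, so $y_i$ sits at the same distance from $\rho$ as before, and in particular $\tree'$ remains ultrametric with the same height $T$. Consequently the prefactor $\gamma e^{-2\alpha T}$, together with the leading term of the formula (which for a general nonnegative $\bftheta$ is $(\sum_{\ell}\theta_\ell)^2$ in place of the ``$1$'', and is untouched since the leaf coefficients $\theta_\ell$ are not modified), coincides for $\tree$ and $\tree'$. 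This reduces the claim to showing $\sum_{b\in E'} R_b\theta_b^2 \le \sum_{b\in E} R_b\theta_b^2$.

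Next I would localize the comparison to the edges that actually change. Every edge not involved in the splitting carries the same flow in both trees, since the set of leaves below it is unchanged; in particular $\theta_{b_i'} = \theta_{b_i}$, because $b_i'$ and $b_i$ subtend the same leaves, and flow conservation at $y_0$ gives $\theta_{b_0} = \sum_{i=1}^{D}\theta_{b_i}$. Hence the difference of the two sums involves only $b_0,b_1,\dots,b_D$ on the $\tree$ side and $b_1',\dots,b_D'$ on the $\tree'$ side.

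The crux of the argument is an identity for the resistances, and this is the step I expect to carry the real content. Applying the telescoping relation~\eqref{eq:telescope} to the vertex $y_i$ in each tree and using that $\delta(\rho,y_i)$ is identical in $\tree$ and $\tree'$, the contributions from the (unchanged) path above $x_0$ cancel and one obtains $R_{b_i'} = R_{b_0} + R_{b_i}$: the merged edge behaves exactly like the series combination of $b_0$ and $b_i$, as the electrical-network picture predicts. (Equivalently this is a one-line check from $R_b = (1-e^{-2\alpha|b|})e^{2\alpha\delta(\rho,b)}$.) Substituting this identity, the $\tree'$-sum over the affected edges becomes $R_{b_0}\sum_i \theta_{b_i}^2 + \sum_i R_{b_i}\theta_{b_i}^2$, whereas the $\tree$-sum is $R_{b_0}\theta_{b_0}^2 + \sum_i R_{b_i}\theta_{b_i}^2$, so their difference is precisely $R_{b_0}\big(\theta_{b_0}^2 - \sum_i \theta_{b_i}^2\big)$.

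Finally I would close with an elementary inequality. Since $\bftheta$ is nonnegative, all the $\theta_{b_i}$ are nonnegative, so $\theta_{b_0}^2 = \big(\sum_i \theta_{b_i}\big)^2 = \sum_i \theta_{b_i}^2 + \sum_{i\ne j}\theta_{b_i}\theta_{b_j} \ge \sum_i \theta_{b_i}^2$; combined with $R_{b_0}\ge 0$ this makes the difference nonnegative, giving $\var_{\tree'}[Y_\bftheta] \le \var_{\tree}[Y_\bftheta]$. There is no serious obstacle beyond the resistance identity: the only place the argument could fail is the sign of the cross terms $\sum_{i\ne j}\theta_{b_i}\theta_{b_j}$, and it is exactly the nonnegativity hypothesis on $\bftheta$ that forces them to be nonnegative. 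In short, the entire content of the lemma is the series-resistance identity $R_{b_i'}=R_{b_0}+R_{b_i}$ together with the dropping of these positive cross terms.
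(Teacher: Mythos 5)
Your proof is correct and follows essentially the same route as the paper's: localize the comparison to the affected edges, use flow conservation $\theta_{b_0} = \sum_i \theta_{b_i}$ together with the series-resistance identity (which the paper phrases by subdividing each new edge $b'_i$ into pieces of lengths $|b_0|$ and $|b_i|$, noting $R_{b'_i} = R_{b''_i} + R_{b'''_i}$ with $R_{b''_i} = R_{b_0}$ and $R_{b'''_i} = R_{b_i}$), and conclude from $\bigl(\sum_i \theta_{b_i}\bigr)^2 \geq \sum_i \theta_{b_i}^2$ for nonnegative weights. Your added remark that the leading term becomes $\bigl(\sum_\ell \theta_\ell\bigr)^2$ for a general nonnegative $\bftheta$ is a careful touch, since the lemma does not assume a unit flow, but it does not change the substance of the argument.
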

\begin{proof}
We use the notation of Definition~\ref{def:splitting}.
Denote by $(\theta_b)_{b\in E}$ and $(\theta'_{b})_{b\in E'}$
the flows associated to $\bftheta$ by~\eqref{eq:flow-on-branch}
on $\tree$ and $\tree'$ respectively.
For any branch $b$, except $b_0,b_1,\ldots,b_D$
and $b'_1,\ldots,b'_D$, we have $\theta_b = \theta'_b$, as the
descendant leaves of $b$ on $\tree$ and $\tree'$
are the same. Think of $b'_i = (x_0,y_i)$, $i=1,\ldots,D$, as being
made of two consecutive edges $b''_i = (x_0,y_i')$ and $b'''_i = (y'_i,y_i)$
with $|b''_i| = |b_0|$ and $|b'''_i| = |b_i|$
(and note, for sanity check, that $R_{b'_i} = R_{b''_i}  + R_{b'''_i}$). 
Then, $\theta_{b_i} = \theta_{b'''_i}$ and 
$R_{b_i} = R_{b'''_i}$, and by~\eqref{eq:variance-formula}
\[ 
\frac{\var_{\tree}[Y_\bftheta] - \var_{\tree'}[Y_\bftheta]}{\gamma e^{-2 \alpha T}}
= R_{b_0} \theta_{b_0}^2 - \sum_{i=1}^D R_{b''_i} \theta_{b''_i}^2
= R_{b_0} \Big(\sum_{i=1}^D \theta_{b''_i} \Big)^2 - 
R_{b_0} \sum_{i=1}^D  \theta_{b''_i}^2 \geq 0,
\] 
where we used that $R_{b_0} = R_{b''_i}$ 
and the nonnegativity of the $\theta_{b''_i}$'s.
\end{proof}
\noindent Comparing $\tree$ to a star we then get:
\begin{prop}[Lower bound on the variance of $\mle$]
\label{prop:variance-lower}
Let $\tree$ be an ultrametric tree with $n$ tips
and height $T$. Then
$$
\var_\tree[\mle] \geq  \gamma
\left(e^{-2\alpha T} + \frac{1 - e^{-2\alpha T}}{n}\right).
$$
\end{prop}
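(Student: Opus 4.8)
The plan is to compare $\tree$ to the star tree $\tree^*$ having $n$ leaf edges of length $T$ emanating from the root, for which the variance of the MLE was already computed in Example~\ref{ex:firstphase}, and to use Lemma~\ref{lemma:splitting} to argue that passing from $\tree$ to $\tree^*$ can only decrease the variance of the MLE. Intuitively, the star is the ``most split'' ultrametric tree of height $T$ on the same leaf set, so among all trees reachable from $\tree$ by splitting it has the smallest MLE variance; since splitting lowers $\var[Y_\bftheta]$ for each fixed nonnegative flow and the MLE optimizes over such flows (Proposition~\ref{prop:key}), this yields the desired lower bound.

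First I would make the reduction to the star explicit. Starting from $\tree$, I repeatedly split a branch incident to the root that leads to an internal (non-leaf) node: splitting the root edge $(\rho, y_0)$ together with the outgoing edges at $y_0$ replaces them by edges joining $\rho$ directly to the children of $y_0$, of lengths $|b_0| + |b_i|$. Each such operation removes one internal node while \emph{preserving every root-to-leaf distance}, so the tree stays ultrametric of height $T$ and retains the same leaf set $\leaves$. After finitely many steps all internal nodes are eliminated and we reach $\tree^*$.

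Next I would invoke Lemma~\ref{lemma:splitting} at each step. For any fixed nonnegative leaf-coefficient vector $\bftheta = (\theta_\ell)_{\ell \in \leaves}$, a single split does not increase $\var[Y_\bftheta]$, so telescoping over the splits gives $\var_{\tree^*}[Y_\bftheta] \leq \var_\tree[Y_\bftheta]$. The optimization in Proposition~\ref{prop:key} is indexed by the coefficients $(\theta_\ell)_\ell$ in the simplex $\{\theta_\ell \geq 0,\ \sum_\ell \theta_\ell = 1\}$, which is the same for $\tree$ and $\tree^*$ because they share the leaf set $\leaves$; only the induced edge flows $\theta_b$ and the resistances $R_b$ differ between the two trees. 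Taking the infimum over this common index set and applying $\var[\mle] = \inf_{\bftheta \in \mathcal{F}} \var[Y_\bftheta]$ to each tree yields
\[
\var_{\tree^*}[\mle] = \inf_{\bftheta \in \mathcal{F}} \var_{\tree^*}[Y_\bftheta] \leq \inf_{\bftheta \in \mathcal{F}} \var_\tree[Y_\bftheta] = \var_\tree[\mle].
\]
Substituting the star computation~\eqref{eq:varmlestar}, namely $\var_{\tree^*}[\mle] = \gamma\big(e^{-2\alpha T} + (1 - e^{-2\alpha T})/n\big)$, then gives the claim.

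The one delicate point is the nonnegativity hypothesis in Lemma~\ref{lemma:splitting}: its inequality is established only for nonnegative $\bftheta$, so it is essential that the optimization defining $\var[\mle]$ in Proposition~\ref{prop:key} ranges over the nonnegative unit flows $\mathcal{F}$ rather than over arbitrary signed unbiased linear estimators. Checking this compatibility between the flow formulation and the restriction to nonnegative coefficients is the main thing to verify, but it is precisely what Proposition~\ref{prop:key} supplies; the remaining ingredients — finiteness of the splitting procedure and the telescoping of the per-step inequalities — are routine.
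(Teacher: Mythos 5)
Your proposal is correct and takes essentially the same approach as the paper's own proof, which likewise reduces $\tree$ to the star tree of height $T$ on the same $n$ leaves by repeated application of Lemma~\ref{lemma:splitting} and then invokes the star computation~\eqref{eq:varmlestar}. Your elaborations---the explicit root-edge splitting order, the telescoping, and the observation that the nonnegative-flow optimization in Proposition~\ref{prop:key} is over a common index set for both trees---simply spell out details the paper leaves implicit.
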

\begin{proof}
Split all edges in $\tree$ by repeatedly applying
Lemma~\ref{lemma:splitting} until a star
tree with $n$ leaves and height $T$ is obtained.
The result then follows from~\eqref{eq:varmlestar}.
\end{proof}

\noindent The following example will be useful 
when proceeding in reverse, to find an upper bound on the variance of $\mle$.
\begin{example}[Spherically symmetric trees]
\label{ex:spherical}
Let $\tree$ be a spherically symmetric, ultrametric tree, that is,
a tree such that all vertices at the same graph distance
from the root have the same number of outgoing
edges, all of the same length.
Let $D_h$, $h=0,\ldots,H-1$, 
be the out-degree of vertices at graph distance
$h$ (where $h=0$ and $h=H$ correspond to the root
and leaves respectively) and let 
$\tau_h$ be the corresponding branch length.
Notice that $\beta_1^2+\cdots+\beta_d^2$,
subject to $\beta_1+\cdots+\beta_d = 1$,
is minimized at $\beta_1=\cdots=\beta_d = 1/d$.
Hence, 
since $\mle$ is the best unbiased linear estimator 
and arguing inductively from the leaves in~\eqref{eq:variance-formula}, we see that
$\mle=\overline{Y}$ in this case. The mean squared error
is, by~\eqref{eq:variance-formula}, 
\begin{eqnarray}
\var[\mle] 
&=& \gamma e^{-2\alpha T} \left[
1 + \sum_{h=0}^{H-1} \left(\prod_{h'=0}^{h} D_{h'}\right) (1 - e^{-2\alpha \tau_h})
e^{2\alpha \sum_{h'=0}^{h} \tau_{h'}} 
\prod_{h'=0}^{h}\frac{1}{D_{h'}^2} 
\right]\nonumber\\
&=& \gamma e^{-2\alpha T} \left[
1 + \sum_{h=0}^{H-1} (1 - e^{-2\alpha \tau_h})
 \prod_{h'=0}^{h}\frac{e^{2\alpha \tau_{h'}}}{D_{h'}} 
\right].\label{eq:spherical}
\end{eqnarray}
\end{example}

\begin{prop}[Upper bound on the variance of $\mle$]
\label{prop:variance-upper}
Let $\tree$ be an ultrametric tree with height $T$. Recall that
$\pi_t$ be the set of points at distance $t$ from the root. Then
$$
\var_\tree[\mle]
\leq
\inf_{0 \leq t \leq T}
\gamma 
\left(
e^{-2\alpha (T-t)} 
+ \frac{1 - e^{-2\alpha(T-t)}}{|\pi_t|}
\right).
$$
\end{prop}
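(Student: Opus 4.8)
The plan is to upper-bound $\var_\tree[\mle]$ by exhibiting, for each fixed $t \in [0,T]$, a \emph{single} unit flow $\bftheta$ whose associated estimator $Y_\bftheta$ already has variance equal to the right-hand side, and then to invoke Proposition~\ref{prop:key}, which gives $\var_\tree[\mle] = \inf_{\bftheta\in\mathcal{F}}\var_\tree[Y_\bftheta] \le \var_\tree[Y_\bftheta]$ for any particular choice. Morally this is the \emph{reverse} of the lower bound in Proposition~\ref{prop:variance-lower}: there one splits $\tree$ all the way down to a star so as to spread the flow out as much as possible, whereas here one instead routes the flow as if the part of $\tree$ above $\pi_t$ were a single stem (a bottleneck) and the part below $\pi_t$ were $|\pi_t|$ disjoint paths. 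This is exactly the spherically symmetric ``stem plus star'' tree of Example~\ref{ex:spherical} with parameters $(D_0,\tau_0) = (1,t)$ and $(D_1,\tau_1)=(|\pi_t|,T-t)$, whose variance from \eqref{eq:spherical} evaluates to the claimed bound; the flow construction below is the clean way to realize this comparison without worrying that the extremal tree has a different number of leaves than $\tree$.

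Concretely, write $m = |\pi_t|$ and choose, for each point $x \in \pi_t$, one descendant leaf $\ell(x)$. These leaves are distinct because $\pi_t$ is an antichain (all its points lie at the same distance $t$ from $\rho$), so the subtrees hanging below distinct points of $\pi_t$ are disjoint. Define $\theta_{\ell(x)} = 1/m$ for each $x$ and $\theta_\ell = 0$ otherwise; then $\theta_\ell \ge 0$ and $\sum_\ell \theta_\ell = 1$, so $\bftheta \in \mathcal{F}$ is a legitimate unit flow and $Y_\bftheta$ is unbiased. I would then apply the integral form of the variance formula (Corollary~\ref{cor:integral-form}) and split the integral at $s = t$. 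For $s \in (t,T)$, the induced flow is supported on the $m$ pairwise-disjoint paths running from the points of $\pi_t$ down to their chosen leaves, so each cutset $\pi_s$ meets exactly $m$ of these paths, each carrying flow $1/m$, giving $\sum_{x\in\pi_s}\theta_x^2 = m\,(1/m)^2 = 1/m$ exactly. For $s \in (0,t)$, I need only the crude convexity bound: since $\theta_x \ge 0$ and $\sum_{x\in\pi_s}\theta_x = 1$ by flow conservation across the cutset, $\sum_{x\in\pi_s}\theta_x^2 \le \bigl(\sum_{x\in\pi_s}\theta_x\bigr)^2 = 1$.

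Substituting these two estimates into Corollary~\ref{cor:integral-form} and evaluating the two elementary integrals collapses the bracket to the desired form:
\[
\var_\tree[Y_\bftheta] \le \gamma e^{-2\alpha T}\left[1 + 2\alpha\int_0^t e^{2\alpha s}\,\mathrm{d}s + \frac{2\alpha}{m}\int_t^T e^{2\alpha s}\,\mathrm{d}s\right] = \gamma\left(e^{-2\alpha(T-t)} + \frac{1-e^{-2\alpha(T-t)}}{m}\right).
\]
Since $t \in [0,T]$ was arbitrary and $\var_\tree[\mle] \le \var_\tree[Y_\bftheta]$ for each such $t$, taking the infimum over $t$ yields the claim. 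The one point that needs care — and essentially the only place the argument could go wrong — is the ``below $\pi_t$'' regime: one must check that the chosen single-leaf paths really stay disjoint all the way down (so that $\sum_{x\in\pi_s}\theta_x^2$ equals $1/m$, though even $\le 1/m$ would suffice) and that each path extends to height $T$, which it does by ultrametricity. The ``above $\pi_t$'' regime, by contrast, needs only the trivial bound; it is a pleasant feature of the problem that this crude estimate already reproduces the stem contribution $e^{2\alpha t}-1$ of the extremal tree exactly, so no sharper control above the cutset is required.
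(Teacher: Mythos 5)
Your proof is correct, and while it uses exactly the same test flow as the paper --- route mass $1/|\pi_t|$ from each point of $\pi_t$ to a single chosen descendant leaf --- it extracts the bound by a genuinely different mechanism. The paper performs tree surgery: it subdivides the branches crossing $\pi_t$, merges everything above $\pi_t$ into a single stem by (reverse) application of Lemma~\ref{lemma:splitting}, discards the zero-flow branches, and reads the variance off the spherically symmetric formula \eqref{eq:spherical} with $H=2$, $D_0=1$, $D_1=|\pi_t|$, $\tau_0=t$, $\tau_1=T-t$. You instead stay on the original tree and apply Corollary~\ref{cor:integral-form} directly, splitting the integral at $s=t$ and bounding the cut-level $\ell^2$ mass: exactly $1/|\pi_t|$ below the cutset (by disjointness of the chosen paths) and trivially $\leq 1$ above it. The two computations are numerically identical --- in the paper's merged tree the stem carries the full unit flow, so $\sum_{x\in\pi_s}\theta_x^2=1$ for $s<t$, which is precisely your crude bound --- but your route is more self-contained, needing only Proposition~\ref{prop:key} and Corollary~\ref{cor:integral-form} and avoiding Lemma~\ref{lemma:splitting} and Example~\ref{ex:spherical} altogether. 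What the paper's version buys is reuse of machinery: the splitting lemma also drives the lower bound (Proposition~\ref{prop:variance-lower}) and the necessity direction of Theorem~\ref{thm:criterion-consistency}, so the merging argument fits a uniform template of comparison trees. The one point that genuinely needs care in your version --- that each point of $\pi_s$ with $s>t$ lies below a unique point of $\pi_t$ and hence sees at most one chosen leaf, so the paths stay disjoint and the sum of squares is exactly $m(1/m)^2$ --- you identified and verified correctly.
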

\begin{proof}
Let $0 \leq t \leq T$.
For all points $x$ in $\pi_t$, choose one descendant
leaf $\ell_x$ of $x$ and define $\bftheta$ as
\begin{displaymath}
\theta_\ell
=
\begin{cases}
1/|\pi_t| & \text{if $\ell = \ell_x$ for some $x$},\\
0         & \text{otherwise}.
\end{cases}
\end{displaymath}
Divide all branches crossing $\pi_t$
into two branches meeting at $\pi_t$.
Then merge all branches above $\pi_t$ 
(that is, closer to the root) by repeatedly
applying Lemma~\ref{lemma:splitting}.
By~\eqref{eq:variance-formula}, removing
all branches $b$ with $\theta_b = 0$ does not
affect the variance, and from Example~\ref{ex:spherical}
with $H = 2$, $D_0 = 1$, $D_1 = |\pi_t|$, $\tau_0 = t$,
and $\tau_1 = T-t$, we get
\begin{eqnarray*}
\var_\tree[\mle]
&\leq& 
\gamma e^{-2\alpha T}
\left[
1+ (1 - e^{-2\alpha t})e^{2\alpha t} 
+ (1 - e^{-2\alpha(T-t)}) e^{2\alpha t} \frac{e^{2\alpha (T-t)}}{|\pi_t|}
\right]\\
&\leq& 
\gamma 
\left[
e^{-2\alpha (T-t)} 
+ \frac{1 - e^{-2\alpha(T-t)}}{|\pi_t|}
\right].
\end{eqnarray*}
\end{proof}

\paragraph{The two estimators $\mle$ vs.~$\overline{Y}$}
As an application of the previous proposition,
we provide an example where $\mle$ performs
significantly better than $\overline{Y}$.
Roughly, the example shows that $\overline{Y}$
can perform poorly on asymmetric trees.
\begin{example}
\label{ex:caterpillar-bad}
Consider a caterpillar sequence $(\mathbb{T}_k)_{k}$,
as defined in Example~\ref{ex:caterpillar}, with
$t_{2m+1} = m$ and $t_{2m} = 1$ 
for all $m$, as shown in Figure~\ref{fig:fig1}.
Note that the tree height is $T_{2m+1} = T_{2m+2} = m$
and the cut sets $\pi_t^{k}$ of $\tree_k$ at time $t$ 
satisfy $|\pi_{m-1}^{2m+1}| = |\pi_{m-1}^{2m+2}| = m$. 
Therefore, by Proposition~\ref{prop:variance-upper}, 
$$
\max\left\{\var_{\tree_{2m+1}}[\mle],
\var_{\tree_{2m+2}}[\mle]\right\}
\leq \gamma \left[
e^{-2\alpha (m-1)} + \frac{1}{m} 
\right] \to 0,
$$
as $m \to +\infty$, and hence
$\mle$ is consistent. 
On the other hand, note that 
$\cov[Y_i,Y_j] \geq 0$ 
for all pairs of leaves $i, j$ in $\tree_k$. 
Therefore,
\begin{eqnarray*}
\var_{\tree_{2m}}\left[\overline Y\right] 
&=& \frac{1}{4m^2} \var\left[
\sum_{\ell=1}^{2m}{Y_\ell} \right] 
\geq \frac{1}{4m^2} \var \left[
\sum_{i=1}^m{Y_{2i}} \right] 
= \frac{1}{4m^2} \sum_{i,j=1}^m 
\cov[Y_{2i},Y_{2j}]\\ 
&\geq& \frac{1}{4m^2} m^2 \gamma e^{-2 \alpha}
= \frac{\gamma e^{- 2 \alpha}}{4}.
\end{eqnarray*}
So, $\overline Y$ is not consistent.
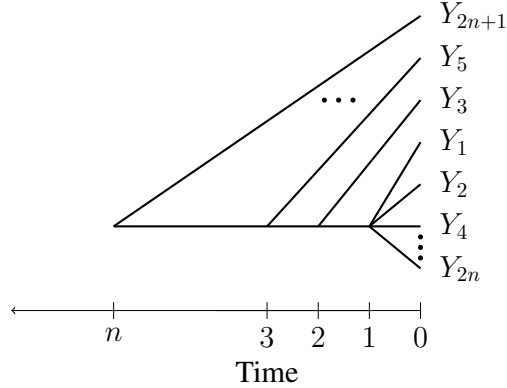
\begin{figure}[h]
\begin{center}
 \scalebox{0.8}{%
\begin{tikzpicture}[domain=0:1,xscale=1.7,yscale=0.7] 

\draw[very thin,color=black] (1.5,-0.2) grid[xstep=0.5] (3.5,0.2); 
\draw[very thin,color=black] (0.5,-0.2) grid[xstep=0.5] (0.5,0.2); 
\draw[->] (3,0) -- (-0.5,0); 

\draw (3.5,-0.2) node[below] {\large $0$};
\draw (3,-0.2) node[below] {\large $1$};
\draw (2.5,-0.2) node[below] {\large $2$};
\draw (2,-0.2) node[below] {\large $3$};
\draw (0.5,-0.2) node[below] {\large $n$};
\draw (2,-1) node[below] {\large Time};

\draw (3.6,1) node[right] {\large $Y_{2n}$};
\draw (3.5,1.25) node[] {\huge $.$};
\draw (3.5,1.5) node[] {\huge $.$};
\draw (3.5,1.75) node[] {\huge $.$};
\draw (3.6,2) node[right] {\large $Y_4$};
\draw (3.6,3) node[right] {\large $Y_2$};
\draw (3.6,4) node[right] {\large $Y_1$};
\draw (3.6,5) node[right] {\large $Y_3$};
\draw (3.6,6) node[right] {\large $Y_5$};
\draw (3.6,7) node[right] {\large $Y_{2n+1}$};
\draw (2.7,5) node[] {\huge $...$};

\draw[line width=1pt] plot coordinates {(0.5,2) (3.5, 2)};
\draw[line width=1pt] plot coordinates {(3,2) (3.5, 1)};
\draw[line width=1pt] plot coordinates {(3,2) (3.5, 3)};
\draw[line width=1pt] plot coordinates {(3,2) (3.5, 4)};
\draw[line width=1pt] plot coordinates {(2.5,2) (3.5, 5)};
\draw[line width=1pt] plot coordinates {(2,2) (3.5, 6)};
\draw[line width=1pt] plot coordinates {(0.5,2) (3.5, 7)};

\end{tikzpicture}
}
\end{center}
\caption{Example where the MLE $\hat \mu$ is consistent while $\overline Y$ is not.}
\label{fig:fig1}
\end{figure}
\end{example}

\subsection{Proof of Theorem~\ref{thm:criterion-consistency} and Sufficiency of Conditions}
\label{sec:consistency-mle}

\begin{proof}[Proof of Theorem~\ref{thm:criterion-consistency} (Consistency of $\mle$)]
First assume~\eqref{eq:consistency-condition}.
From Proposition~\ref{prop:variance-upper}, for all $s$,
$$
\limsup_k \var_{\tree_k}[\mle^{(k)}]
\leq
\limsup_k \gamma 
\left[
e^{-2\alpha s} 
+ \frac{1 - e^{-2\alpha s}}{\left|\tilde{\pi}^k_s\right|}
\right]
\leq \gamma e^{-2\alpha s}.
$$
Taking $s$ to $+\infty$ gives consistency.
On the other hand, assume by contradiction 
that $(\mle^{(k)})_k$ is consistent
but that
$ \liminf_k \left|\tilde{\pi}^k_{s}\right| = L < +\infty $
for some $s \in (0, +\infty)$. Let $(k_j)_j$ be the corresponding
subsequence. 
Divide all branches in $\tree_{k_j}$ 
crossing $\tilde{\pi}^{k_j}_s$
into two branches meeting at $\tilde{\pi}^{k_j}_s$.
Split edges 
in $\tree_{k_j}$ above $\tilde{\pi}^{k_j}_s$ 
(closer to the root) repeatedly 
until the tree above $\tilde{\pi}^{k_j}_s$ forms a star.
Let $\tree'$ be the resulting tree,
let $b'_1,\ldots,b'_{D}$ be the branches emanating
from the root, where $D \leq L$ by assumption,
and let $\tilde{\pi}'$ be the cutset at time $s$ from the leaves. For the unit flow $\bftheta'$ corresponding to the MLE on $\tree'$,
by Lemma~\ref{lemma:splitting} 
and counting only those edges above $\tilde{\pi}'$
in $\tree'$
in~\eqref{eq:variance-formula},
we have
\begin{eqnarray*}
\var_{\tree_{k_j}}[\mle^{(k_j)}]
&\geq& \gamma e^{-2\alpha T_{k_j}}
\left[
1 + \left(1 - e^{-2\alpha (T_{k_j} - s)}\right)e^{2\alpha (T_{k_j} - s)}
\sum_{i=1}^D (\theta'_{b'_i})^2
\right]\\
&\geq& \gamma 
\left[
e^{-2\alpha T_{k_j}} + \left(1 - e^{-2\alpha (T_{k_j} - s)}\right) \frac{e^{- 2\alpha s}}{L}
\right],
\end{eqnarray*}
where we used the fact that $\beta_1^2+\cdots+\beta_D^2$,
subject to $\beta_1+\cdots+\beta_D = 1$,
is minimized at $\beta_1=\cdots=\beta_D = 1/D$.
Since $T_{k_j} \to +\infty$ under Assumption~\ref{assumption:unbounded},
$$
\limsup_k \var_{\tree_{k}}[\mle^{(k)}]
\geq \gamma \frac{e^{- 2\alpha s}}{L} > 0,
$$
and we get a contradiction.
\end{proof}
\noindent We note that, by Proposition~\ref{prop:variance-branching}, the branching number
provides a simple, sufficient condition for consistency.
\begin{prop}[Consistency: Branching number condition]
\label{thm:consistency-branching}
Let $\treeseq = (\tree_k)_k$ be a tree sequence
satisfying Assumption~\ref{assumption:unbounded}
with branching number $\br$. Then
$\br > 0$ suffices for the consistency of the
MLE of $\mu$.
\end{prop}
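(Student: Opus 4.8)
The plan is to obtain this as an immediate consequence of Proposition~\ref{prop:variance-branching} together with the unboundedness of the tree heights. The only genuine choice to make is that of a decay exponent $\Lambda$ that simultaneously lies below the branching number (so that the associated constant $\ical_\Lambda$ is strictly positive) and produces real exponential decay in $T_k$.

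First I would use $\br > 0$ and $\alpha > 0$ to fix any $\Lambda$ with $0 < \Lambda < \min\{\br, 2\alpha\}$; such a $\Lambda$ exists precisely because both quantities are positive. With this choice $\Lambda < \br$, so Proposition~\ref{prop:variance-branching} applies and yields a constant $\ical_\Lambda > 0$. Moreover $\Lambda < 2\alpha$, so we land in the first of the three cases of that proposition, giving
\[
\var_{\tree_k}[\mle^{(k)}] \leq \gamma\Big(1 + \frac{2\alpha}{\ical_\Lambda(2\alpha - \Lambda)}\Big)\, e^{-\Lambda T_k},
\]
where the prefactor is a finite constant independent of $k$ (here $2\alpha - \Lambda > 0$ and $\ical_\Lambda > 0$).

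Next I would invoke Assumption~\ref{assumption:unbounded}, which guarantees $T_k \to +\infty$. Since $\Lambda > 0$, the right-hand side above tends to $0$, so $\var_{\tree_k}[\mle^{(k)}] \to 0$. Finally, because $\mle^{(k)}$ is an unbiased estimator of $\mu$ (being the best linear unbiased estimator), an application of Chebyshev's inequality gives $\P{|\mle^{(k)} - \mu| > \epsilon} \leq \var_{\tree_k}[\mle^{(k)}]/\epsilon^2 \to 0$ for every $\epsilon > 0$, which is exactly the desired consistency.

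There is no substantial obstacle here: the statement is essentially a corollary, and all the real content is already packaged in Proposition~\ref{prop:variance-branching}. The single point requiring a moment's care is the selection of $\Lambda$ — it must be strictly positive to force $e^{-\Lambda T_k} \to 0$, and strictly below $\br$ to keep $\ical_\Lambda$ positive. Taking $\Lambda$ below $\min\{\br, 2\alpha\}$ satisfies both requirements at once and conveniently places us in the cleanest case of the proposition, sidestepping the $\Lambda \geq 2\alpha$ branches entirely.
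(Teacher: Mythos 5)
Your proposal is correct and follows exactly the route the paper intends: the paper presents this proposition as an immediate consequence of Proposition~\ref{prop:variance-branching}, and your argument simply fills in the details (choosing $0 < \Lambda < \min\{\br, 2\alpha\}$, noting $T_k \to +\infty$ from Assumption~\ref{assumption:unbounded}, and concluding via unbiasedness and Chebyshev's inequality). No gaps.
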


\subsection{Phase transition on the rate of convergence of the MLE}
\label{sec:convergence-rate-mle}

Theorems~\ref{thm:rate-supercritical} and~\ref{thm:subcritical}
show a phase transition for the
$\sqrt{n_k}$-consistency of $\mle$, which we prove now. 

\begin{proof}[Proof of Theorem~\ref{thm:rate-supercritical} (Supercritical regime)]
Assume $\ugr  > 2\alpha$.
As remarked after Definition~\ref{def:growth},
for all $\epsilon > 0$, eventually
\begin{equation}\label{eq:nk-bound}
\exp\left( 
(\lgr - \epsilon)
T_k\right)
\leq n_k \leq \exp\left( 
(\ugr + \epsilon) 
T_k\right),
\end{equation}
that is,
$
n_k^{-2\alpha/(\lgr - \epsilon)}
\leq e^{-2\alpha T_k} 
\leq 
n_k^{-2\alpha/(\ugr + \epsilon)}.
$
Moreover for all $\epsilon> 0$
there are subsequences $(k_j)_j$ and $(k'_j)_j$
such that
\begin{equation}\label{eq:subseq-supercritical}
n_{k_j} \geq \exp\left( 
(\ugr - \epsilon) 
T_{k_j}\right)
\mbox{ and }
n_{k'_j} \leq 
\exp\left( 
(\lgr + \epsilon) 
T_{k'_j}\right).
\end{equation}
By Proposition~\ref{prop:variance-lower},
\begin{equation}\label{eq:super-lower}
\var_{\tree_k}[\mle^{(k)}]
\geq \gamma\left[
e^{-2\alpha T_k}
+ \frac{1 - e^{-2\alpha T_k}}{n_k}
\right] \geq \gamma e^{-2\alpha T_k}.
\end{equation}
Then~\eqref{eq:non-sqrt-supercritical} follows
from~\eqref{eq:subseq-supercritical}
and~\eqref{eq:super-lower}.
Hence $n_k \var_{\tree_k}[\mle^{(k)}] \to +\infty$ along a subsequence and
$(\mle^{(k)})_k$ is not $\sqrt{n_k}$-consistent
(using that $\mle$ is unbiased and normally distributed). 

Assume $\br > 2\alpha$.
Let $2\alpha < \Lambda < \br$. 
By Proposition~\ref{prop:variance-branching}
\begin{equation}\label{eq:super-upper}
\var_{\tree_k}[\mle^{(k)}]
\leq \gamma\left[
1 + \frac{2\alpha}{\ical_\Lambda (\Lambda - 2\alpha)}
\right]
e^{-2\alpha T_k}.
\end{equation}
Note that $\ugr \geq \lgr \geq \br > 2\alpha$ and
hence, by~\eqref{eq:super-lower} and~\eqref{eq:super-upper},
$
\var_{\tree_k}[\mle^{(k)}] = \Theta\left(e^{-2\alpha T_k}\right).
$
Combining this with~\eqref{eq:nk-bound}
gives the result in terms of $n_k$.

Assume instead that $\br < 2\alpha$.
Let $\Lambda < \br$. 
By Proposition~\ref{prop:variance-branching}
\begin{equation*} 
\var_{\tree_k}[\mle^{(k)}]
\leq \gamma\left[
1 + \frac{2\alpha}{\ical_\Lambda (2\alpha-\Lambda)}
\right]
e^{- \Lambda T_k}.
\end{equation*}
The rest of the argument is similar to the previous case.
\end{proof}

\begin{proof}[Proof of Proposition~\ref{prop:example2level}]
Note that Example~\ref{ex:twolevel} considers a
spherically symmetric tree.
By~\eqref{eq:spherical},
\[
\var_{\tree_k}[\mle^{(k)}] 
= \gamma e^{-2\alpha (\tau^{(k)}_0 + \tau^{(k)}_1)} \left[
1 + \sum_{h=0,1} (1 - e^{-2\alpha \tau^{(k)}_h})
 \prod_{h'=0}^{h}\frac{e^{2\alpha \tau^{(k)}_{h'}}}{D^{(k)}_{h'}} 
\right]\]
which then gives \eqref{eq:variance-2-level}.
Note that
$$
\frac{\log n_k}{ T_k}
= \frac{\Lambda_0\tau^{(k)}_0 + \Lambda_1 \tau^{(k)}_1}
{\tau^{(k)}_0 + \tau^{(k)}_1}
= \Lambda_0 \sigma + \Lambda_1 (1-\sigma)
= \gr.
$$
To compute the branching number, it suffices
to consider cutsets with $m_0$ level-1 vertices
and the $D^{(k)}_1(D^{(k)}_0 - m_0)$ tips below the rest of the
level-1 vertices. Then
$$
\jcal_k 
\equiv
\inf_{\pi \in \Pi^k}\ 
\sum_{x \in \pi} e^{- \Lambda \delta_k(\rho,x)}
=
\begin{cases}
D^{(k)}_0 e^{-\Lambda \tau^{(k)}_0},
& \text{if $D^{(k)}_1 > e^{\Lambda \tau^{(k)}_1}$}\\
n_k e^{- \Lambda T_k}, & \text{otherwise.}
\end{cases}
$$
Hence if $\Lambda \geq \Lambda_1 > \gr$
we are in the second case and 
$ n_k e^{- \Lambda T_k} = e^{- (\Lambda-\gr) T_k} \to 0,$
as $k \to +\infty$.
If $\Lambda < \Lambda_1$
we are in the first case and 
$ D^{(k)}_0 e^{-\Lambda \tau^{(k)}_0} = e^{- (\Lambda-\Lambda_0) \tau^{(k)}_0},$
so that $\br = \Lambda_0$.
\end{proof}

\begin{proof}[Proof of Theorem~\ref{thm:subcritical} (Subcritical regime)]
One direction follows immediately from
Pro\-position~\ref{prop:variance-lower} which implies
$$
\var_{\tree_k}[\mle^{(k)}] \geq \gamma \frac{1 - e^{-2\alpha T_k}}{n_k}
= \Omega(n_k^{-1}).
$$
We prove the other direction separately in each case.
Assume first that $0 < \br = \ugr < 2\alpha$.
For $\epsilon > 0$ (small), choose $\Lambda$ such that
$$
\ugr - \epsilon < \Lambda < \ugr = \br < 2\alpha.
$$
By Proposition~\ref{prop:variance-branching},
eventually
\[ 
\var_{\tree_k}[\mle^{(k)}]
\leq
\gamma\left[1 + \frac{2\alpha}{\ical_\Lambda(2\alpha-\Lambda)}\right]
e^{- \Lambda T_k}
\leq 
\gamma\left[1 + \frac{2\alpha}{\ical_\Lambda(2\alpha-\Lambda)}\right]
n_k^{-(\ugr -\epsilon)/(\ugr + \epsilon)}.
\] 

Assume instead that $\unigr < 2\alpha$.
We show that $\overline{Y}$ 
(and hence the MLE by Proposition~\ref{prop:key}) achieves
$\sqrt{n_k}$-consistency in this case.
Let $\bftheta$ be the corresponding flow on $\tree_k$.
By Corollary~\ref{cor:integral-form},
letting $\unigr < \Lambda < 2\alpha$,
for $k$ large enough
\begin{eqnarray*}
\var_{\tree_k}[\overline{Y}] 
&=& 
\gamma e^{- 2 \alpha T_k} \left [ 1 + 
2\alpha \int_0^{T_k} e^{2\alpha s} 
\left(\sum_{x\in \pi^k_s} \left(\frac{n_k(x)}{n_k}\right)^2\right)
\mathrm{d}s \right ]\\
&\leq&
\gamma e^{- 2 \alpha T_k} \left [ 1 + 
2\alpha \int_0^{T_k} e^{2\alpha s} 
\left(\sum_{x\in \pi^k_s} \left(\frac{n_k(x)}{n_k}\right)
\frac{e^{\Lambda [(T_k-s)+M]}}{n_k}\right)
\mathrm{d}s \right ]\\
&\leq&
\gamma  \left [ e^{- 2 \alpha T_k} + 
e^{\Lambda M}
\frac{2\alpha}{n_k (2\alpha-\Lambda)} \,
(1 - e^{-(2\alpha -\Lambda) T_k})
\right ].
\end{eqnarray*}
The result follows from the fact that
$e^{ \Lambda [T_k+M]} \geq n_k$.
\end{proof}

\subsection{Proofs for special cases}
\label{section:proofs-special}

\begin{proof}[Proof of Corollary~\ref{cor:mu-yule}]
	By Theorems~\ref{thm:criterion-consistency},~\ref{thm:rate-supercritical}, and~\ref{thm:subcritical}, it suffices
	to prove that $\br = \ugr = \lambda$
	with probability $1$.
	A Galton-Watson (GW) branching process is
	a discrete-time non-negative integer-valued
	population process defined as follows: at each time step, each
	individual in the population has an independent number
	of offsprings, according to a distribution $F$, that form
	the population at the next time. 
	In~\cite[Chapter 3]{peres1999climb}, it is shown that
	a GW tree where $F$ has mean $m$ has
	branching number and
	upper growth equal to
	$\log m$.
	
	To compute the branching number of an infinite Yule tree $\tree_0$,
	we use a comparison to a GW tree.
	Fix $\epsilon > 0$.
	Let $F$ be the distribution of the number of 
	lineages in $\tree_0$ at time $\epsilon$.
	By standard branching process 
	results~\cite[Equation (4) on page 108]{athreya2004branching},
	$m = e^{\lambda \epsilon}$. By the memoryless property
	of the exponential, the number of lineages $|\pi_{N\epsilon}|$ in the
	Yule tree at time $N\epsilon$ is identically distributed to
	the population size $Z_{N}$ 
	of a GW tree with offspring
	distribution $F$ at time $N$. 
	Then
	$$
	\frac{\log |\pi_s|}{s} 
	\leq \frac{\log Z_{\lceil s/\epsilon \rceil}}{s}
	= \frac{\lceil s/\epsilon \rceil}{ s} \cdot 
	\frac{\log Z_{\lceil s/\epsilon \rceil}}{\lceil s/\epsilon \rceil},
	$$
	which implies that 
	$
	\ugr \leq \frac{1}{\epsilon} \cdot
	\log e^{\lambda \epsilon} = \lambda.
	$
	
	Similarly, let $\pi$ be a cutset in $\tree_0$ and
	let $\pi_\epsilon$ be the cutset obtained 
	by rounding up the points in $\pi$ to the next
	$\epsilon$-multiple closer to the root (removing duplicates). 
	Let $\delta_{\mathrm{GW}}(v)$ be the distance
	from the root to vertex $v$ in the GW tree.
	Then
	$$
	\sum_{x \in \pi} 
	e^{- \Lambda \delta_0(\rho,x)}
	\geq \sum_{y \in \pi_\epsilon} 
	e^{- \Lambda (\delta_{\mathrm{GW}}(y)+1)\epsilon}
	= e^{- \Lambda \epsilon} \sum_{y \in \pi_\epsilon} 
	e^{- (\epsilon \Lambda) \delta_{\mathrm{GW}}(y)}
	> 0
	$$
	whenever $\epsilon \Lambda < \log e^{\lambda \epsilon}$,
	so that
	$
	\br \geq \lambda.
	$
\end{proof}

\begin{proof}[Proof of Corollary~\ref{cor:yule_alpha_gamma}]
	Let $\tau_i = t_{i} - t_{i-1}$ be the amount of time 
	during which $\tree_0$ has $i$ lineages (with $t_0 = 0$). 
	Then $(\tau_i)_i$ are independent exponential 
	random variables with parameters $(1/(i\lambda))_i$.
	Let $T^j_i =\sum_{r=i+1}^j \tau_r$. 
	Note that 
	$$ 
	\E{T^j_i} 
	= \sum_{r=i+1}^j {\E{\tau_r}}
	\equiv \lambda^{-1} \sum_{r=i+1}^j \frac{1}{r}
	\in \left(\lambda^{-1}\log\Big(\frac{j}{i+1}\Big), \lambda^{-1}\log\Big(\frac{j}{i}\Big)\right).
	$$ 
	Similarly,
	\begin{equation}
	\label{eq:yule-alpha-var}
	\var[T^j_i] 
	= \sum_{r=i+1}^j \var[\tau_r] 
	= \lambda^{-2}
	\sum_{r=i+1}^j \frac{1}{r^2} \leq \frac{1}{\lambda^2 i}.
	\end{equation}
	By Chebyshev's inequality, for all $0<\sigma<1$,
	$$
	\P{
		|T_{\lfloor \sigma k \rfloor}^k
		- \E{T_{\lfloor \sigma k \rfloor}^k}|
		\geq \epsilon
	} = O(k^{-1}),
	$$
	where we used~\eqref{eq:yule-alpha-var}.
	Let $0 < \sigma_2' < \sigma_2
	< \sigma_1' < \sigma_1 < 1$.
	From the previous equation, we get
	for $\iota=1,2$
	$$
	\P{
		T_{\lfloor \sigma_\iota k \rfloor}^k
		\leq \lambda^{-1}\log\left(\frac{k}{\lfloor \sigma_\iota k\rfloor + 1}\right) - \epsilon
	} = O(k^{-1}),
	$$
	and similarly for the other direction.
	Take
	$$
	a_\iota
	= \lambda^{-1}
	\log\left(\frac{1}{\sigma_\iota}\right),
	\quad
	a_\iota'
	= \lambda^{-1}
	\log\left(\frac{1}{\sigma_\iota'}\right)
	\quad\mbox{and }
	\epsilon 
	< a_1 
	\land
	\frac{1}{2}
	\left[ a_2  
	-      a_1' 
	\right].
	$$
	Then Assumption~\ref{assump:alpha}
	is satisfied asymptotically with 
	$
	c_\iota=a_\iota-\epsilon$,
	$c_\iota' = a_\iota'+\epsilon$
	and
	$\beta 
	= [\sigma_1 - \sigma_1']
	\land
	[\sigma_2 - \sigma_2'],
	$
	because then
	$$
	\P{c_1 < T_{\lfloor \sigma_1 k \rfloor}^k < T_{\lfloor \sigma_1' k \rfloor}^k< 
		c'_1 <c_2 <
		T_{\lfloor \sigma_2 k \rfloor}^k < T_{\lfloor \sigma_2' k \rfloor}^k<
		c'_2
	}
	\geq 1-O(k^{-1}).
	$$
\end{proof}

\subsection{Sensitivity to estimate of $\alpha$}
\label{sec:sensitivity-to-alpha}

So far in this section, we considered the MLE of $\mu$
{\em given $\alpha$}. Here we look at the sensitivity
of the MLE to estimation errors on $\alpha$. 
Theorem~\ref{thm:main-alpha-rate} shows that there exists
a $\sqrt{n_k}$-consistent estimator of $\alpha$ 
under Assumption~\ref{assump:alpha}, which is unrelated to the growth or 
height of the species tree. 
Moreover the estimator of $\alpha$ we derive
does not require the knowledge of $\mu$.

Hence suppose that we have a $\sqrt{n_k}$-consistent
estimator $\hat\alpha_k$ of $\alpha$.
Let $\evar_{\tree_k}$ denote the variance
under the parameter $\alpha = \hat\alpha_k$ 
(with $\mu$ and $\gamma$ unchanged)
and let $\hat\bftheta_k$ be the corresponding
weights of the MLE of $\mu$, 
that is, the choice of weights 
assuming that $\alpha = \hat\alpha_k$ and minimizing
$\evar_{\tree_k}[Y_{\bftheta}]$. 

For all $k$ and under the true $\alpha$, $Y_{\hat\bftheta_k}$
is an unbiased estimator of $\mu$. 
Moreover, because $\hat\alpha_k = \alpha + o(1)$ 
and so on, the bounds in Theorems~\ref{thm:rate-supercritical}
and~\ref{thm:subcritical} apply to 
$\evar_{\tree_k}[Y_{\hat\bftheta_k}]$ as well
(for $k$ large enough).
The quantity of interest is $\var_{\tree_k}[Y_{\hat\bftheta_k}]$.
By \eqref{eq:variance-formula},
\begin{eqnarray*}
&&\var_{\tree_k}[Y_{\hat\bftheta_k}] \\
&&\quad =   
\gamma e^{- 2 \alpha T_k} 
+ \gamma \sum_{b \in E_k}(1 - e^{-2 \alpha |b|}) 
e^{2 \alpha (\delta_k(\rho,b)-T_k)}
(\hat\theta_k)^2_b \\
&&\quad = (1+O(T_k n_k^{-1/2}))  
\left [\gamma e^{- 2 \hat\alpha_k T_k} 
+ \gamma \sum_{b \in E_k}(1 - e^{-2 \hat\alpha_k |b|}) 
e^{2 \hat\alpha_k (\delta_k(\rho,b)-T_k)}
(\hat\theta_k)^2_b \right ]\\
&&\quad = (1+O(T_k n_k^{-1/2})) \evar_{\tree_k}[Y_{\hat\bftheta_k}],
\end{eqnarray*}
provided $T_k n_k^{-1/2} = o(1)$.
Hence, for instance if $\lgr > 0$, $T_k = O(\log n_k)$
and we get that  $\var_{\tree_k}[Y_{\hat\bftheta_k}]$
satisfies the bounds in Theorems~\ref{thm:rate-supercritical}
and~\ref{thm:subcritical}.

\section{Convergence rate of a new estimator for $\alpha$ and $\gamma$} 
\label{sec:alpha_gamma}

In this section, we provide a novel estimator for
$(\alpha, \gamma)$. Under natural assumptions
on the species tree, we show that this estimator
is $\sqrt{n_k}$-consistent. Moreover this estimator
does not require the knowledge of $\mu$.
Interestingly,
in contrast to what we showed for $\mu$, 
the conditions for $\sqrt{n_k}$-consistency in this
case 
do not involve the growth, or even
the height, of the species tree. 
This is in line with the results in \cite{HoAne13}, 
who found that $\mu$ requires an unbounded tree height
to be microergodic, 
whereas $\alpha$ and $\gamma$ do not.

Note, however, that the MLE
of $\alpha$ and $\gamma$ 
are not simple
linear estimators, which makes them harder to study
here.
In particular, unlike in the case of $\mu$,
we do not provide lower 
bounds on their rate of convergence.

\subsection{Contrast-based estimator}
\label{sec:contrasts}

We first describe the estimator. 
The proof of its convergence rate is 
in Section~\ref{sec:alpha-rate}.

\paragraph{Contrasts}
Our estimator 
relies on an appropriately chosen set of contrasts, that is,
differences between pairs of leaf states (see e.g.~\cite{felsenstein2004inferring}). 
More specifically, we choose contrasts associated 
with internal nodes, as follows. 
Let $\tree$ be an ultrametric species tree with leaves
$\leaves$ and internal vertices $\internal$.
For two leaves $\ell$ and $\ell'$, we let $\ell\land \ell'$
be their most recent common ancestor. 
Assume
that all internal vertices of $\tree$ have out-degree
at least $2$.
Let $i \in \internal$ be an internal vertex of
$\tree$, and 
let $\ell^i_1\neq\ell^i_2$ be two leaves such that
$\ell^i_1 \land \ell^i_2 = i$. 
Let $P_i$ be the path connecting $\ell^i_1$ and $\ell^i_2$. 
We define the corresponding contrast 
$\contrast_i = Y_{\ell^i_1} - Y_{\ell^i_2}$. 
Let $T(i)$ be the height of $i$ from the leaves.
We say that $T(i)$ is the height of $\contrast_i$. 
\begin{lemma}[Contrasts: Distribution~\cite{HoAne13}] 
Let $i_1,\ldots,i_m$ be a collection of internal nodes
of $\tree$. Let $\contrast_{i_1}, \ldots, \contrast_{i_m}$
be an arbitrary set of associated contrasts. 
Assume that the corresponding paths 
$P_{i_1},\ldots,P_{i_m}$ are pairwise non-intersecting, that is,
none of the pairs of paths share a vertex. 
Then
$\contrast_{i_1}, \ldots, \contrast_{i_m}$ 
are mutually independent, multivariate normal with
$\contrast_i \sim {\cal N}(0,2\gamma(1-e^{-2\alpha T(i)}))$.\label{lem_addmic06}
\end{lemma}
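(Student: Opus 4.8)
The plan is to establish two claims: mutual independence of the contrasts $\contrast_{i_1},\ldots,\contrast_{i_m}$, and the marginal law $\contrast_i \sim \mathcal{N}(0,2\gamma(1-e^{-2\alpha T(i)}))$ for each one. Since the OU process on the tree is a Gaussian process (the leaf vector $\mathbf{Y}$ is multivariate normal), each contrast $\contrast_i = Y_{\ell^i_1} - Y_{\ell^i_2}$ is a linear functional of a Gaussian vector and hence univariate normal, and the joint vector $(\contrast_{i_1},\ldots,\contrast_{i_m})$ is jointly multivariate normal. The key reduction is therefore that, for jointly Gaussian random variables, mutual independence is equivalent to pairwise zero covariance. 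So the entire lemma comes down to a covariance computation.

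First I would compute the marginal. Each contrast is centered because $\E{Y_\ell} = \mu$ for every leaf, so $\E{\contrast_i} = 0$. For the variance, I would use the covariance structure $\cov[Y_v, Y_w] = \gamma e^{-\alpha d_{vw}}$, giving
\[
\var[\contrast_i] = \var[Y_{\ell^i_1}] + \var[Y_{\ell^i_2}] - 2\cov[Y_{\ell^i_1}, Y_{\ell^i_2}] = 2\gamma - 2\gamma e^{-\alpha d_{\ell^i_1 \ell^i_2}}.
\]
Since $i = \ell^i_1 \land \ell^i_2$ is the most recent common ancestor and the tree is ultrametric, $d_{\ell^i_1 \ell^i_2} = 2T(i)$, so this equals $2\gamma(1 - e^{-2\alpha T(i)})$, as claimed. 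A cleaner route, which also sets up the independence argument, is to observe that $\contrast_i$ depends only on the increments of the OU process along the path $P_i$ from $\ell^i_1$ to $\ell^i_2$: conditioning on the value at $i$, the two sub-paths $i \to \ell^i_1$ and $i \to \ell^i_2$ evolve independently, and the contribution at $i$ cancels in the difference. The remaining randomness is the sum of the OU fluctuations accumulated along the two branches descending from $i$, whose combined variance is exactly $2\gamma(1-e^{-2\alpha T(i)})$.

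For independence, I would exploit the Markov (branching) structure of the process along the tree together with the non-intersection hypothesis. The crucial observation is that, by the path-based representation above, each $\contrast_{i_j}$ is a deterministic function of the Brownian increments $dB_t$ driving the OU process along the edges making up the path $P_{i_j}$. Because the paths $P_{i_1},\ldots,P_{i_m}$ are pairwise non-intersecting (sharing no vertex, hence no edge), these increment families are drawn from disjoint collections of edges. Since the driving Brownian motions on distinct edges are independent by construction of the process, the contrasts are functions of independent sources of randomness and are therefore mutually independent. Equivalently, at the covariance level, $\cov[\contrast_{i_j}, \contrast_{i_{j'}}]$ is a signed combination of terms $\cov[Y_a, Y_b] = \gamma e^{-\alpha d_{ab}}$ over the four endpoint pairs, and the non-intersection of $P_{i_j}$ and $P_{i_{j'}}$ forces the relevant common-ancestor depths to coincide in pairs so that the four terms cancel; joint normality then upgrades this to full independence.

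The main obstacle is making the non-intersection-to-cancellation step fully rigorous. The clean way is the driving-Brownian-motion argument, which reduces everything to the independence of the underlying increments on disjoint edge sets, sidestepping a brute-force four-term covariance cancellation; but one must verify carefully that $\contrast_{i_j}$ truly factors through only the increments along $P_{i_j}$ (the shared-ancestor contributions above $i_j$ cancel in the difference $Y_{\ell^i_1}-Y_{\ell^i_2}$), and that ``no shared vertex'' indeed implies ``no shared edge.'' Since this is precisely the content invoked from \cite{HoAne13}, I would cite that computation for the covariance cancellation and present the increment-disjointness observation as the conceptual reason independence holds.
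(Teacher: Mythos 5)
Your proposal is correct, and its primary argument takes a genuinely different route from the paper's. The paper's entire proof is the covariance computation you relegate to a backup: it expands $\gamma^{-1}\cov[\contrast_j,\contrast_{j'}]$ into the four terms $e^{-\alpha d_{\ell^j_\iota \ell^{j'}_{\iota'}}}$ and asserts that they cancel because the four most recent common ancestors (MRCAs) $\ell^j_\iota \land \ell^{j'}_{\iota'}$ all coincide, with joint Gaussianity (left implicit) upgrading zero covariance to mutual independence; the marginal variance is not even written out. Your main route --- representing each contrast as a stochastic integral of the driving Brownian increments along the edges of its own path, which works because ultrametricity gives both leaves the same coefficient $e^{-\alpha T(i)}$ on $Y_i$, so the value at $i$ and all randomness above it cancel in the difference --- is more structural, and it buys three concrete things. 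First, mutual independence follows directly from disjointness of the underlying edge sets (your worry about ``no shared vertex implies no shared edge'' is immediate: two paths sharing an edge share its endpoints), with no need for the Gaussian pairwise-to-mutual step. Second, the marginal variance $2\gamma(1-e^{-2\alpha T(i)})$ falls out of the same computation. Third, your phrasing of the covariance cancellation (``in pairs'') is in fact the correct general statement, whereas the paper's stronger claim --- that all four MRCAs are one and the same vertex --- can fail under the lemma's hypotheses: if $i_{j'}$ lies in a subtree hanging off an interior vertex $v$ of $P_{i_j}$, then $i_j$ is an ancestor of $i_{j'}$ and the two paths are still vertex-disjoint, yet $\ell^j_1 \land \ell^{j'}_1 = v$ while $\ell^j_2 \land \ell^{j'}_1 = i_j$. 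The covariance still vanishes, but only via the pairwise cancellation: by disjointness neither leaf of the higher contrast is a descendant of $i_{j'}$, so for each such leaf $x$ one has $x \land \ell^{j'}_1 = x \land i_{j'} = x \land \ell^{j'}_2$, and the corresponding two terms cancel. Your increment-disjointness argument covers this nested configuration with no modification, so it is, if anything, tighter than the paper's one-line justification.
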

\begin{proof}
Indeed, expanding the covariance, we get for $j\neq j'$
\begin{eqnarray*}
\gamma^{-1}\cov[\contrast_j, \contrast_{j'}]
&=& e^{-\alpha d_{\ell^{j}_1\ell^{j'}_1}}
- e^{-\alpha d_{\ell^{j}_1\ell^{j'}_2}}
- e^{-\alpha d_{\ell^{j}_2\ell^{j'}_1}}
+ e^{-\alpha d_{\ell^{j}_2\ell^{j'}_2}} = 0,
\end{eqnarray*}
since, by assumption, 
$\ell^{j}_\iota \land \ell^{j'}_{\iota'}$ is the same vertex
for all $\iota, \iota' = 1,2$.
\end{proof}
\noindent The following lemma will be useful 
in identifying an appropriate collection of contrasts.
\begin{lemma}[Contrasts: A large collection~\cite{HoAne13}]
Let $\mathbb{T}$ be an ultrametric tree 
and let $\internal_{(a,b)}$ 
be the set of internal nodes of $\mathbb{T}$ whose 
height from the leaves lies in $(a,b)$. 
For every $a < b$, we can select a set of 
independent contrasts $\contrasts$,
associated with internal nodes 
in $\internal(a,b)$, such that
\[
|\contrasts| \geq n(a,b)/2,
\]
where $n(a,b) = |\internal(a,b)| $.
In particular, the heights of the contrasts in $\contrasts$
lie in $(a,b)$ and their corresponding
paths are pairwise non-intersecting.
\label{lem:choose}
\end{lemma}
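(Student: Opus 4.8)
The plan is to treat this as a combinatorial path-packing problem. By Lemma~\ref{lem_addmic06}, a family of contrasts is mutually independent as soon as the associated leaf-to-leaf paths are pairwise vertex-disjoint, so it suffices to exhibit a vertex-disjoint family of contrast paths, each with apex (most recent common ancestor) in $\internal(a,b)$, of size at least $n(a,b)/2$; the heights of these contrasts then automatically lie in $(a,b)$. The key structural observation is that two contrasts with \emph{incomparable} apexes $i,i'$ are automatically disjoint, since their paths are confined to the vertex-disjoint subtrees rooted at $i$ and at $i'$. Hence all the difficulty comes from ancestor--descendant chains inside $\internal(a,b)$: along a chain of nested apexes $i_1\prec i_2\prec\cdots$, any path routed above a chosen apex must traverse the unique vertex giving access to the subtree below it, and this is exactly what forces the factor $1/2$. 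A caterpillar in which each $i_t$ carries one pendant leaf shows the bound is tight, since consecutive apexes are then forced to share a vertex and only every other one can be selected.

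To organize the selection I would induct over the rooted tree from the leaves up, processing each subtree. For a rooted subtree $T$ with root $r$, let $N(T)$ be the number of its vertices lying in $\internal(a,b)$, let $h(T)$ be the maximum size of a vertex-disjoint family of contrasts inside $T$ with apexes in $\internal(a,b)$, and let $g(T)$ be the same maximum subject to the extra constraint that $r$ retains a clean path to some leaf of $T$ (a leaf reachable from $r$ avoiding all used vertices; in particular $r$ is itself unused). I would prove simultaneously, by induction, that $h(T)\ge\lceil N(T)/2\rceil$ and $g(T)\ge\lfloor N(T)/2\rfloor$, from which $h(\tree)\ge N(\tree)/2 = n(a,b)/2$ follows.

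The recursion reads $N(T)=\epsilon+\sum_p N(T_p)$, where $T_1,\dots,T_d$ are the subtrees hanging off the children $c_1,\dots,c_d$ of $r$ and $\epsilon=\ind\{r\in\internal(a,b)\}$; the standing assumption that internal vertices have out-degree at least $2$ guarantees $d\ge 2$ whenever $r$ is internal. For $h$ there are two moves: either combine unconstrained families across children, yielding $\sum_p h(T_p)$, or (only when $\epsilon=1$) place a contrast with apex $r$ by routing through two children that keep clean leaf access, yielding $1+g(T_p)+g(T_q)+\sum_{p'\ne p,q}h(T_{p'})$. For $g$ one sacrifices a single child to the access constraint, yielding $\max_p\big[\,g(T_p)+\sum_{q\ne p}h(T_q)\,\big]$ (and $r$ is not used as an apex). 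The induction then closes by a parity-aware choice of which children to use: when some child subtree has odd $N(T_p)$ the extra half-unit in $\lceil N(T_p)/2\rceil$ already covers the deficit, while when all children are even the bounds $g(T_p)\ge N(T_p)/2$ incur no loss, so the $+1$ gained from placing a contrast at $r$ pays for the two clean-access reservations.

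I expect the main obstacle to be precisely this bookkeeping: verifying that the two reservations needed to install the apex-$r$ contrast can always be arranged so that the loss incurred in passing from $h$ to $g$ on a child is never suffered simultaneously with the loss of the $r$-contrast. Making the parity/child-selection case analysis airtight, together with the degenerate cases where some $N(T_p)=0$ or $r\notin\internal(a,b)$, is where the real work lies. Once $h(\tree)\ge n(a,b)/2$ is established, the selected paths are pairwise vertex-disjoint by construction and have apexes, hence contrast heights, in $(a,b)$, so Lemma~\ref{lem_addmic06} delivers the desired independent collection $\contrasts$.
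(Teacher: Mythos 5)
Your proposal is correct in outline and, after checking the parity bookkeeping you flagged, it does close --- but it is a genuinely different and considerably heavier route than the paper's. The paper proves the lemma with a short greedy argument: repeatedly take the \emph{lowest} vertex $i$ remaining in $\internal_{(a,b)}$, pick two leaves below it with $\ell^i_1 \land \ell^i_2 = i$ (possible since internal vertices have out-degree at least $2$), record that contrast, and delete $i$, all its descendants, and the edge immediately above $i$, fusing any resulting degree-$2$ vertex. Because $i$ is lowest in the band, the deleted subtree contains no other band vertex, so each iteration gains one contrast while destroying at most two vertices of $\internal_{(a,b)}$ ($i$ itself, plus possibly its parent if the fusion removes it), and disjointness is automatic because each contrast path lies entirely inside the subtree that is deleted before the next step. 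This lowest-first ordering is precisely what lets the paper sidestep the ancestor--descendant chains that your bottom-up induction must handle explicitly via the two potentials $h(T)$ and $g(T)$. For the record, your recursion is sound: when $r \in \internal_{(a,b)}$ and some child has odd $N(T_p)$, the unconstrained combination $\sum_p h(T_p)$ already gives $\lceil N(T)/2\rceil$; when all children are even, installing the apex-$r$ contrast through two children pays for both clean-access reservations; and for $g$ one sacrifices an even child when one exists, the case of all-odd children being rescued by $d \geq 2$. So your proof can be completed and yields the same bound (both arguments in fact give $\lceil n(a,b)/2\rceil$) together with a constructive tree decomposition, but it buys nothing the greedy does not already deliver, at the cost of a multi-case induction; if brevity matters, the lowest-vertex-first deletion is the idea to adopt.
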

\begin{proof}
Start with the lowest vertex $i$ in $\internal_{(a,b)}$
and choose a pair of vertices $\ell^i_1$ and
$\ell^i_2$ such that $\ell^i_1 \land \ell^i_2 = i$.
Remove $i$ and its descendants as well as the edge 
immediately above $i$ (and fuse consecutive
edges separated by degree-$2$ vertices). 
As a result, the number of
internal
vertices in $(a,b)$ decreases by at most $2$. Repeat
until no vertex is left in $\internal_{(a,b)}$.
\end{proof}

\paragraph{The estimator}
For a sequence of trees $\treeseq = (\tree_k)_k$,
let $\leaves_k$ be the leaf set of $\tree_k$;
$\internal_k$, the set of 
its internal vertices; $n_k = |\leaves_k|$ 
and $n_k(a,b) = |\internal_k(a,b)|$; 
and $T_k(i)$, the height of $i$, for each $i\in\internal_k$.
The idea behind our estimator is to set up
a system of equations that characterize $\alpha$
and $\gamma$ uniquely. 
Our construction relies on the following condition.
We illustrate this condition on two special cases
below.

We set up our equations as follows.
Let $m_k = \lfloor \beta n_k/2 \rfloor$.
Under Assumption~\ref{assump:alpha},
by Lemma~\ref{lem:choose}, for each $k$
we can choose {\em two} collections of 
independent contrasts $(\contrast^k_{i_r})_{r=1}^{m_k}$ 
and $(\contrast^k_{j_r})_{r=1}^{m_k}$ 
with corresponding heights 
$T_k(i_r) \in (c_1,c_1')$ 
and $T_k(j_r) \in (c_2,c_2')$ 
for every $r = 1, 2, \ldots, m_k$.
(Note that the two collections are {\em not}
independent.)
For $r = 1,\ldots,m$, let
\begin{eqnarray*}
\hat a_{k} 
= \frac{1}{m_k} \sum_{r=1}^{m_k}{(\contrast^{(k)}_{i_r})^2},
\qquad
\hat b_{k} 
= \frac{1}{m_k} \sum_{r=1}^{m_k}{(\contrast^{(k)}_{j_r})^2},
\end{eqnarray*}
and note that
\begin{eqnarray*}
a_{k} &\equiv& \E{\hat a_{k}} = 2 \gamma \left (1 - \frac{1}{m_k} \sum_{r=1}^{m_k}{e^{- 2 \alpha T_k(i_r)}} \right )
\equiv 2\gamma h^1_{k}(\alpha),\\
b_{k} &\equiv& \E{\hat b_{k}} = 2 \gamma \left (1 - \frac{1}{m_k} \sum_{r=1}^{m_k}{e^{- 2 \alpha T_k(j_r)}} \right )
\equiv 2\gamma h^2_{k}(\alpha).
\end{eqnarray*}
Notice that, under Assumption~\ref{assump:alpha}, 
$a_{k} \in [2\gamma(1-e^{-2 \alpha c_2}),2\gamma(1-e^{-2 \alpha c_1})] \equiv [\underline{a}_\alpha,\bar{a}_\alpha]$ and $b_{k} \in [2\gamma(1-e^{-2 \alpha c_4}),2\gamma(1-e^{-2 \alpha c_3})] \equiv [\underline{b}_\alpha,\bar{b}_\alpha]$.
As shown below,
$$
H_k(\alpha) = \frac{a_k}{b_k} = \frac{h^1_{k}(\alpha)}
{h^2_{k}(\alpha)}
$$
is invertible in $\alpha$ on $(0,+\infty)$. Hence
a natural estimator of $(\alpha,\gamma)$ is obtained
by setting
\begin{equation*} 
\hat\alpha_k = H_k^{-1}\left(\frac{\hat a_k}{\hat b_k}\right) 
\quad\mbox{and}\quad
\hat\gamma_k = \frac{\hat a_k}{2 h_k^1(\hat\alpha_k)}.
\end{equation*}
We will show in the proof of invertibility below
that $H_k$ is actually strictly increasing, 
and therefore relatively straightforward to invert numerically.
It remains to prove invertibility.
\begin{lemma}[Invertibility of the system]
\label{lemma:invertibility}
Under Assumption~\ref{assump:alpha},
$H_{k}(\alpha)$ is strictly positive, 
differentiable, and invertible on $(0,+\infty)$.
\end{lemma}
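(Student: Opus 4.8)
The plan is to verify the three asserted properties in turn, treating strict monotonicity (which yields invertibility) as the substantive part. Positivity is immediate: for $\alpha>0$ each height satisfies $T_k(i_r),T_k(j_r)>0$, so $e^{-2\alpha T_k(\cdot)}<1$, every summand $1-e^{-2\alpha T_k(\cdot)}$ is positive, and hence $h^1_k(\alpha),h^2_k(\alpha)>0$, giving $H_k(\alpha)=h^1_k(\alpha)/h^2_k(\alpha)>0$. Differentiability is equally routine: $h^1_k$ and $h^2_k$ are finite linear combinations of exponentials, hence $C^\infty$, and the denominator $h^2_k$ does not vanish on $(0,+\infty)$, so $H_k$ is smooth there.

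For monotonicity I would pass to the logarithmic derivative and show that
\[
\frac{d}{d\alpha}\log H_k(\alpha)
=\frac{(h^1_k)'(\alpha)}{h^1_k(\alpha)}-\frac{(h^2_k)'(\alpha)}{h^2_k(\alpha)}>0
\qquad\text{for all }\alpha>0,
\]
which forces $H_k$ to be strictly increasing, hence injective and invertible (and numerically invertible by bisection). The key reformulation is that each logarithmic derivative is a \emph{positively weighted average} of single-height contributions. Introducing $\psi_w(\alpha)=\dfrac{2w\,e^{-2\alpha w}}{1-e^{-2\alpha w}}$, the identity $(1-e^{-2\alpha w})\,\psi_w(\alpha)=2w\,e^{-2\alpha w}$ gives
\[
\frac{(h^1_k)'(\alpha)}{h^1_k(\alpha)}
=\frac{\sum_{r}\big(1-e^{-2\alpha T_k(i_r)}\big)\,\psi_{T_k(i_r)}(\alpha)}
{\sum_{r}\big(1-e^{-2\alpha T_k(i_r)}\big)},
\]
and analogously for $h^2_k$ with the heights $T_k(j_r)$; all weights $1-e^{-2\alpha T_k(\cdot)}$ are strictly positive.

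The crux is then the elementary lemma that, for each fixed $\alpha>0$, the map $w\mapsto\psi_w(\alpha)$ is strictly decreasing on $(0,+\infty)$. Writing $s=2\alpha w$ this is exactly the statement that $s\mapsto s/(e^s-1)$ is strictly decreasing for $s>0$, which I would check in one line: the numerator of its derivative, $e^s(1-s)-1$, vanishes at $s=0$ and has derivative $-s\,e^s<0$, so it is negative for $s>0$. Granting this, I would invoke the height separation guaranteed by Assumption~\ref{assump:alpha}, namely $T_k(i_r)<c_1'<c_2<T_k(j_r)$ for all $r$. Every term then satisfies $\psi_{T_k(i_r)}(\alpha)>\psi_{c_1'}(\alpha)$, so the first weighted average exceeds $\psi_{c_1'}(\alpha)$; every term satisfies $\psi_{T_k(j_r)}(\alpha)<\psi_{c_2}(\alpha)$, so the second average lies below $\psi_{c_2}(\alpha)$; and $\psi_{c_1'}(\alpha)>\psi_{c_2}(\alpha)$ since $c_1'<c_2$. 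Chaining these yields $\dfrac{(h^1_k)'}{h^1_k}>\psi_{c_1'}(\alpha)>\psi_{c_2}(\alpha)>\dfrac{(h^2_k)'}{h^2_k}$, as needed.

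The main obstacle is the rewriting in the second paragraph: recognizing that the logarithmic derivative of a \emph{sum} of exponentials is a positive-weight average of the per-height quantities $\psi_w$ is precisely what reduces the problem to the one-line calculus fact about $s/(e^s-1)$ combined with the band separation $c_1'<c_2$. Once this reduction is in place, no finer information about the individual heights $T_k(i_r),T_k(j_r)$ is required, which is why strict monotonicity (and hence invertibility) holds uniformly in $k$.
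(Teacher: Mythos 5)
Your proof is correct and takes essentially the same route as the paper: both pass to $\partial_\alpha \log H_k$, reduce everything to the strict decrease of $s \mapsto s/(e^s-1)$ (the paper's $xe^{-x}/(1-e^{-x})$), and invoke the band separation $c_1' < c_2$ from Assumption~\ref{assump:alpha}. The only cosmetic difference is that you bound the two logarithmic derivatives as weighted averages sandwiched between $\psi_{c_1'}$ and $\psi_{c_2}$, whereas the paper clears denominators and verifies positivity of each $(r,r')$ term in the resulting double sum.
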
 
\begin{proof}
We have that
\begin{eqnarray}
\frac{\partial \log H_{k}(\alpha)}{\partial \alpha} 
&=& \frac{\sum_{r=1}^{m_k} {2T_k(i_r)}e^{-2 \alpha T_k(i_r)}}{\sum_{r=1}^{m_k}{(1 - e^{-2 \alpha T_k(i_r)}})} - \frac{\sum_{r=1}^{m_k}{2T_k(j_r) e^{-2 \alpha T_k(j_r)}}}{\sum_{r=1}^{m_k}{(1 - e^{-2 \alpha T_k(j_r)})}}\nonumber\\
&=& \frac{\sum\sum_{r, r'=1}^{m_k} {2T_k(i_r)}e^{-2 \alpha T_k(i_r)}
{(1 - e^{-2 \alpha T_k(j_{r'})})}}{\sum_{r=1}^{m_k}{(1 - e^{-2 \alpha T_k(i_r)}}) \sum_{r=1}^{m_k}{(1 - e^{-2 \alpha T_k(j_r)})}}\nonumber\\
&& \qquad - \frac{\sum\sum_{r,r'=1}^{m_k}{2 T_k(j_{r'}) e^{-2 \alpha T_k(j_{r'})}}(1 - e^{-2 \alpha T_k(i_{r})})}{\sum_{r=1}^{m_k}{(1 - e^{-2 \alpha T_k(i_r)}})\sum_{r=1}^{m_k}{(1 - e^{-2 \alpha T_k(j_r)})}}
\label{eq:invertibility-1}
\end{eqnarray}
Note that the function $\frac{xe^{-x}}{1 - e^{-x}}$ is strictly decreasing on $(0, \infty)$ because its derivative is $\frac{e^{-x}(1 - x - e^{-x})}{(1 - e^{-x})^2} < 0$ on $(0,+\infty)$. Therefore 
\[
\frac{2T_k(i_r)e^{-2 \alpha T_k(i_r)}}{1 - e^{-2 \alpha T_k(i_r)}} \geq 
\frac{2 c_1' e^{-2 \alpha c_1'}}{1 - e^{-2 \alpha c_1'}} >
\frac{2 c_2 e^{-2 \alpha c_2 }}{1 - e^{-2 \alpha c_2}} \geq
\frac{2T_k(j_{r'}) e^{-2 \alpha T_k(j_{r'})}}{1 - e^{-2 \alpha T_k(j_{r'})}},
\]
that is,
\begin{eqnarray}
&&2T_k(i_r)e^{-2 \alpha T_k(i_r)} (1 - e^{-2 \alpha T_k(j_{r'})})\nonumber\\
&& \qquad - 2T_k(j_{r'}) e^{-2 \alpha T_k(j_{r'})} (1 - e^{-2 \alpha T_k(i_r)})
> 0,\label{eq:logh-difference}
\end{eqnarray}
for every $r,r'$, so that each $(r,r')$-term in~\eqref{eq:invertibility-1}
is strictly positive.
Hence, we can deduce that 
$\partial \log H_k(\alpha)/\partial \alpha > 0$, that is, 
$\log H_k$ (and hence $H_k$ itself) is strictly increasing on $(0,+\infty)$
and continuous, and therefore invertible.
\end{proof}

Note that we cannot use the law of large numbers 
to derive consistency (despite
the independence of the contrasts)
because $a_k/b_k$ is a bounded, but
not necessarily convergent, sequence
and $H_k^{-1}$ is continuous, but
depends on $k$. Instead we argue directly
about $\sqrt{n_k}$-consistency below.

\subsection{Proof of Theorem~\ref{thm:main-alpha-rate}}
\label{sec:alpha-rate}

%
\begin{proof}[Proof of Theorem~\ref{thm:main-alpha-rate}]
Note that $\E{\hat a_k} = a_k$ and
$$
\var[\hat a_k]
= \frac{8\gamma^2}{m_k^2}
\sum_{r=1}^{m_k} 
(1-e^{-2\alpha T_k(i_r)})^2
\leq \frac{8\gamma^2}{m_k}
(1-e^{-2\alpha c_1})^2 
= O(m_k^{-1})
= O(n_k^{-1}),
$$
where we used that $([2\gamma(1-e^{-2\alpha T_k(i_r)})]^{-1/2} \contrast^k_{i_r})^2 $ is $\chi^2_1$-distributed and, therefore, has
variance $2$. Hence 
$|\hat a_k - a_k| = O_p(n_k^{-1/2})$ by Chebyshev's inequality.
Similarly, $|\hat b_k - b_k| = O_p(n_k^{-1/2})$.
Our claim that $|\hat\alpha_k - \alpha_k| = O_p(n_k^{-1/2})$
then follows from the following 
straightforward lemma.
\begin{lemma}
If $0 < z_* \leq z \leq z^* < \infty$, $|z' - z| \leq \epsilon$ and $\epsilon < z_*/2$, then there is a constant $\Delta(z_*,z^*)$ depending
on $c_1,c_1',c_2,c_2'$ such that for all $k$
\[
\sup_{t \in [0,1]}{|(H_k^{-1})' (t z' + (1-t)z )|} \leq \Delta(z_*,z^*).
\]
\label{lem:monotone} 
\end{lemma}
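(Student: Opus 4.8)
The plan is to convert the desired upper bound on $(H_k^{-1})'$ into a uniform-in-$k$ \emph{lower} bound on $H_k'$, exploiting the inverse-function relation. By Lemma~\ref{lemma:invertibility}, $H_k$ is strictly increasing and $C^1$ on $(0,+\infty)$, so $H_k^{-1}$ is differentiable and, writing $\alpha_{k,y} := H_k^{-1}(y)$,
\[
(H_k^{-1})'(y) = \frac{1}{H_k'(\alpha_{k,y})}, \qquad H_k'(\alpha) = H_k(\alpha)\,\frac{\partial \log H_k}{\partial\alpha}(\alpha).
\]
Setting $y = t z' + (1-t)z$, the hypotheses $z_*\le z\le z^*$ and $|z'-z|\le\epsilon<z_*/2$ confine $y$ to a compact interval $[z_*/2,\bar y]$ with $\bar y=\max(z,z')\le z^*+\epsilon$; in the regime of interest ($z^*<1$ and $\epsilon$ small, as holds in the application) this interval is a compact subinterval of the range $(L_k,1)$ of $H_k$, bounded away from both $0$ and $1$, so $\alpha_{k,y}$ is well defined and $H_k(\alpha_{k,y})=y\ge z_*/2$. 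It then remains to bound $\frac{\partial \log H_k}{\partial\alpha}$ from below at $\alpha_{k,y}$, uniformly in $k$.

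The first ingredient is a uniform lower bound on $\frac{\partial \log H_k}{\partial\alpha}$, reusing the computation in the proof of Lemma~\ref{lemma:invertibility}. With $g(x)=\frac{xe^{-x}}{1-e^{-x}}=\frac{x}{e^{x}-1}$, strictly decreasing and positive on $(0,+\infty)$, each of the two ratios making up $\frac{\partial \log H_k}{\partial\alpha}$ is $\alpha^{-1}$ times a convex combination of the values $g(2\alpha T_k(i_r))$ (respectively $g(2\alpha T_k(j_r))$). Since $T_k(i_r)\in(c_1,c_1')$ and $T_k(j_r)\in(c_2,c_2')$ with $c_1'<c_2$, monotonicity of $g$ yields
\[
\frac{\partial \log H_k}{\partial\alpha}(\alpha)\ \ge\ \frac{1}{\alpha}\bigl(g(2\alpha c_1') - g(2\alpha c_2)\bigr)\ =:\ \phi(\alpha)\ >\ 0,
\]
a bound independent of $k$. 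Using the expansion $g(x)=1-\tfrac{x}{2}+O(x^2)$ one sees $\phi$ extends continuously to $\alpha=0$ with $\phi(0)=c_2-c_1'>0$; hence $\phi$ is continuous and strictly positive on any interval $[0,\alpha^*]$ and there attains a positive minimum $\phi_*$.

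The crux, and the main obstacle, is a uniform-in-$k$ upper bound on the preimage $\alpha_{k,y}$: because $\phi(\alpha)\to0$ as $\alpha\to+\infty$ (the function $H_k$ flattens as $H_k\to1$), a usable lower bound on $\phi(\alpha_{k,y})$ is available only if $\alpha_{k,y}$ stays bounded. I would control this by comparing $H_k$ from below with the $k$-independent function
\[
\underline H(\alpha) = \frac{1-e^{-2\alpha c_1}}{1-e^{-2\alpha c_2'}}.
\]
Using $T_k(i_r)>c_1$ and $T_k(j_r)<c_2'$ one checks $H_k(\alpha)\ge\underline H(\alpha)$ for all $k$ and $\alpha$; moreover $\underline H$ is the special instance of the ratio in Lemma~\ref{lemma:invertibility} with one node per band (at heights $c_1<c_2'$), hence strictly increasing, with $\underline H(\alpha)\to1$. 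Thus for the upper endpoint $\bar y<1$ we get $H_k(\alpha)\ge\underline H(\alpha)>\bar y$ whenever $\alpha>\alpha^*:=\underline H^{-1}(\bar y)$, forcing $\alpha_{k,y}\le\alpha^*$ for every $k$. This is exactly what keeps the argument away from $1$ and makes everything uniform in $k$.

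Combining the pieces, for all $k$ and all $y=tz'+(1-t)z$,
\[
\bigl|(H_k^{-1})'(y)\bigr| = \frac{1}{H_k'(\alpha_{k,y})} \le \frac{1}{H_k(\alpha_{k,y})\,\phi(\alpha_{k,y})} \le \frac{1}{(z_*/2)\,\phi_*} = \frac{2}{z_*\,\phi_*} =: \Delta(z_*,z^*),
\]
where $\alpha^*$ depends on $z^*$ (through $\bar y$) and on $c_1,c_2'$, while $\phi_*=\min_{[0,\alpha^*]}\phi$ depends on $c_1',c_2$; hence $\Delta$ depends only on $z_*,z^*$ and $c_1,c_1',c_2,c_2'$, as claimed. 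The only routine verifications left are the Taylor expansion of $g$, the comparison inequality $H_k\ge\underline H$, and the monotonicity of $\underline H$.
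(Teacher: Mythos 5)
Your proof is correct, and at the skeleton level it is the same as the paper's: invert the derivative via $(H_k^{-1})'(y)=1/\bigl[H_k\,\partial_\alpha \log H_k\bigr](H_k^{-1}(y))$, confine the preimage $H_k^{-1}(y)$ to a compact interval independent of $k$ by comparing $H_k$ with envelope functions built from the band constants, and lower-bound $\partial_\alpha\log H_k$ uniformly in $k$ on that interval. The implementations differ in ways worth recording. The paper confines the preimage on \emph{both} sides, to $[\alpha_*,\alpha^*]$, and uses the bound $\partial_\alpha\log H_k\ge \zeta_\alpha/[(1-e^{-2\alpha c_1'})(1-e^{-2\alpha c_2'})]$, where $\zeta_\alpha$ is the smallest cross-difference in~\eqref{eq:logh-difference}; the positive lower endpoint $\alpha_*$ is what makes the final supremum manifestly finite, since $\zeta_\alpha\to 0$ as $\alpha\to 0$. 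You avoid the lower confinement entirely: your convex-combination observation yields the closed-form bound $\phi(\alpha)=\alpha^{-1}\bigl(g(2\alpha c_1')-g(2\alpha c_2)\bigr)$, which extends continuously to $\phi(0)=c_2-c_1'>0$, and you replace the paper's envelope bound on $1/H_k$ by the direct identity $H_k(\alpha_{k,y})=y\ge z_*/2$. Both routes rest on the same monotonicity of $g(x)=xe^{-x}/(1-e^{-x})$, so this is the same method, somewhat more cleanly executed. One further point in your favor: you state explicitly that the argument needs the $y$-values to stay bounded away from $1$ (your standing assumption $z^*<1$ with $\epsilon$ small, which holds in the application). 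The paper needs this too but leaves it implicit: its $\alpha^*$, defined by $\tfrac{3}{2}z^*=\underline{a}_{\alpha^*}/\bar{b}_{\alpha^*}$, exists only when $\tfrac{3}{2}z^*<1$, and without some restriction of this kind the lemma as literally stated fails, since $(H_k^{-1})'(y)\to+\infty$ as $y\uparrow 1$ within the range of $H_k$ while $z_*,z^*$ stay fixed.
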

\begin{proof}
We use the proof of Lemma~\ref{lemma:invertibility}.
Let $\zeta_\alpha = \zeta_\alpha(c_1,c_1',c_2,c_2') > 0$ be the smallest possible difference in~\eqref{eq:logh-difference}
for a fixed $\alpha$. Let $\alpha_*, \alpha^*$ be defined
as 
$$
\frac{1}{2}z_* = \frac{\bar{a}_{\alpha_*}}{\underline b_{\alpha_*}},
\qquad \frac{3}{2}z^* = \frac{\underline{a}_{\alpha^*}}{\bar b_{\alpha^*}}.
$$
Then
$[\alpha_*,\alpha^*]
\supseteq H_k^{-1}\left(\left[\frac{1}{2}z_*,\frac{3}{2}z^*\right]\right)$ for all $k$.
Note that
\begin{eqnarray*}
\sup_{t \in [0,1]}{|(H_k^{-1})' (t z' + (1-t)z )|}
&\leq& \sup_{z\in\left[\frac{1}{2}z_*,\frac{3}{2}z^*\right]}\left|(H_k^{-1})'(z)\right|\\
&=& \sup_{z\in\left[\frac{1}{2}z_*,\frac{3}{2}z^*\right]}\left(\frac{\partial H_k}{\partial \alpha} (H_k^{-1}(z))\right)^{-1}\\
&=& \sup_{z\in\left[\frac{1}{2}z_*,\frac{3}{2}z^*\right]}\left(\left[H_k\frac{\partial \log H_k}{\partial \alpha}\right] (H_k^{-1}(z))\right)^{-1}\\
&\leq& \sup_{\alpha\in[\alpha_*,\alpha^*]}\frac{\underline{b}_\alpha}{\bar a_\alpha}
\cdot \frac{(1-e^{-2\alpha c_1'})(1 - e^{-2\alpha c_2'})}{\zeta_\alpha}\\
&\equiv& \Delta(z_*,z^*).
\end{eqnarray*}
\end{proof}

We finish the proof of Theorem~\ref{thm:main-alpha-rate}.
We use the following observation: for $0 < x_* \leq x \leq x^* < \infty $ and $0 < y_* \leq y \leq y^* < \infty$ such that 
$|x - x'| \leq \epsilon$ and $|y - y'| \leq \epsilon$ 
with $\epsilon < y_*/2$, we have
\begin{equation*}
\left | \frac{x'}{y'} - \frac{x}{y} \right |
= \left | \frac{y(x' - x) + x(y - y')}{y y'} \right | \leq \frac{y^*|x' - x|}{y_*(y_*/2)} + \frac{x^*|y - y'|}{y_*(y_*/2)}
 < \frac{4(x^* + y^*)}{y_*^2}\epsilon.
\end{equation*}
Fix $\delta > 0$ (small) and pick $M_\delta$
such that $\P{|\hat a_k - a_k|\geq M_\delta n_k^{-1/2}} <\delta/2$
and similarly for $\hat b_k$. Then,
by Assumption~\ref{assumption:unbounded}, for $k$ large enough
{\small\begin{eqnarray*}
&&\P{\left | \frac{\hat a_k}{\hat b_k} - \frac{a_k}{b_k} \right | 
\geq \frac{4(\bar a_\alpha + \bar b_\alpha)}{\underline{b}_\alpha^2} M_\delta n_k^{-1/2}}\\ 
&&\qquad \leq \P{ \left |\frac{\hat a_k}{\hat b_k} - \frac{a_k}{b_k} \right | \geq \frac{4(\bar a_\alpha + \bar b_\alpha)}{\underline{b}_\alpha^2} M_\delta n_k^{-1/2}, |\hat a_k - a_k| \leq M_\delta n_k^{-1/2}, |\hat b_k - b_k| \leq M_\delta n_k^{-1/2}}\\
&& \qquad\qquad  + \P{|\hat a_k - a_k| \geq M_\delta n_k^{-1/2}} 
+ \P{|\hat b_k - b_k| \geq M_\delta n_k^{-1/2}}\\
&&\qquad \leq 0 + \frac{\delta}{2} + \frac{\delta}{2} = \delta,
\end{eqnarray*}}
so that 
$
\left | \frac{\hat a_k}{\hat b_k} - \frac{a_k}{b_k} \right |
= O_p(n_k^{-1/2}).
$

Secondly, using Rolle's theorem, we have 
\[
|\hat \alpha_k - \alpha| 
\leq \sup_{t \in [0,1]}{\left |(H_k^{-1})' \left (t\frac{\hat a_k}{\hat b_k} + (1-t)\frac{a_k}{b_k} \right ) \right|. \left |\frac{\hat a_k}{\hat b_k} - \frac{a_k}{b_k} \right |}.
\]
Let 
$M_\delta$ be 
such that 
$$
\P{\left|\frac{\hat a_k}{\hat b_k} - \frac{a_k}{b_k}\right|\geq M_\delta n_k^{-1/2}} <\delta.
$$
Fix $\epsilon' > 0$ and let 
$$
z_* = \frac{\underline{a}_{\alpha - \epsilon'}}{\bar{b}_{\alpha - \epsilon'}}, \qquad 
z^* = \frac{\bar{a}_{\alpha + \epsilon'}}{\underline{b}_{\alpha + \epsilon'}}.
$$
Then, by Lemma \ref{lem:monotone}, 
letting
$$
\mathcal{H}_k
= \left\{
\sup_{t \in [0,1]}{\left |(H_k^{-1})' \left (t\frac{\hat a_k}{\hat b_k} + (1-t)\frac{a_k}{b_k} \right ) \right|. 
\left |\frac{\hat a_k}{\hat b_m} - \frac{a_m}{b_m} \right |} \geq \Delta^{-1}(z_*,z^*) M_\delta n_k^{-1/2}
\right\},
$$
we have for $k$ large enough
\begin{eqnarray*}
&&\P{|\hat \alpha_k - \alpha| \geq \Delta^{-1}(z_*,z^*) M_\delta n_k^{-1/2}}\\
&&\quad\leq \P{\mathcal{H}_k}\\
&&\quad\leq \P{\mathcal{H}_k, \left |\frac{\hat a_k}{\hat b_k} - \frac{a_k}{b_k} \right | < M_\delta n_k^{-1/2}}
+ \P{\left |\frac{\hat a_k}{\hat b_k} - \frac{a_k}{b_k} \right | \geq M_\delta n_k^{-1/2}}\\
&&\quad\leq 0 + \delta 
= \delta.
\end{eqnarray*}
That implies
$ |\hat\alpha_k - \alpha|= O_p(n_k^{-1/2}). $
The argument for $\hat\gamma_k$ is similar.
\end{proof}

%


\bibliographystyle{spmpsci}      


\end{document}